\documentclass[10pt,twoside,a4paper]{article}

\usepackage[english]{babel}
\usepackage[utf8]{inputenc}

\usepackage{amsfonts,amsthm,amssymb,amsmath,amsthm,latexsym,wasysym,stmaryrd,mathrsfs,dsfont,txfonts}
\usepackage{multirow,rotating,subfigure,graphicx,bigstrut,lscape,multicol,slashbox,fancyhdr,enumitem,accents,mathtools,relsize,exscale,picins}

\usepackage{pdftricks}
\usepackage{color}
\usepackage[pdftex,all]{xy}

\usepackage{hyperref}
\usepackage[centerlast]{caption2}

\graphicspath{{Figures/}}

\setlength{\oddsidemargin}{10pt}
\setlength{\evensidemargin}{10pt}
\addtolength{\topmargin}{-30pt}
\setlength{\textwidth}{430pt}
\setlength{\textheight}{640pt}
\setlength{\headheight}{15pt}
\setlength{\headsep}{30pt}
\setlength{\captionmargin}{1cm}

\theoremstyle{theorem}
\newtheorem {theo}{Theorem}[section]
\newtheorem*{theo*}{Theorem}
\newtheorem {lemme}[theo]{Lemma}
\newtheorem*{lemme*}{Lemma}
\newtheorem {prop}[theo]{Proposition}
\newtheorem*{prop*}{Proposition}
\newtheorem {cor}[theo]{Corollary}
\newtheorem*{cor*}{Corollary}

\newtheorem*{cor_proof*}{Corollary (of the proof)}

\newtheorem*{conjecture*}{Conjecture}
\theoremstyle{definition}
\newtheorem {defi}{Definition}[section]
\newtheorem*{defi*}{Definition}
\newtheorem {nota}[defi]{Notation}
\newtheorem*{nota*}{Notation}
\theoremstyle{remark}
\newtheorem {remarque}{Remark}[section]
\newtheorem*{remarque*}{Remark}
\newtheorem {remarques}[remarque]{Remarks}
\newtheorem*{remarques*}{Remarks}

\newtheorem*{convention*}{Convention}

\newtheorem*{exemple*}{Example}

\newtheorem*{exemples*}{Examples}
\newtheorem {question}[remarque]{Question}
\newtheorem*{question*}{Question}

\newtheorem*{fact*}{Fact}

\def\e{\varepsilon}
\def\ee{{\underline{\e}}}

\def\p{\partial}

\def\BB{{\mathcal B}}
\def\C{{\mathds C}}

\def\E{{\mathds E}}
\def\tE{{\widetilde{\mathds E}}}
\def\F{{\mathds F}}

\def\tG{{\widetilde{G}}}
\def\GG{{\mathcal G}}
\def\tGG{{\widetilde{\mathcal G}}}

\def\HH{\mathcal{H}}

\def\N{{\mathds N}}

\def\R{{\mathds R}}

\def\Z{{\mathds Z}}

\def\2Z{{\fract{\Z}/{2\Z}}}
\def\tX{\widetilde{X}}
\def\tY{\widetilde{Y}}
\def\tZ{\widetilde{Z}}

\def\Cone{{\textnormal{Cone}}}

\def\Fix{{\textnormal{Fix}}}
\def\Kh{{\textnormal{Kh}}}
\def\Hom{{\textnormal{Hom}}}
\def\Id{{\textnormal{Id}}}
\def\Im{{\textnormal{Im}}}

\def\Ker{{\textnormal{Ker}}}
\def\Mat{{\textnormal{Mat}}}

\def\arr{{\textnormal{round}}}

\def\ie{{\it i.e. }}

\def\rk{{\textnormal{rk}}}

\newcommand{\Duk}[1]{{D^\textrm{uk}_{#1}}}
\newcommand{\Dul}[1]{{D^\textrm{ul}_{#1}}}
\newcommand{\Dtl}[1]{{D^\textrm{tl}_{#1}}}

\def\pcup{\operatornamewithlimits{\cup}\limits}
\def\poplus{\operatornamewithlimits{\oplus}\limits}
\def\ppoplus{\mathlarger{\mathlarger{\mathlarger{\mathlarger{\operatornamewithlimits{\oplus}\limits}}}}}

\newcommand{\noi}{\noindent}
\newcommand{\disp}{\displaystyle}

\def\fract#1/#2{\hbox{\leavevmode
  \kern.1em \raise .25ex \hbox{\the\scriptfont0 $#1$}\kern-.1em }\big/
  {\hbox{\kern-.15em \lower .5ex \hbox{\the\scriptfont0 $#2$}} }}

\def\fractt#1/#2{\hbox{\leavevmode
  \kern.1em \raise .25ex \hbox{\the\scriptfont0 $#1$}\kern-.1em
}\lower .2ex\hbox{\Big/}
  {\hbox{\kern-.15em \lower .8ex \hbox{\the\scriptfont0 $#2$}} }}

\def\subfract#1/#2{\hbox{\leavevmode
  \kern.1em \raise .25ex \hbox{\the\scriptfont0 \scriptsize $#1$}\kern-.1em }/
  {\hbox{\kern-.15em \lower .5ex \hbox{\the\scriptfont0 \scriptsize $#2$}} }}

\newcommand{\dessin}[2]{
  \vcenter{\hbox{\includegraphics[height=#1]{#2.pdf}}}}

\newcommand{\func}[3]{
  #1 \colon #2 \longrightarrow #3}

\newcommand{\qb}[1]{
  |#1\rangle}

\begin{document}

\title{An application of Khovanov homology to quantum codes}
\author{Benjamin \textsc{Audoux}}
\date{\today}

\maketitle

\begin{abstract}
We use Khovanov homology to define families of LDPC quantum
error-correcting codes: unknot codes with
asymptotical parameters $\left\llbracket \frac{3^{2 \ell+1}}{\sqrt{8\pi
      \ell}};1;2^\ell\right\rrbracket$; unlink codes with asymptotical
parameters $\left\llbracket \sqrt{\frac{3}{2\pi
      \ell}}6^\ell;2^\ell;2^\ell\right\rrbracket$ and $(2,\ell)$--torus link codes
with asymptotical parameters $\llbracket n;1;d_n\rrbracket$ where $d_n>\frac{\sqrt{n}}{1.62}$. 
\end{abstract}

\section*{Introduction}
\label{sec:introduction}

Classical error--correcting codes have been now studied for decades.
Among them, some codes (\cite{Gallager}), defined by sparse matrices and called LDPC (Low Density Parity Check), noteworthily come with fast decoding algorithms.
Since the end of the last century, error--correcting codes for quantum
computing were also known to exist and explicit constructions were given.
A. R. Calderbank, P. Shor and
A. Steane (\cite{Calderbank},\cite{Steane}) described, for instance, a way
to associate such a code to any pair $(\mathbf{H}_X,\mathbf{H}_Z)$ of
$\F_2$--matrices with $\mathbf{H}_X \mathbf{H}_Z^t=0$.
This procedure allows the construction of several codes with good
parameters ; it means infinite families of quantum codes whose dimension (usually denoted by $k$) and number of
rectifiable errors (which is related to the minimum distance, usually denoted by
$d$) are both linear in the length of codewords (usually denoted by $n$).

However, quickness in quantum decoding is
all the more crucial since corrections should
occur as fast as quantum decoherence arises.
It is then natural to try to transpose the LDPC notion for classical codes into a
quantum counterpart, looking for pairs of matrices $(\mathbf{H}_X,\mathbf{H}_Z)$ with
minimally weighted rows.
Surprisingly, topology appeared to be a fruitful field for such
a project.
This was initiated by Kitaev codes (\cite{Kitaev}) who
defined such a family of, so-called toric, codes by considering a $m\times m$--squared tesselation
of the $S^1\times S^1$--torus.
It led to codes with parameters equal to $\llbracket
n;k;d\rrbracket=\llbracket
2m^2;2;m\rrbracket$.
Toric codes were then generalized to surface (\cite{BD1}) and color (\cite{BD2}) codes. Other LDPC quantum codes were also defined;
see for instance the
constructions given by M. Freedman, D. Meyer and F. Luo in
\cite{Freedman} with asymptotical parameters $\llbracket
n;a\!\sqrt{n};b\!\sqrt{n}\ln(n)\rrbracket$ or by
J.-P. Tillich and G. Zemor in \cite{TZ} with asymptotical parameters $\llbracket
n;cn;d\!\sqrt{n}\rrbracket$, where $a$, $b$,
$c$ and $d$ are some constants.
It is striking that none of these, and even none of any known LDPC quantum error-correcting codes
families, has a minimum distance $d$ that grows faster than
$n^{\alpha}$ for any $\alpha>\subfract1/2$.
It is still an open question
to know whether there is actually a general square root
barrier for minimum distance in LDPC
quantum codes or if this is only due to an ``excess of structure'' in
these constructions.
Indeed, constructing LDPC quantum codes remains challenging, and the
few examples which are known to date carry lots of structure --- in
particular, a duality structure --- and symmetry. This enables exact
comptutation of parameters but may yield artificial restrictions.
The square root barrier has been proved for surfaces and color codes
(\cite{Delfosse13},\cite{Fetaya}). There is
thus a need for new constructions. 

In this paper, we explore a new side of topology which is likely to hold interesting
quantum codes.
Khovanov homology is a link invariant defined in \cite{Khovanov}. To
any diagram representation of a link, it associates a chain complex
whose homology depends on the underlying link only.
The chain complex
is actually bigraded and its Euler characteristic is famed for categorifying the
Jones polynomial, however we will not be interested here in this second
non homological grading.
Khovanov homology has a rich structure, in particular a Poincaré
duality property, that makes easier the computation of minimum distances.
As a matter of fact, we study three families of codes, associated to
some very simple knots and links, and compute explicitely their
parameters. Asymptotically, we respectively obtain $\left\llbracket \frac{3^{2 \ell+1}}{\sqrt{8\pi \ell}};1;2^\ell\right\rrbracket$, $\left\llbracket \sqrt{\frac{3}{2\pi \ell}}6^\ell;2^\ell;2^\ell\right\rrbracket$ and $\llbracket
n;1;a\!\sqrt{n}\rrbracket$ with $a$ a constant.
This is below the
parameters of Freedman--Meyer--Luo and Tillich--Zemor codes, but
reaches, and even beats, toric codes and most other known
ones. Moreover, there are still many others candidates among link
diagrams to look at and
other codes properties to study such as minimal amount of energy needed to
reach an unrectifiable error.
Moreover, it is worthwhile to note that, even if the construction
drastically differs from its predecessors, it seems to run into the
same square root bound for minimum distance.
Finally, even if this study was initially motivated by quantum computing
interests, it opens some questions (see {\it e.g.} question \ref{quest:ConjReid})
that may result on interesting properties of Khovanov homology, even
from the knot theory point of view.

This paper aims at being readable by both topologists and code
theorists. It begins by a review of LDPC CSS codes followed by a
review of chain
complexes and homology. The first part ends with a generic way to define one
of the former using the latter.
The second part is devoted to the definition of Khovanov homology and
to some of its properties.
Third, fourth and fifth parts deal each with a family of codes
associated, respectively, to diagrams of the unknot, of the unlinks and
of the $(2,n)$--torus knots and links. All the parameters of the codes
are computed
there.
Finally, in order to lighten the core of the text, a technical
appendix gathers some analytical proofs needed on the way.

 \subsection*{Acknowledgement}
 The author thanks Alain Couvreur and Gilles Zemor for
 introducing him to quantum codes and to their connection with topology.
 He is also deeply grateful to Nicolas Delfosse for answering all his
 (numerous) questions on quantum computing. He finally wants to thank Rinat
 Kashaev for a simplification in the proof of
 Prop. \ref{prop:SumCarre}.
 The author is supported by ANR project VasKho and CNRS PEPS project TOCQ.



\section{Chain complex codes}
\label{sec:code-chain-complexes}

\subsection{From quantum errors to codes}
\label{sec:from-quantum-error}

For more details, the author recommands \cite{Nielsen}, \cite{Preskill} or
the (french) introduction of \cite{Delfosse} to the reader.
This section is a rough overview of error--correcting quantum codes
adressed to non specialists.

\subsubsection{Qubits and their errors}
\label{sec:qubits}

In quantum theory, the elementary piece of information is the qubit.
It is a unitary element in the $\C$--vector space
$\HH$ spanned by two generators, usually denoted by $\qb{0}$ and $\qb{1}$.
We denote the space of qubits by $\HH_1$.
Actually, only the images in the projective quotient can be physically apprehended, but since it
will be fruitful to deal with signs issues, we will often switch between the (non commutative) affine
and the (commutative) projective cases. For convenience, we will use
notation with tildas each time we deal with affine elements.

Unlike the classical case, multiple qubits do not just concatenate:
they can entangle. From the postulates of quantum mechanics, $n$ qubits are described by unitary
elements in $\HH^{\otimes n}$; they are of the form
$\disp{\sum_{x\in\{0,1\}^n}}\alpha_x\qb{x}$ with
$\sum_x|\alpha_x|^2=1$.
We denote the space of such $n$--qubits by $\HH^n_1$.

Transmitting, or even just keeping stored, a $n$--qubit may alter it.
On a single qubit, a set of possible alterations is the Pauli group $\tGG_1$,
generated by three elements:
\[
\tX:\begin{array}{ccr}
  \qb{0} & \mapsto & \qb{1}\\
  \qb{1} & \mapsto & \qb{0}\\
\end{array},
\hspace{1cm}
\tY:\begin{array}{ccr}
  \qb{0} & \mapsto & -i\qb{1}\\
  \qb{1} & \mapsto & i\qb{0}\\  
\end{array},
\hspace{1cm}
\tZ:\begin{array}{ccr}
\qb{0} & \mapsto & \qb{0}\\
  \qb{1} & \mapsto & -\qb{1}\\

\end{array}.
\]
Of course, they are not the only errors which may occur, but they are
an orthogonal basis for them.
For this reason, it is sufficient to focus our effort on them.
We can note that every such Pauli error is of the form $\e A$ with
$\e\in S:=\{\pm1,\pm i\}$ and $A\in\tE:=\{I,\tX,\tY,\tZ\}$ and that any two errors
always do commute or anti-commute.
We denote by $\GG_1$ the projective quotient of $\tGG_1$. It is
an abelian group which is generated by only two elements, for instance
$X$ and $Z$, the images of $\tX$ and $\tZ$.
On a $n$--qubit, every factor can be altered by an error.
The group $\tGG_n=\tGG_1^{\otimes n}$, defined as the set $\tE^n\times
S$ with the obvious product, forms an orthogonal basis for errors on $n$--qubits. 
Here again, every two elements do commute or anti-commute; 
and the projective quotient $\GG_n$ of $\tGG_n$
is $\E^n$, where $\E:=\{I,X,Z,XZ\}$.
The group $\GG_n$ is abelian but we say that two elements commute
(resp. anti-commute) only if their lifts in $\tGG_n$ do commute
(resp. anti-commute).
Note that it does not depend on the choosen lifts.

\subsubsection{CSS codes}
\label{sec:css-codes}

A quantum code $C$ of length $n\in\N^*$ and dimension
$k\in\llbracket1,n\rrbracket$ is a $2^k$--dimensional subspace of
$\HH^{\otimes n}$.
It makes possible the storage of a $k$--qubit in the form of a
$n$--qubit, what enables, as we will see, a correction process for
small alterations of the encoding $n$--qubits.
The terminology, here, may be misleading since the dimension of a
quantum code refers to the number of encoded qubits and not to the
actual dimension of the code as a $\C$--vector space.
We define a \emph{codeword} as any element of $C$.

Let $G$ be a subgroup of $\GG_n$ such that $G$ is liftable to a group $\tG\subset\tGG_n$.
For every $g\in G$, we denote by $\widetilde{g}$ its lift in $\tG$.
We define $C_G$ as
$\Fix_{\tG}(\HH^n_1):=\{x\in\HH^n_1\ |\ \forall \widetilde{g}\in \tG, \widetilde{g}(x)=x\}$.
Note that it only depends on $G$ and not on the choosen lift $\tG$.
If $G$ is generated by $(n-k)$ independant elements of $\GG_n$, then
one can prove that $C_G$ is a code, so-called \emph{stabilizer code}, of dimension $k$.

We say that $C_G$ is a \emph{CSS\footnote{for Calderbank, Shor and
Steane} code} if $G$ is even more restrictively generated by elements
in $\E_X^n\cup \E_Z^n$ with
$\E_X:=\{1,X\}$ and $\E_Z:=\{1,Z\}$.
Since $\E_X^n$ and $\E_Z^n$ are both abelian and made of order 2
elements, they are both isomorphic to $\F_2^n$.
As a matter of fact, such a set of generators can be described as the rows of two matrices
$\mathbf{H}_X,\mathbf{H}_Z\in\pcup_{p\in\N^*}\Mat_{\F_2}(p,n)$: to a row
$(a_1,\cdots,a_n)\in\F_2^n$ of $A_\alpha$ with $\alpha=X$ or $Z$, we
associate $(\alpha^{a_1},\cdots,\alpha^{a_n})\in \E_\alpha^n$.

The fact that $G$ is liftable in $\tGG_n$ means that every two
generators $x$ and $y$ commute.
Of course, if $x,y\in\E^n_X$ or $x,y\in\E^n_Z$, this is trivially satisfied;
but since $\tX$ and $\tZ$ anticommute, $x\in\E^n_X$ and
$y\in\E^n_Z$ do commute iff they share an even number of non-zero
entries, that is if the product of the associated rows in $\mathbf{H}_X$ and
in $\mathbf{H}_Z$ is zero.
In short, $G$ is liftable iff $\mathbf{H}_X \mathbf{H}^t_Z=0$.

Finally, generators in $\E^n_X$ are necessarily independant from those in
$\E^n_Z$, so the minimal number of independant generators for $G$ is
$\rk(\mathbf{H}_X)+\rk(\mathbf{H}_Z)$.
As a matter of fact, two matrices $\mathbf{H}_X$ and $\mathbf{H}_Z$ such that $\mathbf{H}_X\mathbf{H}_Z^t=0$ being given,
the length $n$ of the associated CSS code is their common number of
columns, and the dimension is $k=n-\rk(\mathbf{H}_X)-\rk(\mathbf{H}_Z)$.

\subsubsection{Decoding and minimum distance}
\label{sec:decoding}

In quantum physics, certain measurements can be seen as orthogonal
projections.
More precisely, for a given orthogonal decomposition
$\HH^n=\overset{\bot}{\oplus}V_i$, there is an associated measure
which sends a unitary element $\sum x_i\in\HH^n_1$ to $\frac{1}{||x_{i_0}||}x_{i_0}$ with
probability $||x_{i_0}||^2$.

Now, let $C_G$ be a CSS code and $\{E_1,\cdots,E_{n-k}\}$ be a minimal set of $n-k$ generators
for $G$.
For every $\sigma:=(s_1,\cdots,s_{n-k})\in\F_2^{n-k}$, we set
$C(\sigma):=\{x\in\HH_1^n\ |\ \forall i\in\llbracket1,n-k\rrbracket, \widetilde{E}_i(x)=(-1)^{s_i}x\}$.
For every error $E\in\GG^n$, we define its syndrome
$\sigma(E):=\big(s_1(E),\cdots,s_{n-k}(E)\big)\in\F_2^{n-k}$ by
$s_i(E)=0$ iff $E$ commutes with $E_i$.
We can note that if $x\in C_G$ and $E\in\GG_n$, then $\widetilde{E}(x)\in
C(\sigma(E))$.
The \emph{weight} of an error is the number of qubits it alters.
For every $\sigma\in\F_2^{n-k }$, we choose a minimally weighted error $E_\sigma$ of
syndrome $\sigma$.

The decomposition
$\HH^n=\overset{\bot}{\ppoplus_{\sigma\in\F_2^{n-k}}}C(\sigma)$ holds
and the associated measure discretizes the
set of possible alterations of a codeword.
Indeed, let $e(x_0)$ be a codeword $x_0\in C_G=\Fix_{\tG}(\HH^n_1)$ altered by
an error $e$ and let assume that the measure projects it to $E(x_0)$ where
$E$ is a Pauli error
of syndrome $\sigma_E$.
Then one can try to correct the error by computing
$\overline{x_0}:=\widetilde{E}_{\sigma_E}\widetilde{E}(x_0)$. By
construction, $\widetilde{E}_{\sigma_E}\widetilde{E}$ has a syndrome
equal to zero, so it commutes with all elements in $G$. If
it is
actually in $G$, then $\overline{x_0}=x_0$ and we got back the initial
codeword.
However, it may happen that $\widetilde{E}_{\sigma_E}\widetilde{E}$ does not belong to
$G$. Then the decoding process fails.

The \emph{minimum distance} of a code is the minimal weight of a non detectible error that
does alter codewords.
For a CSS code $C_G$, it is the minimal
weight of an error which commutes with all the elements of $G$ but does
not belong to $G$.
It corresponds, as we will see in the proof of Prop. \ref{prop:ChainCodes}, to the minimal weight of a
vector which is in the kernel of one of the matrices $\mathbf{H}_X$ or
$\mathbf{H}_Z$ without being spanned by the rows of the other.

\begin{nota}
  For any code, we denote its parameters by $\llbracket n;k;d\rrbracket$ where $n$ is the
  length of the code, $k$ its dimension and $d$ its minimum distance.
\end{nota}

\subsection{From codes to chain complexes}
\label{sec:from-codes-chain}

For further details, the reader can refer to \cite{Weibel},
\cite{Hilton}, \cite{MacLane} or \cite{Lang}.

\subsubsection{Homology and cohomology}
\label{sec:chains-complexes}

Before relating them to quantum codes, we recall some basic
definitions on chain complexes. We will focus here on $\F_2$, but up
to signs issues,
everything remains true for any field. Most of it remains even true
for any ring.

\begin{defi}
  An increasing (resp. decreasing) chain complex $C$ is a $\Z$--graded $\F_2$--vector space
  $\poplus_{i\in\Z} C^i$ (resp. $\poplus_{i\in\Z} C_i$) together with a
  linear map $\func{\p}{C}{C}$ which increases (resp. decreases) the
  grading by one and satisfies $\p^2\equiv 0$.
It is often denoted as
\[
\xymatrix{\cdots \ar[r]^(.45){\p} & C^i \ar[r]^(.4){\p} & C^{i+1} \ar[r]^(.45){\p} & \cdots}.
\]
The grading is called \emph{homological grading}.
If $C$ is non zero for only a finite number of homological degrees,
then we omit all the redundant zero spaces.
\end{defi}

\begin{remarque}
  Unless otherwise specified, chain complexes will be assumed
  to be increasing.
  This convention is opposite to the usual one, but it sticks to the standard
  appellation ``Khovanov homology'', which should be more appropriately called
  ``Khovanov cohomology''.
\end{remarque}

\begin{defi}
   If $C:=\Big(\poplus_{i\in\Z} C^i,\p \Big)$ is a chain complex, then
   its dual $C^\vee$ is the decreasing chain complex $\Big(\poplus_{i\in\Z}
   C^\vee_i,\p^\vee \Big)$ defined, for every $i\in\Z$, by $C^\vee_i=\Hom(C^i,\F_2)$ and
   $\big(\p^\vee(f)\big)(c)=f\big(\p(c)\big)$ for every $f\in C^\vee_i$ and $c\in C^{i-1}$.
\end{defi}

\begin{defi}
  If $C:=\Big(\poplus_{i\in\Z} C^i,\p \Big)$ is a chain complex, then
  its homology $H^*(C)$ is the graded space
   $\poplus_{i\in\Z} H^i(C):=\poplus_{i\in\Z} \fractt{\Big(\Ker(\p)
     \cap C^i\Big)}/{\Big(\Im(\p) \cap C^i\Big)}$ and
   its cohomology $H_*(C)$ the graded space
  $\poplus_{i\in\Z} H_i(C):=\poplus_{i\in\Z}\fractt{\Big(\Ker(\p^\vee)
    \cap C^\vee_i\Big)}/{\Big(\Im(\p^\vee) \cap C^\vee_i\Big)}$
   where $C^\vee$ is the dual of $C$.
   
   For every $x\in\Ker(\p)$ (resp. $x\in\Ker(\p^\vee)$), we denote by
   $[x]$ its image in $H^*(C)$ (resp. $H_*(C)$).
\end{defi}

Now, we prove a very elementary lemma which will be central in the proof
of Prop. \ref{prop:MinDist}.
\begin{lemme}\label{lem:CoHoNull}
  Let $C:=\Big(\poplus_{i\in\Z} C^i,\p \Big)$ be a chain complex, $r$
  an integer and
  $\{\alpha_i\}_{i\in I}\subset \Ker(\p)\cap C^r$ a finite set such
  that $\big\{[\alpha_i]\big\}_{i\in I}$ generates $H^r(C)$.
  Then every $\varphi\in\Ker(\p^\vee)\cap C^\vee_{r}$ satisfying
  $\varphi(\alpha_i)=0$ for every $i\in I$ is null in $H_r(C)$.
\end{lemme}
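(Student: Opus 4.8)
The statement is essentially a linear-algebra duality fact, so the plan is to unwind the definitions of $H^r(C)$ and $H_r(C)$ and play off the natural pairing between $C^r$ and $C^\vee_r = \Hom(C^r,\F_2)$. Let $\varphi\in\Ker(\p^\vee)\cap C^\vee_r$ with $\varphi(\alpha_i)=0$ for all $i\in I$. To show $[\varphi]=0$ in $H_r(C)$, I must exhibit $\psi\in C^\vee_{r-1}$ with $\p^\vee(\psi)=\varphi$; equivalently, $\varphi$ must vanish on $\Ker(\p)\cap C^r$, since $\Im(\p^\vee)\cap C^\vee_r$ is exactly the annihilator of $\Ker(\p)\cap C^r$ inside $C^\vee_r$ (this is the standard fact that, for a linear map between finite-dimensional spaces, the image of the transpose is the annihilator of the kernel — valid here once we note the relevant spaces are finite-dimensional, or argue directly).

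The key step is therefore to check that $\varphi$ kills all of $\Ker(\p)\cap C^r$, not merely the chosen cycles $\alpha_i$. Take any $\beta\in\Ker(\p)\cap C^r$. Since $\{[\alpha_i]\}_{i\in I}$ generates $H^r(C)$, we can write $[\beta]=\sum_{i\in J}[\alpha_i]$ for some finite $J\subseteq I$ (working over $\F_2$, so no coefficients to track), which means $\beta = \sum_{i\in J}\alpha_i + \p(c)$ for some $c\in C^{r-1}$. Now apply $\varphi$: by linearity $\varphi(\beta)=\sum_{i\in J}\varphi(\alpha_i) + \varphi(\p(c))$. The first sum vanishes by hypothesis, and $\varphi(\p(c)) = (\p^\vee\varphi)(c) = 0$ because $\varphi\in\Ker(\p^\vee)$. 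Hence $\varphi(\beta)=0$.

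To finish, I note that $\varphi$ annihilates $\Ker(\p)\cap C^r$, so by the annihilator/transpose-image identification $\varphi$ lies in $\Im(\p^\vee)\cap C^\vee_r$, i.e. $[\varphi]=0$ in $H_r(C)$. The only mild subtlety — the ``main obstacle'' such as it is — is justifying that $\Im(\p^\vee)\cap C^\vee_r$ equals the annihilator of $\Ker(\p)\cap C^r$; if one does not wish to invoke finite-dimensionality (the chain complexes arising from Khovanov homology are finite-dimensional, so this is harmless), one can argue directly: choosing a complement $C^r = (\Ker\p\cap C^r)\oplus W$, any functional vanishing on the first summand is $f\circ\p|_{C^{r-1}}$ composed appropriately, but more cleanly one simply observes $\p$ restricts to an isomorphism from a complement of $\Ker(\p)$ in $C^{r-1}$ onto $\Im(\p)\cap C^r$, and extends a functional on the latter to the desired $\psi$. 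Everything else is a one-line definition chase.
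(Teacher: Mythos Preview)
Your argument is correct and follows the same route as the paper's proof: first show that $\varphi$ vanishes on all of $\Ker(\p)\cap C^r$ (by writing any cycle as an $\F_2$--combination of the $\alpha_i$ plus a boundary and using $\varphi\in\Ker(\p^\vee)$), then deduce $\varphi\in\Im(\p^\vee)$ from the annihilator description of the image of the transpose --- the paper carries out this last step by an explicit basis construction rather than citing the abstract fact, but the content is identical. One indexing slip to fix: in the paper's conventions $\p^\vee$ \emph{decreases} the grading, so the witness $\psi$ with $\p^\vee(\psi)=\varphi$ lives in $C^\vee_{r+1}$, not $C^\vee_{r-1}$, and correspondingly your closing sketch should read ``$\p$ restricts to an isomorphism from a complement of $\Ker(\p)$ in $C^r$ onto $\Im(\p)\cap C^{r+1}$''.
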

\begin{proof}
  Since $\big\{[\alpha_i]\big\}_{i\in I}$ generates $H^r(C)$, every
  $x\in \Ker(\p)\cap C^r$ can be written $x=\sum_{i\in I'\subset
    I}\alpha_i +\p(y)$ with $y\in C^{r-1}$.
  Then
  $\varphi(x)=\varphi\big(\p(y)\big)=\big(\p^*(\varphi)\big)(y)=0$ and
  $\varphi_{|\Ker(\p)}\equiv0$.
  Now, consider a basis $\{\beta_j\}_{j\in J}$ of $\Ker(\varphi)^\perp\subset\Ker(\p)^\perp$ in
  $C^r$, set $\beta'_j=\p(\beta_j)\neq0$ for all $j\in J$ and define
  $g\in\Hom(C^{r+1},\F_2)$ by $g(\beta'_j)=\varphi(\beta_j)$ for all $j\in J$
  and $g_{|\F ^\perp_2<\beta'_j>}\equiv 0$.
  Then $\varphi=g\circ\p\in\Im(\p^\vee)$ and $[\varphi]=0$.
  \end{proof}

\subsubsection{Operations on chain complexes}
\label{sec:oper-chain-compl}

Later on the paper, we will need the following definitions and propositions.

\begin{defi}
  If $C_1:=\Big(\poplus_{i\in\Z} C_1^i,\p_1 \Big)$ and $C_2:=\Big
  (\poplus_{i\in\Z} C_2^i,\p_2 \Big)$ are two chain complexes, then
  $C_1\otimes C_2$ is the chain complex $\Big (\poplus_{i\in\Z} C^i,\p
  \Big)$ defined by $C^i=\poplus_{j\in\Z}\big(C_1^j\otimes
  C_2^{i-j}\big)$ and $\p(c_1\otimes c_2)=\p_1(c_1)\otimes
  c_2+c_1\otimes\p_2(c_2)$ for every $c_1\in C_1$ and $c_2\in C_2$.
\end{defi}

\begin{prop}[Künneth formula]
  If $C_1$ and $C_2$ are two chain complexes, then $H^*(C_1\otimes
  C_2)\cong H^*(C_1)\otimes H^*(C_2)$ and $H_*(C_1\otimes C_2)\cong H_*(C_1)\otimes H_*(C_2)$ as graded spaces.
\end{prop}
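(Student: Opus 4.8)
The plan is to give the standard proof of the Künneth formula over a field, whose only real inputs are that over $\F_2$ every subspace has a complement and that the tensor product of chain complexes is compatible with chain homotopies; over $\F_2$ no sign bookkeeping is needed.

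\textbf{Step 1: every complex deformation-retracts onto its homology.} I would first show that any chain complex $C=\big(\poplus_i C^i,\p\big)$ over $\F_2$ is chain homotopy equivalent to its homology $H^*(C)$, the latter viewed as a complex with zero differential. For each $i$ put $Z^i=\Ker(\p)\cap C^i$ and $B^i=\Im(\p)\cap C^i\subseteq Z^i$; choose a complement $\HH'^i$ of $B^i$ in $Z^i$ (so $\HH'^i\cong H^i(C)$) and a complement $L^i$ of $Z^i$ in $C^i$, whence $C^i=B^i\oplus\HH'^i\oplus L^i$. Then $\p$ is injective on $L^i$ with image exactly $B^{i+1}$, so the projection $f\colon C\to\HH':=\poplus_i\HH'^i$ (along $B^i\oplus L^i$) and the inclusion $g\colon\HH'\hookrightarrow C$ are chain maps with $fg=\Id$, and $gf\simeq\Id$ via the homotopy obtained by extending the inverse isomorphisms $B^{i+1}\to L^i$ by zero. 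This splitting is the one genuinely hands-on point; everything after it is formal. (Equivalently one may phrase Step 1 as decomposing $C$ as a direct sum of $\HH'$ with acyclic two-term complexes $L^i\xrightarrow{\ \sim\ }B^{i+1}$.)

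\textbf{Step 2: the homology statement.} Since $\otimes$ carries a pair of chain homotopy equivalences to a chain homotopy equivalence (over $\F_2$, without signs), Step 1 gives $C_1\otimes C_2\simeq\HH'_1\otimes\HH'_2$. The target has zero differential, so it equals its own homology, and in homological degree $i$ it is $\poplus_j \HH'^{\,j}_1\otimes\HH'^{\,i-j}_2\cong\poplus_j H^j(C_1)\otimes H^{i-j}(C_2)$, which is exactly the degree-$i$ part of $H^*(C_1)\otimes H^*(C_2)$ as defined for the tensor product of complexes. Hence $H^*(C_1\otimes C_2)\cong H^*(C_1)\otimes H^*(C_2)$; the only care required is tracking the homological grading through the tensor product.

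\textbf{Step 3: the cohomology statement.} The splitting of Step 1 never used that the complex is increasing, so it applies verbatim to the decreasing complexes $C_1^\vee,C_2^\vee$ and shows $H^*(C_i^\vee)=H_*(C_i)$. It then suffices to identify $(C_1\otimes C_2)^\vee$ with $C_1^\vee\otimes C_2^\vee$: the natural pairing $f\otimes g\mapsto\big(c_1\otimes c_2\mapsto f(c_1)g(c_2)\big)$ is a chain map and is an isomorphism in each degree (each $C_i^{\,j}$ being finite-dimensional in the situations at hand). Applying the homology Künneth formula just proved to $C_1^\vee$ and $C_2^\vee$ then yields $H_*(C_1\otimes C_2)=H^*\big((C_1\otimes C_2)^\vee\big)\cong H^*(C_1^\vee)\otimes H^*(C_2^\vee)=H_*(C_1)\otimes H_*(C_2)$. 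I expect the only delicate point of the whole argument to be this compatibility of dualization with $\otimes$, which is precisely why it is cleanest to treat cohomology by reduction to the homology case rather than re-running the splitting a third time.
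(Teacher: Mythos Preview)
Your argument is correct and is the standard proof of the K\"unneth formula over a field. The paper itself does not prove this proposition: it is stated without proof in the review section on chain complexes, with the reader referred to \cite{Weibel}, \cite{Hilton}, \cite{MacLane} or \cite{Lang} for details. So there is nothing to compare against beyond noting that your splitting-plus-homotopy approach is exactly what those references do.

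One minor remark on Step~3: as you yourself observe, the identification $(C_1\otimes C_2)^\vee\cong C_1^\vee\otimes C_2^\vee$ needs each $C_i^{\,j}$ to be finite-dimensional. The paper's statement does not make this hypothesis explicit, but every chain complex appearing in the paper (Khovanov complexes of link diagrams) is degreewise finite-dimensional, so your caveat ``in the situations at hand'' is exactly right and no generality is lost for the applications.
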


\begin{defi}
  If $C_1:=\Big(\poplus_{i\in\Z} C_1^i,\p_1 \Big)$ and $C_2:=\Big
  (\poplus_{i\in\Z} C_2^i,\p_2 \Big)$ are two chain complexes, then
  $f:=\big(\func{f^i}{C_1^i}{C_2^i}\big)_{i\in\Z}$ is a chain map iff it commutes with
  the differentials, \ie iff $\p_2\circ f=f\circ\p_1$.

  The \emph{cone} of $f$ is the chain complex
  $\Cone(f):=\Big(\poplus_{i\in\Z} C^i,\p \Big)$ defined by
  $C^i:=C_1^i\oplus C_2^{i-1}$ for every $i\in\Z$ and $\p=\left(
    \begin{array}{cc}
      \p_1 & 0\\f&\p_2
    \end{array}
\right)$.
\end{defi}
\begin{prop}
  A chain map $\func{f}{C_1}{C_2}$ between two chain complexes $C_1$
  and $C_2$ induces maps at the level of homology and cohomology which
  are denoted by $\func{f^*}{H^*(C_1)}{H^*(C_2)}$ and $\func{f_*}{H_*(C_1)}{H_*(C_2)}$
\end{prop}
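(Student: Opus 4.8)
The plan is to carry out, in turn, the two standard constructions and to check in each case that the map descends through the quotients defining (co)homology; everything rests on the single identity $\p_2\circ f = f\circ\p_1$ provided by the hypothesis that $f$ is a chain map.

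\emph{Homology.} First I would set $f^*[x]:=[f^i(x)]$ for $x\in\Ker(\p_1)\cap C_1^i$ and check the two points that make this legitimate. (i)~$f^i(x)$ is a cycle: $\p_2\big(f^i(x)\big)=f^{i+1}\big(\p_1(x)\big)=f^{i+1}(0)=0$, and $f^i(x)\in C_2^i$ since $f^i$ preserves the degree. (ii)~The class is independent of the representative: if $x+x'=\p_1(y)$ with $y\in C_1^{i-1}$, then $f^i(x)+f^i(x')=f^i\big(\p_1(y)\big)=\p_2\big(f^{i-1}(y)\big)\in\Im(\p_2)\cap C_2^i$, so $[f^i(x)]=[f^i(x')]$ in $H^i(C_2)$. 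Linearity of $f^*$ is inherited from the linearity of each $f^i$ and of the canonical projections onto the quotients, and $f^*$ respects the homological grading because $f^i$ does; this produces $\func{f^*}{H^*(C_1)}{H^*(C_2)}$.

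\emph{Cohomology.} Here I would first dualize $f$ and note that one gets a chain map between the dual complexes: for each $i$, set $\func{f^\vee_i}{(C_2^\vee)_i}{(C_1^\vee)_i}$, $\varphi\mapsto\varphi\circ f^i$, which preserves the grading of the dual complexes. That $f^\vee:=(f^\vee_i)_{i\in\Z}$ commutes with $\p_2^\vee$ and $\p_1^\vee$ is a one-line check: for $\varphi\in(C_2^\vee)_i$ and $c\in C_1^{i-1}$, both $\p_1^\vee\big(f^\vee_i(\varphi)\big)$ and $f^\vee_{i-1}\big(\p_2^\vee(\varphi)\big)$ evaluate at $c$ to $\varphi\big(\p_2(f^{i-1}(c))\big)$, once more by $f^i\circ\p_1=\p_2\circ f^{i-1}$. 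Since $C_1^\vee$ and $C_2^\vee$ are themselves chain complexes (decreasing ones), the verification just done for the homology case applies verbatim to $f^\vee$ and yields the induced map on the homologies of the dual complexes; by definition these are $H_*(C_1)$ and $H_*(C_2)$, and the resulting map is denoted $f_*$.

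\emph{Expected difficulty.} There is no genuine obstacle here: the statement is a formal consequence of the definitions of chain map, of dual complex, and of the quotients defining $H^*$ and $H_*$, together with the single identity $\p_2\circ f=f\circ\p_1$. The only step that calls for a little care is the index bookkeeping in the cohomological half — in particular keeping track of how the dualization functor $(\cdot)^\vee$ acts on a chain map — so that the induced cohomology morphism is recorded between the intended graded spaces.
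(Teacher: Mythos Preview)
The paper states this proposition without proof, treating it as a standard fact from homological algebra; your argument is exactly the routine verification one would expect and is correct.

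One small point worth flagging: your construction of $f^\vee$ correctly sends $(C_2^\vee)_i$ to $(C_1^\vee)_i$, so the induced map on cohomology naturally runs $H_*(C_2)\to H_*(C_1)$, i.e.\ contravariantly. The paper's statement records it as $\func{f_*}{H_*(C_1)}{H_*(C_2)}$, which appears to be a slip in the paper rather than in your proof; your own construction already has the arrows the right way round, and you might want to say so explicitly rather than leaving the direction implicit in the last sentence of the cohomology paragraph.
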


\subsubsection{Exact sequences}
\label{sec:exact-sequences}

The following notion will be usefull to compute homologies.
\begin{defi}
  An exact sequence is a chain complex $(C,\p)$ with
  homology equal to zero in all degrees.
  It means that $\Ker(\p)=\Im(\p)$.
\end{defi}
\begin{prop}
  If $(C_0,\p_0)$, $(C_1,\p_1)$ and $(C_2,\p_2)$ are three chain
  complexes such that, for every $n\in\Z$, there are maps
  $\func{\iota_n}{C_0^n}{C_1^n}$ and $\func{\pi_n}{C_1^n}{C_2^n}$
  which commute with the differentials $\p_0$, $\p_1$ and $\p_2$ and
  such that 
\[
\xymatrix{0\ar[r] & C^n_0 \ar[r]^{\iota_n} & C^n_1 \ar[r]^{\pi_n} & C^n_2 \ar[r] &
  0}
\]
\noi is an exact sequence, then
\[
\xymatrix{\cdots \ar[r]^(.4){f^*_{n-1}} & H^n(C_0) \ar[r]^(.45){\iota^*_n} & H^n(C_1)
\ar[r]^{\pi^*_n} & H^n(C_2) \ar[r]^(.45){f^*_n} & H^{n+1}(C_0)
\ar[r]^(.55){\iota^*_{n+1}} & \cdots}
\]
\noi is an exact sequence , where, for all $n\in\Z$, $\iota_n^*$ and $\pi^*_n$ are the maps
induced in homology by $\iota_n$ and $\pi_n$ and $f^*_n$ is some
connecting map.
\end{prop}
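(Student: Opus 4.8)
The plan is the standard snake-lemma diagram chase, carried out degree by degree with $\F_2$--coefficients; write $\p_0,\p_1,\p_2$ for the three differentials. First I construct the connecting map $f_n^*$. Given $[z]\in H^n(C_2)$ with $z\in\Ker(\p_2)\cap C_2^n$, use surjectivity of $\pi_n$ to choose $c\in C_1^n$ with $\pi_n(c)=z$; then $\pi_{n+1}(\p_1 c)=\p_2\pi_n(c)=\p_2 z=0$, so exactness of the row in degree $n+1$ gives a \emph{unique} $b\in C_0^{n+1}$ with $\iota_{n+1}(b)=\p_1 c$, uniqueness coming from injectivity of $\iota_{n+1}$. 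Applying $\iota_{n+2}$ to $\p_0 b$ yields $\p_1\iota_{n+1}(b)=\p_1^2 c=0$, hence $\p_0 b=0$, and I set $f_n^*[z]:=[b]\in H^{n+1}(C_0)$. Linearity of $f_n^*$, $\iota_n^*$ and $\pi_n^*$ is immediate.

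Next I check that $f_n^*$ is well defined. Two lifts of $z$ differ by an element of $\Ker(\pi_n)=\Im(\iota_n)$, and replacing $z$ by $z+\p_2(w)$ forces at most an additional modification of $c$ by a $\p_1$--boundary; tracing through the construction in each case shows that $b$ changes only by $\p_0$ of an element of $C_0^n$, so $[b]$ is unchanged. It then remains to prove exactness at the three types of nodes, i.e.\ six inclusions. Two are formal: $\pi_n\circ\iota_n=0$ gives $\Im(\iota_n^*)\subseteq\Ker(\pi_n^*)$, while $\iota_{n+1}^*\circ f_n^*=0$ and $f_n^*\circ\pi_n^*=0$ follow at once from the construction (for the former, $\iota_{n+1}(b)=\p_1 c$ is a $\p_1$--boundary, so $\iota_{n+1}^*[b]=0$; for the latter, if $z=\pi_n(c')$ with $c'$ a $\p_1$--cycle one may use $c'$ itself as the lift, whence $\p_1 c'=0$ and $b=0$). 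The three substantive inclusions are the following. If $c\in\Ker(\p_1)\cap C_1^n$ with $\pi_n(c)=\p_2(w)$, lift $w$ to $v\in C_1^{n-1}$; then $\pi_n(c-\p_1 v)=0$, so $c-\p_1 v=\iota_n(a)$ with $a\in C_0^n$ a cycle (again by injectivity of $\iota_n$ and $\p_1 c=0$), and $\iota_n^*[a]=[c]$, giving $\Ker(\pi_n^*)\subseteq\Im(\iota_n^*)$. If $f_n^*[z]=0$, say $b=\p_0(a)$, then $\p_1(c-\iota_n a)=\p_1 c-\iota_{n+1}(\p_0 a)=0$ and $\pi_n(c-\iota_n a)=z$, so $[z]\in\Im(\pi_n^*)$. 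Finally, if $b\in\Ker(\p_0)\cap C_0^{n+1}$ with $\iota_{n+1}(b)=\p_1(c)$, then $\p_2\pi_n(c)=\pi_{n+1}(\p_1 c)=\pi_{n+1}\iota_{n+1}(b)=0$, so $z:=\pi_n(c)$ is a $\p_2$--cycle and, by the very definition of $f_n^*$, $f_n^*[z]=[b]$, giving $\Ker(\iota_{n+1}^*)\subseteq\Im(f_n^*)$.

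The main obstacle is purely bookkeeping: the well-definedness of $f_n^*$ and the three non-trivial inclusions each require carefully tracking which elements are cycles, which are boundaries, and in which of the three complexes they live, together with repeated use of the injectivity of $\iota_\bullet$ and surjectivity of $\pi_\bullet$. No idea beyond the snake lemma is needed, and since we work over the field $\F_2$ there are no sign subtleties; the only mild point is that the rows are drawn with increasing (cohomological) differentials, so the connecting map raises homological degree by one, exactly as in the statement.
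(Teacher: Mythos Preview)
Your argument is the standard and correct diagram chase establishing the long exact sequence in homology; the construction of the connecting map, its well-definedness, and the six inclusions are all handled properly. Note, however, that the paper does not actually supply a proof of this proposition: it is stated as a classical fact, with the reader referred at the start of the section to standard references such as \cite{Weibel}, \cite{Hilton}, \cite{MacLane} or \cite{Lang}. So there is nothing to compare against beyond observing that your write-up is exactly the textbook snake-lemma argument those references contain.
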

\begin{remarque}
  The condition on the short exact sequence just states that maps
  $\iota_n$ are injective, maps $\pi_n$ are surjective and $\Ker(\pi_n)=\Im(\iota_n)$.
\end{remarque}
\begin{prop}
  If $\func{f}{C_1}{C_2}$ is a chain map, then $\Cone(f):=\poplus_{i\in\Z} C^i$ fits the
  following short exact sequence in every degree $n\in\N$:
\[
\xymatrix{0\ar[r] & C^{n-1}_2 \ar[r]^{\iota_n} & C^n \ar[r]^{\pi_n} & C_1^n \ar[r] &
  0}.
\]
\end{prop}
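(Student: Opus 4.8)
The plan is to construct the maps $\iota_n$ and $\pi_n$ directly from the block structure of the cone's differential and then verify the three defining properties of a short exact sequence: injectivity of $\iota_n$, surjectivity of $\pi_n$, and $\Ker(\pi_n)=\Im(\iota_n)$. Recall that, by definition, $C^n=C_1^n\oplus C_2^{n-1}$ with differential $\p=\left(\begin{smallmatrix}\p_1 & 0\\ f & \p_2\end{smallmatrix}\right)$. The natural candidates are the inclusion $\func{\iota_n}{C_2^{n-1}}{C^n}$ sending $c_2\mapsto(0,c_2)$ and the projection $\func{\pi_n}{C^n}{C_1^n}$ sending $(c_1,c_2)\mapsto c_1$. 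With these choices, injectivity of $\iota_n$, surjectivity of $\pi_n$, and the equality $\Ker(\pi_n)=\Im(\iota_n)=\{0\}\oplus C_2^{n-1}$ are all immediate from the direct sum decomposition; this is the routine part.

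The step requiring actual care is checking that $\iota_n$ and $\pi_n$ commute with the differentials, so that the diagram is genuinely a short exact sequence \emph{of chain complexes} and not merely of vector spaces in each degree. For $\pi_n$ this is a one-line computation: $\pi_{n+1}\big(\p(c_1,c_2)\big)=\pi_{n+1}\big(\p_1(c_1),\,f(c_1)+\p_2(c_2)\big)=\p_1(c_1)=\p_1\big(\pi_n(c_1,c_2)\big)$. For $\iota_n$ one must be slightly careful about which differential acts on the source: the complex $\poplus_i C_2^{i-1}$ appearing as the subcomplex carries $\p_2$ as its differential (shifted by one in grading), and indeed $\p\big(\iota_n(c_2)\big)=\p(0,c_2)=(0,\p_2(c_2))=\iota_{n+1}\big(\p_2(c_2)\big)$, so commutation holds on the nose. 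The only genuine subtlety is bookkeeping the grading shift on $C_2$ — the subcomplex sitting inside $\Cone(f)$ is $C_2$ with homological degrees shifted up by one — but once that convention is fixed the identities are forced.

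I do not expect any real obstacle here; the statement is essentially a repackaging of the block-triangular form of $\p$, and the short exact sequence is the standard one $0\to C_2[-1]\to\Cone(f)\to C_1\to 0$. The only thing to be vigilant about is sign conventions — but since the paper works over $\F_2$ (and more generally states that everything holds "up to signs issues"), signs play no role, which removes the one place where cone constructions usually demand attention. One should simply state the two maps, note that exactness in each degree is the splitting $C^n=C_1^n\oplus C_2^{n-1}$, and record the two short computations above showing compatibility with $\p$, $\p_1$, $\p_2$.
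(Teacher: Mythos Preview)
Your argument is correct and is exactly the standard verification: the block decomposition $C^n=C_1^n\oplus C_2^{n-1}$ makes $\iota_n$ and $\pi_n$ the obvious inclusion and projection, exactness is immediate, and your two one-line computations confirm compatibility with the differentials. The paper itself offers no proof of this proposition --- it is stated as a standard fact from homological algebra (the section refers the reader to \cite{Weibel}, \cite{Hilton}, \cite{MacLane}, \cite{Lang}) --- so there is nothing to compare against beyond noting that your approach is the canonical one.
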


\begin{cor}\label{prop:ConeExactSequence}
  If $\func{f}{C_1}{C_2}$ is a chain map, then
\[
\xymatrix{\cdots \ar[r]^(.35){f^*_{n-1}} & H^{n-1}(C_2) \ar[r]^(.42){\iota^*_n} & H^n\big(\Cone(f)\big)
\ar[r]^(.55){\pi^*_n} & H^n(C_1) \ar[r]^{f^*_n} & H^n(C_2)
\ar[r]^(.52){\iota^*_{n+1}} & \cdots}
\]
\noi is an exact sequence.
In this case, maps $f_n^*$ are the maps induced in homology
by $f$.
\end{cor}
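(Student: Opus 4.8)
The plan is to obtain this corollary as a purely formal consequence of the two preceding propositions, leaving only one short diagram chase. First I would apply the long exact sequence proposition (the snake-lemma statement above) to the degreewise short exact sequence
\[
\xymatrix{0\ar[r] & C^{n-1}_2 \ar[r]^{\iota_n} & C^n \ar[r]^{\pi_n} & C_1^n \ar[r] & 0}
\]
furnished by the preceding proposition, where $C^n=C_1^n\oplus C_2^{n-1}$ and $\p=\left(\begin{array}{cc}\p_1 & 0\\ f & \p_2\end{array}\right)$. Writing $\tilde C_2$ for the chain complex with $\tilde C_2^{\,n}=C_2^{n-1}$ and differential $\p_2$, the block form of $\p$ shows that $\p$ restricts to $\p_2$ on the second summand, so the $\iota_n$ assemble into a chain map $\tilde C_2\to\Cone(f)$ and the $\pi_n$ into a chain map $\Cone(f)\to C_1$; over $\F_2$ this homological shift carries no sign subtlety, so the hypotheses of the snake-lemma proposition are met. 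Since $H^n(\tilde C_2)=H^{n-1}(C_2)$, that proposition produces exactly the exact sequence of the statement, with the fourth arrow a connecting homomorphism $\delta_n\colon H^n(C_1)\to H^n(C_2)$. It then remains only to identify $\delta_n$ with the map $f^*_n$ induced by $f$ in homology.

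For that I would unwind the standard construction of the connecting map. Take $[x]\in H^n(C_1)$ represented by $x\in\Ker(\p_1)\cap C_1^n$. A $\pi_n$-preimage of $x$ in $C^n$ is $(x,0)$, and since $\p_1(x)=0$ we get $\p(x,0)=\big(\p_1(x),f(x)\big)=\big(0,f(x)\big)$, which lies in $\iota_{n+1}(C_2^n)$ and corresponds to $f(x)\in C_2^n$. Moreover $f(x)\in\Ker(\p_2)$, since $\p_2\big(f(x)\big)=f\big(\p_1(x)\big)=0$ because $f$ is a chain map. By definition of the connecting map this gives $\delta_n([x])=[f(x)]=f^*_n([x])$, which is the claim; one checks similarly (or simply by naturality of the construction) that the other two arrows are the maps $\iota^*_n$ and $\pi^*_n$ induced by $\iota$ and $\pi$.

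The exactness of the sequence is thus immediate from the two preceding propositions; the only point needing attention is the identification of the connecting map, which is the routine chase above, together with the bookkeeping check that $\iota_n$ really is a degree-preserving chain map out of the shifted complex $\tilde C_2$ — and that is immediate from the block form of $\p$. I expect no genuine obstacle here.
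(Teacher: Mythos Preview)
Your proposal is correct and follows exactly the approach the paper intends: the corollary is stated without proof, as an immediate consequence of the two preceding propositions, and you have filled in precisely the routine details (checking that $\iota$ and $\pi$ are chain maps with respect to the shifted complex, and the standard diagram chase identifying the connecting homomorphism with $f^*$). There is nothing to add.
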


\subsubsection{Chain complex codes}
\label{sec:chain-complex-codes}

Now, we can state the purpose of this section.

\begin{prop}\label{prop:ChainCodes}
  To any length 3 piece of chain complex
  $C:=\left(C^{i_0-1}\xrightarrow{\p}C^{i_0}\xrightarrow{\p}C^{i_0+1}\right)$
  given with a basis $\BB$, one can associate
  a CSS code $C_C$ with parameter $\llbracket n;k;d\rrbracket$ where
  $n=\dim(C^{i_0})$, $k=\dim\big(H^{i_0} (C)\big)\bigg(=\dim\big(H_{i_0} (C)\big)\bigg)$
  and $d=\min\Big\{|x|_\BB\ \big|\ [x]\in H^{i_0} (C)\sqcup H_{i_0} (C),[x]\neq
  0\Big\}$, where $|\ .\ |_\BB$ denotes the $\BB$--weight, that is the number of non trivial
  coordinates in the basis $\BB$.
\end{prop}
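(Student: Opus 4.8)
The plan is to build the CSS code directly from the length-three piece of chain complex by reading its two differentials as parity-check matrices. Concretely, fix the given basis $\BB$ of $C^{i_0}$ (and the induced bases of $C^{i_0-1}$ and $C^{i_0+1}$), and let $\mathbf{H}_Z$ be the matrix of $\func{\p}{C^{i_0}}{C^{i_0+1}}$ (rows indexed by basis elements of $C^{i_0+1}$, columns by $\BB$) and $\mathbf{H}_X$ be the matrix of $\func{\p^\vee}{C^\vee_{i_0}}{C^\vee_{i_0-1}}$, equivalently the transpose of the matrix of $\func{\p}{C^{i_0-1}}{C^{i_0}}$ (so its rows are indexed by the basis of $C^{i_0-1}$, its columns by $\BB$). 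The relation $\p^2\equiv 0$ on the piece $C^{i_0-1}\to C^{i_0}\to C^{i_0+1}$ says exactly that the composition $C^{i_0-1}\to C^{i_0+1}$ vanishes, which in matrix terms is $\mathbf{H}_Z\,\mathbf{H}_X^t=0$. Hence the pair $(\mathbf{H}_X,\mathbf{H}_Z)$ satisfies the CSS compatibility condition recalled in Section~\ref{sec:css-codes}, and defines a CSS code $C_C$ of length $n=\dim(C^{i_0})$, the common number of columns.

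Next I would identify the dimension. By the formula at the end of Section~\ref{sec:css-codes}, $k=n-\rk(\mathbf{H}_X)-\rk(\mathbf{H}_Z)$. Now $\rk(\mathbf{H}_Z)=\rk\big(\func{\p}{C^{i_0}}{C^{i_0+1}}\big)=\dim\big(\Im(\p)\cap C^{i_0+1}\big)$, and $\rk(\mathbf{H}_X)=\rk\big(\func{\p}{C^{i_0-1}}{C^{i_0}}\big)=\dim\big(\Im(\p)\cap C^{i_0}\big)$, using $\rk(A^t)=\rk(A)$. Therefore
\[
n-\rk(\mathbf{H}_X)-\rk(\mathbf{H}_Z)=\dim(C^{i_0})-\dim\big(\Im(\p)\cap C^{i_0}\big)-\dim\big(\Im(\p)\cap C^{i_0+1}\big).
\]
Since $\Ker(\p)\cap C^{i_0}$ has dimension $\dim(C^{i_0})-\dim\big(\Im(\p)\cap C^{i_0+1}\big)$ (rank–nullity for $\p$ restricted to $C^{i_0}$), the right-hand side equals $\dim\big(\Ker(\p)\cap C^{i_0}\big)-\dim\big(\Im(\p)\cap C^{i_0}\big)=\dim H^{i_0}(C)$. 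The equality $\dim H^{i_0}(C)=\dim H_{i_0}(C)$ is the standard fact that a finite-dimensional complex and its dual have isomorphic (co)homology in each degree (over $\F_2$, dualizing a complex transposes differentials and ranks are transpose-invariant), so $k$ is as claimed.

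For the minimum distance, recall from Section~\ref{sec:decoding} that $d$ is the minimal $\BB$-weight of a vector lying in $\Ker(\mathbf{H}_X)$ or $\Ker(\mathbf{H}_Z)$ but not spanned by the rows of the other matrix. A vector $x\in\F_2^n\cong C^{i_0}$ lies in $\Ker(\mathbf{H}_Z)$ iff $x\in\Ker(\p)\cap C^{i_0}$, and it is spanned by the rows of $\mathbf{H}_X$ iff it lies in the column span of the matrix of $\func{\p}{C^{i_0-1}}{C^{i_0}}$, i.e.\ iff $x\in\Im(\p)\cap C^{i_0}$; so such an $x$ represents a nonzero class in $H^{i_0}(C)$ precisely when it is a nontrivial ``$Z$-type'' logical operator. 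Symmetrically, identifying $\F_2^n\cong C^\vee_{i_0}$, a vector in $\Ker(\mathbf{H}_X)$ not spanned by the rows of $\mathbf{H}_Z$ is exactly a representative of a nonzero class in $H_{i_0}(C)$ (here $\mathbf{H}_X$ plays the role of the differential out of $C^\vee_{i_0}$ and the rows of $\mathbf{H}_Z$ that of the differential into it). Taking the minimum over both types gives $d=\min\big\{|x|_\BB : [x]\in H^{i_0}(C)\sqcup H_{i_0}(C),\ [x]\neq 0\big\}$, as stated.

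The only genuinely delicate point is the bookkeeping in the last step: one must be careful that the identification $C^{i_0}\cong C^\vee_{i_0}$ used for the $X$-checks matches the basis $\BB$ (which it does, since $\BB$ is self-dual under the standard pairing), and that ``commutes with all of $G$ but is not in $G$'' translates exactly to ``in the kernel of one check matrix, not in the row span of the other,'' which is the content of the proof of Prop.~\ref{prop:ChainCodes} referenced in Section~\ref{sec:decoding}. Everything else is a direct translation between the linear-algebra description of CSS codes and the definition of (co)homology of the three-term complex; no estimates are involved.
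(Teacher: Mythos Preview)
Your argument is correct and follows the paper's route: build the two check matrices from the two differentials, verify the CSS condition via $\p^2=0$, compute $k$ by rank--nullity, and identify the $Z$- and $X$-type logical operators with nonzero classes in $H^{i_0}(C)$ and $H_{i_0}(C)$ respectively. Two small remarks: first, you have swapped the labels $\mathbf{H}_X$ and $\mathbf{H}_Z$ relative to the paper's choice (harmless, the CSS condition is symmetric); second, the sentence ``recall from Section~\ref{sec:decoding} that $d$ is the minimal weight of a vector in $\Ker(\mathbf{H}_X)$ or $\Ker(\mathbf{H}_Z)$ not spanned by the rows of the other'' is mildly circular, since Section~\ref{sec:decoding} explicitly defers that claim to \emph{this} proof. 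The paper closes that gap by factoring a general Pauli error as $E=E_XE_Z$, noting each factor still commutes with $G$, at least one lies outside $G$, and the weight of $E$ dominates that of each factor --- a one-line step you should include rather than cite.
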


\begin{proof}
  We set $\mathbf{H}_X:=\Mat_\BB(\p_{|C^{i_0}})$ and
  $\mathbf{H}_Z:=\Mat_\BB(\p_{|C^{i_0-1}})^t$.
  Since $\p^2=0$, we have that $\mathbf{H}_X\mathbf{H}_Z^t=0$ and the matrices $\mathbf{H}_X$ and
  $\mathbf{H}_Z$ define a CSS code $C_C$.
  Its length is trivially $\dim(C^{i_0})$.
  Its dimension is

  \begin{eqnarray*}
n-\rk(\mathbf{H}_X)-\rk(\mathbf{H}_Z) & = &
\dim(C^{i_0})-\rk(\p_{|C^{i_0}})-\rk(\p_{|C^{i_0-1}})\\
&=&\dim\big(\Ker(\p_{|C^{i_0}})\big)
-\rk(\p_{|C^{i_0-1}})\\
&=&\dim\left(\fractt{\Ker(\p_{|C^{i_0}})}/{\Im(\p_{|C^{i_0-1}})}\right)\
=\ \dim\big(H^{i_0}
(C)\big).
\end{eqnarray*}

To compute the minimum distance, we consider an error $E$ which
commutes with every element of $G$ but which is not in $G$.

If $E$ only involves $Z$ alterations, then it can be described by a
vector $v_E\in\F_2^n$ and the weight of $E$ is exactly
$|v_E|_\BB$. Since $E$ commutes with all the generators
of $G$ induced by the rows of $\mathbf{H}_X$, the vector $v_E$ is
orthogonal to all these rows and $v_E\in\Ker
(\p_{|C^{i_0}})$.
But $E\notin G$, so $v_E$ is not spanned by rows of $\mathbf{H}_Z$ and $v_E\notin
\Im(\p_{|C^{i_0-1}})$.
It follows that $E$ is non detectible iff $[v_E]$ is non zero in $H^{i_0}
(C)$.

If $E$ only involves $X$ alterations, then a similar reasoning at the
dual level shows that $E$ is non detectible iff $[v_E]$ is non zero in $H_{i_0}
(C)$.

Now, for a general $E$, we
factorize it as a product $E_XE_Z$ where $E_\alpha$ only involves
$\alpha$ alterations. Since every given generator of $G$ involves
only $X$ alterations or only $Z$ ones, the fact that $E$ commutes with
them implies that $E_X$ and $E_Z$ do.
But $E\notin G$, so at least one of $E_X$
 or $E_Z$ is not in $G$.
 We conclude by noting that the weight of $E$ is greater than each of the
 weights of $E_X$ and $E_Z$.
\end{proof}




\section{Khovanov homology}
\label{sec:khovanov-homology}

For more details on knot theory, the reader can refer to
\cite{Lickorish} or \cite{Kauffman}.
For details on Khovanov homology, the author advises Khovanov's
seminal paper \cite{Khovanov} for the general definition, \cite{KhovanovP} for the reduced case, Viro's elementary reformulation
\cite{Viro} and Shumakovich's survey \cite{Shumakovitch}. 

\subsection{Link diagrams}
\label{sec:basics-knot-theory}

A link is an embedding of a disjoint union of circles in $\R^3$
considered up to ambiant isotopies\footnote{Two maps
  $\func{f,g}{X}{Y}$ are said ambiant isotopic in $Y$ if there exists
  a continuous path of homeomorphisms $\func{\phi_t}{Y}{Y}$ such that
  $\phi_0=\Id_Y$ and $g=\phi_1\circ f$.} in $\R^3$.

The notion can be turned combinatorial by considering link diagrams.
They are generic projections, \ie with regular points and a finite number of transverse double points, of links into the plane $\R^2\times\{0\}$
together with an over/underpassing information for the strands at each
double point.

\begin{theo}[\cite{Reidemeister}]
  Every link admits diagrams and two given diagrams describe the same
  link iff they can be connected by ambiant isotopies in $\R^2$ and a finite number of the following
  Reidemeister moves\footnote{Two diagrams are connected by a
    Reidemeister move if they are identical outside a disk inside which
  they respectively correspond to the given pictures.}:
\[
\begin{array}{ccccc}
\dessin{1.2cm}{R1+} \sim \dessin{1.2cm}{R1} \sim \dessin{1.2cm}{R1-}
&\hspace{.5cm}& \dessin{1.2cm}{R2} \sim \dessin{1.2cm}{R2+} &\hspace{.5cm}&
\dessin{1.2cm}{R3--} \sim \dessin{1.2cm}{R3++}\\
  \textnormal{R1}^+\textnormal{ \& R1}^- && \textnormal{R2} && \textnormal{R3}
\end{array}.
\]
\end{theo}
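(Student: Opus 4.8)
The plan is to establish the two implications separately: existence of diagrams and invariance of link type under the moves are elementary, while the converse --- that ambient isotopic links have diagrams related by the moves --- carries the weight. I work in the piecewise-linear category, equivalent to the smooth one for links, so a link is a polygonal closed $1$-manifold in $\R^3$. For a direction $v\in S^2$, orthogonal projection $\pi_v$ turns such a link into a diagram precisely when no vertex projects into the interior of another edge, no two edges project onto overlapping segments, and no three interior points of distinct edges share an image; each condition fails only on a lower-dimensional subset of $S^2$, so a generic $v$ gives a diagram, proving existence. The easy implication is equally direct: a planar isotopy sends a diagram to a diagram of the same link, and each move $\mathrm{R}1^{\pm}$, $\mathrm{R}2$, $\mathrm{R}3$ is, by inspection of the pictures, the $\pi_v$-image of an ambient isotopy of $\R^3$ supported in a small ball above the disk where the move occurs; composing such isotopies shows that diagrams related by planar isotopy and Reidemeister moves represent the same link.

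For the converse I would invoke the classical fact that two polygonal links are ambient isotopic in $\R^3$ if and only if one passes to the other by finitely many elementary triangle moves: replacing an edge $[A,B]$ by $[A,C]\cup[C,B]$ when the solid triangle $ABC$ meets the link exactly in $[A,B]$, and the inverse operation. It then suffices to realize a single triangle move by planar isotopies and Reidemeister moves. Fix once and for all a projection direction $v$ that is generic, in the sense above, simultaneously for the starting link, every intermediate link, and every triangle occurring in the chosen sequence; this is possible since only finitely many lower-dimensional conditions are imposed on $v\in S^2$.

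The core is a local lemma. Given a triangle move along $ABC$, subdivide $ABC$ into finitely many sub-triangles, fine enough that over each sub-triangle $T'$ the part of the link lying vertically above or below $\pi_v(T')$ consists of at most two arcs and $\pi_v(T')$ contains at most one crossing of the diagram; such a subdivision exists because the link and the edge $[A,B]$ meet the vertical prism over $ABC$ in a compact set with finitely many special points. Performing the triangle moves along the sub-triangles in a suitable order realizes the original move, and each sub-move is, after a planar isotopy, one of: the identity (nothing above or below $T'$), an $\mathrm{R}1$ move (one strand, the new edge cutting off an empty monogon), an $\mathrm{R}2$ move (one transverse strand bounding an empty bigon with the edge), or an $\mathrm{R}3$ move ($\pi_v(T')$ carrying a single crossing past which a strand slides). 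Concatenating over all sub-triangles, then over all triangle moves of the isotopy, exhibits the two given diagrams as related by planar isotopies and Reidemeister moves.

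The step I expect to be the main obstacle is precisely this local lemma: quantifying ``fine enough'', tracking the over/under data so that each sub-move is genuinely a single Reidemeister move rather than a compound rearrangement, and ordering the sub-triangle moves so that every intermediate picture stays a legitimate diagram. An alternative to the whole converse is the smooth one-parameter approach: perturb an ambient isotopy $h_t$ so that $t\mapsto\pi_v\circ h_t$ is a generic arc in the space of immersed diagrams, meeting the discriminant of non-generic projections transversally in finitely many points, each a kink tangency, a bitangency, or a triple point --- precisely $\mathrm{R}1$, $\mathrm{R}2$, $\mathrm{R}3$ --- while between consecutive crossings the diagram only varies by planar isotopy; there the technical core migrates to the transversality statement for arcs of immersions.
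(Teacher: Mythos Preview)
The paper does not prove this statement: it is quoted as a classical result with a citation to Reidemeister's original work, and no argument is given. There is therefore no paper proof to compare your proposal against.

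That said, your sketch is a faithful outline of the standard piecewise-linear proof (existence via generic projection, sufficiency of the moves by inspection, necessity via triangle moves and fine subdivision), and you correctly identify the delicate step as the local lemma controlling over/under data on sub-triangles. The alternative smooth transversality argument you mention is also a standard route. Either would be an acceptable proof of the cited theorem, but neither appears in the paper.
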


A double point with over/underpassing information is called \emph{a
  crossing}.
There are two canonical ways to smooth (or resolve) a crossing:
\[
\vcenter{\hbox{
\xymatrix@!0@R=.8cm@C=3.8cm{
& \dessin{1cm}{0Res}\left(\textrm{ or }\dessin{1cm}{0ResMarked}\right):\ 0\textrm{--resolution}\\
\dessin{1cm}{Cross} \ar[];[ur]!L \ar[];[dr]!L &\\
& \dessin{1cm}{1Res}\left(\textrm{ or }\dessin{1cm}{1ResMarked}\right):\ 1\textrm{--resolution}
}}}.
\]
\noi The second pictures aim at keeping tracks of the resolved
crossing.
If $D$ is a link diagram, we call \emph{resolution of $D$} any map $\func{\phi}{\{\textrm{crossings of
  }D\}}{\{0,1\}}$, or equivalently the diagram $D_\phi$ obtained from $D$
by $\phi(c)$--resolving every crossing $c$ of $D$.
Resolution diagrams are \underline{not} considered up to isotopies and
different maps $\phi$ always lead to different resolution diagrams $D_\phi$.
Note that $D_\phi$ is a union of disjoint circles embedded
in the plane.
An \emph{enhanced resolution} $D_\phi^\sigma$ of $D$ is a resolution $D_\phi$ of $D$ together
with a labelling map $\func{\sigma}{\{\textrm{circles of
  }D_\phi\}}{\{1,X\}}$.
The labels can be seen as elements of $\fract{\F_2[X]\ }/{\ X^2}$, and later,
when dealing with combinations of enhanced diagrams, we will assume
multi-linearity for the labels.
Note that this $X$ is not related in any sense to the eponym Pauli error,
and actually, this notation will be dropped out by the end
of the section.

\begin{figure}
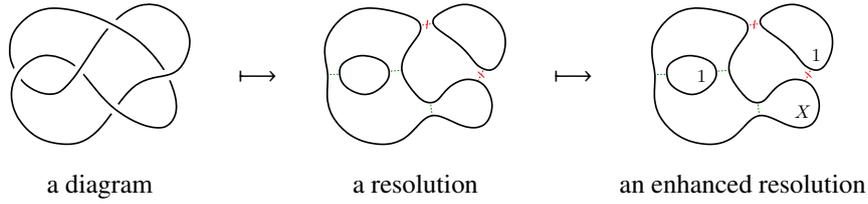

\[
  \begin{array}{ccccc}
 \dessin{2.5cm}{Diagram} & \longmapsto & \dessin{2.5cm}{Resolution} &
 \longmapsto & \dessin{2.5cm}{EnhancedResolution}\\
\textrm{a diagram} && \textrm{a resolution} && \textrm{an enhanced resolution}
  \end{array}
\] 
 \caption{From diagrams to enhanced resolutions}
  \label{fig:FromDiagToRes}
\end{figure}

\subsection{Khovanov chain complex}
\label{sec:khov-chain-complex}

To any diagram $D$ with $n\in\N$ crossings, Khovanov theory associates
a length $n+1$ chain complex
\[
C(D):=\xymatrix{0\ar[r]&C^0\ar[r]^{\p_D}&C^1\ar[r]^{\p_D}&\cdots\ar[r]^{\p_D}&C^n\ar[r]&0}
\]
\noi defined as follows.
For $i\in\llbracket0,n\rrbracket$, $C^i$ is spanned over $\F_2$ by
enhanced resolutions of $D$ with exactly $i$ $1$--resolved crossings.
The map $\p_D$ is the $\F_2$--linear map defined
for a generator $D_\phi^\sigma$ by
\[
\p_D(D_\phi^\sigma)=\disp{\sum_{c\in\phi^{-1}(0)}}\p_c(D_\phi^\sigma)
\]
\noi where $\p_c(D_\phi^\sigma)$ is a sum of enhanced resolutions over
$D_{\phi+\delta_c}$, with $\delta_c$ the Kronecker delta.
The resolution $D_{\phi+\delta_c}$ is nothing but the resolution obtained by changing the smoothing of $c$.
Before stating the enhancing rules, let us note that
$D_{\phi+\delta_c}$ differs from $D_\phi$ by the merging of two
circles into one or the splitting of a circle into two.
Now, the rules are:
\begin{itemize}
\item[-] the untouched circles keep their labels unchanged;
\item[-] if two circles are merging, then the resulting circle is labelled by the product of
the labels in $\fract{\F_2[X]\ }/{\ X^2}$. Note that a $0$--label just means no
contribution;
\item[-] if one $1$--labelled circle is splitting, then there are two
contributions obtained as the two ways to distribute $1$ and $X$ to the
two new circles;
\item[-] if one $X$--labelled circle is splitting, then there is only one
contribution obtained by labelling both new circles by $X$.
\end{itemize}
These rules are summarized in Fig. \ref{fig:Rules1}.

\begin{figure}
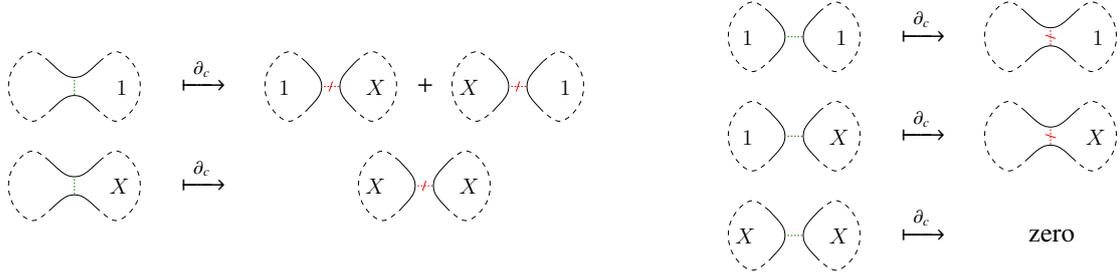

  \[
  \begin{array}{ccc}
    \begin{array}{ccc}
     \dessin{1.2cm}{Split11} & \xmapsto{\ \p_c\ } &
     \dessin{1.2cm}{Split12}+\dessin{1.2cm}{Split13}\\[.5cm]
     \dessin{1.2cm}{Split21} & \xmapsto{\ \p_c\ } &  \dessin{1.2cm}{Split22}
    \end{array}
    & \hspace{.5cm} &
     \begin{array}{ccc}
     \dessin{1.2cm}{Merge11} & \xmapsto{\ \p_c\ } & \dessin{1.2cm}{Merge12}\\[.5cm]
          \dessin{1.2cm}{Merge21} & \xmapsto{\ \p_c\ } &
          \dessin{1.2cm}{Merge22}\\[.5cm]
     \dessin{1.2cm}{Merge31} & \xmapsto{\ \p_c\ } & \textrm{zero}
    \end{array}
  \end{array}
\]
  \caption{Rules for labelling in the differential: {\scriptsize here
      (and throughout the paper),
      only the modified part is depicted, the rest of the resolutions being
    identical on both sides of the arrows}}
  \label{fig:Rules1}
\end{figure}

\begin{prop}[\cite{Khovanov} Prop. 8, \cite{Viro} Th. 5.3.A]
  The map $\p_D$ satisfies $\p_D\circ\p_D=0$.
\end{prop}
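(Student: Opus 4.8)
The plan is to verify $\p_D\circ\p_D=0$ on each generator $D_\phi^\sigma$ by decomposing the double sum into elementary commuting squares. Expanding, $\p_D^2(D_\phi^\sigma)=\sum_{c\in\phi^{-1}(0)}\p_D\big(\p_c(D_\phi^\sigma)\big)$. Every enhanced resolution occurring in $\p_c(D_\phi^\sigma)$ lies over $D_{\phi+\delta_c}$, whose set of $0$--resolved crossings is $\phi^{-1}(0)\setminus\{c\}$; hence the outer $\p_D$ only contributes terms $\p_{c'}$ with $c'\neq c$. This immediately discards the ``diagonal'' terms and leaves
\[
\p_D^2(D_\phi^\sigma)\;=\;\sum_{\{c,c'\}}\big(\p_{c'}\p_c+\p_c\p_{c'}\big)(D_\phi^\sigma),
\]
the sum ranging over unordered pairs of distinct crossings in $\phi^{-1}(0)$. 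Since everything is over $\F_2$, it suffices to show that for each such pair the two orders of composition agree, \ie that the square formed by the four resolutions $D_\phi$, $D_{\phi+\delta_c}$, $D_{\phi+\delta_{c'}}$ and $D_{\phi+\delta_c+\delta_{c'}}$ commutes.

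The second step is to localize the problem. Outside the two small disks carrying $c$ and $c'$ the four resolutions of the square coincide, and the circles disjoint from those disks keep their labels all around the square; so only the at most four circles meeting the two disks, together with their labels in $\fract{\F_2[X]\ }/{\ X^2}$, are at stake. On these circles, changing the smoothing at $c$ (resp. at $c'$) performs either a \emph{merge} $m\colon u\otimes v\mapsto uv$ of two circles into one or a \emph{split} $\Delta$ of one circle into two, with $\Delta(1)=1\otimes X+X\otimes 1$ and $\Delta(X)=X\otimes X$, exactly as prescribed by the rules of Figure~\ref{fig:Rules1}; which of the two occurs is dictated by whether the relevant pair of arcs lies on a single circle or on two distinct ones in the ambient resolution.

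It then remains to run the (finite) case analysis. If the modifications at $c$ and at $c'$ involve disjoint collections of circles, then $\p_c$ and $\p_{c'}$ act on different tensor factors and commute for trivial reasons. Otherwise the two modifications share one or two circles, and in each of the finitely many configurations commutativity of the square reduces to one of the structural identities of the Frobenius algebra $\fract{\F_2[X]\ }/{\ X^2}$: commutativity and associativity of $m$ (when both modifications are merges meeting a common circle), cocommutativity and coassociativity of $\Delta$ (when both are splits), or the Frobenius relation $\Delta\circ m=(m\otimes\mathrm{id})\circ(\mathrm{id}\otimes\Delta)=(\mathrm{id}\otimes m)\circ(\Delta\otimes\mathrm{id})$ (when one modification is a merge and the other a split of circles it meets). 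Each of these identities is checked instantly on the basis $\{1,X\}$. I expect the genuine difficulty here to be not conceptual but organizational: in every configuration one must keep track of precisely which circles of $D_{\phi+\delta_c+\delta_{c'}}$ receive which labels, so that the two compositions are compared correctly; once the configurations are listed this is routine, and it is exactly the point at which labelling mistakes would otherwise creep in.
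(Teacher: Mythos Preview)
The paper does not supply its own proof of this proposition; it merely cites Khovanov and Viro. Your argument is precisely the standard one found in those references: reduce $\p_D^2$ to a sum over unordered pairs of crossings, then verify commutativity of each elementary square via the Frobenius algebra identities for $\fract{\F_2[X]\ }/{\ X^2}$. The outline is correct and complete in spirit, and your closing remark about the organizational bookkeeping being the only real hazard is apt.
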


\begin{remarques}$\ $
  \begin{enumerate}
  \item The construction was originally given with $\Z$--coefficients instead
    of $\F_2$--ones. It can therefore be adapted to any ring.
  \item Khovanov homology is usually defined with a second grading $j$
    on $C(D)$, namely
    $j(D_\phi^\sigma)=|\sigma^{-1}(X)|-|\sigma^{-1}(1)|-|\phi^{-1}(1)|$
    where $|\ .\ |$ stands for cardinality.
    Since the differential $\p_D$ respects this grading $j$, the chain
    complex $C(D)$ splits into several chain complexes, one for each
    value of $j$.
    However, this grading is not relevant for the purpose of the present paper.
  \end{enumerate}
\end{remarques}

\subsection{Change of variable}
\label{sec:change-variable}

With this basis, Khovanov complexes are not really efficient for quantum
codes since non trivial homology elements can easily have small
weight.
To change this matter of fact, we consider another set of generators, where
labels are not anymore $1$ and $X$ but signs $-:=1$ and $+:=1+X$. A
label $+$ for a circle means the sum of the two generators for
which the circle is labelled by 1 or by $X$, all the others circles
being identically labelled.
The differential is then kind of symmetrized as pointed in Fig. \ref{fig:Rules2}.

\begin{figure}
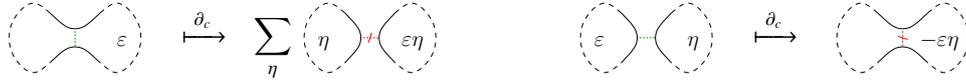

  \[
  \begin{array}{ccc}
    \begin{array}{ccc}
     \dessin{1.2cm}{eSplit1} & \xmapsto{\ \p_c\ } & \disp{\sum_{\eta}}\dessin{1.2cm}{eSplit2}
    \end{array}
    & \hspace{.5cm} &
     \begin{array}{ccc}
     \dessin{1.2cm}{eMerge1} & \xmapsto{\ \p_c\ } & \dessin{1.2cm}{eMerge2}
    \end{array}
  \end{array}
\]
  \caption{Modified rules for labelling in the differential:
    {\scriptsize here, $\e$ and $\eta$ are element of $\{-,+\}$ and the product
      is the obvious one}}
  \label{fig:Rules2}
\end{figure}

\begin{remarque}
  The new set of generators is not anymore graded with regard to the
  second grading $j$. That is essentially why $j$ is not relevant here.
\end{remarque}

\subsection{Reidemeister moves invariance}
\label{sec:reid-moves-invar}

The Khovanov complex $C(D)$ depends heavily on the diagram
$D$, but if considering the homology, then $\Kh(D):=H^*\big(C(D)\big)$ depends essentially on the underlying
link.
Indeed, the following theorem makes explicit the behavior of $\Kh(D)$
under Reidemeister moves.

\begin{figure}
\vspace{1.5cm}
  \[
\hspace{-1.5cm}
\begin{array}{ccc}
  \fbox{
    $ \begin{array}{ccc}
    C\Big(\dessin{1.5cm}{R1}\Big) & \longrightarrow &
    C\Big(\dessin{1.5cm}{R1+}\Big)\\[-.2cm]
    \dessin{1.2cm}{R1+_1} & \longmapsto &
    \dessin{1.2cm}{R1+_2}+\dessin{1.2cm}{R1+_3}\\[1.5cm]
    C\Big(\dessin{1.5cm}{R1+}\Big) & \longrightarrow &
    C\Big(\dessin{1.5cm}{R1}\Big)\\[-.2cm]
    \dessin{1.2cm}{R1+_4} & \longmapsto &
    \dessin{1.2cm}{R1+_5}\\
    \textrm{others} & \longmapsto & \textrm{zero}
    \end{array}$
  }

&\hspace{.5cm}&

  \fbox{
    $\begin{array}{ccc}
    C\Big(\dessin{1.5cm}{R1}\Big) & \longrightarrow &
    C\Big(\dessin{1.5cm}{R1-}\Big)\\[-.2cm]
    \dessin{1.2cm}{R1-_1} & \longmapsto &
    \dessin{1.2cm}{R1-_2}\\[1.5cm]
    C\Big(\dessin{1.5cm}{R1-}\Big) & \longrightarrow &
    C\Big(\dessin{1.5cm}{R1}\Big)\\[-.2cm]
    \dessin{1.2cm}{R1-_2} & \longmapsto &
    \dessin{1.2cm}{R1-_1}\\
    \textrm{others} & \longmapsto & \textrm{zero}
    \end{array}$
}\\
\textnormal{Reidemeister move R1}^+
&&
\textnormal{Reidemeister move R1}^-\\[1cm]

  \fbox{
    $\begin{array}{ccc}
    C\Big(\dessin{1.5cm}{R2}\Big) & \longrightarrow &
    C\Big(\dessin{1.5cm}{R2+}\Big)\\[-.2cm]
    \dessin{1.2cm}{R2_1} & \longmapsto &
    \dessin{1.2cm}{R2_2}+\dessin{1.2cm}{R2_3}\\[1.5cm]
    C\Big(\dessin{1.5cm}{R2+}\Big) & \longrightarrow &
    C\Big(\dessin{1.5cm}{R2}\Big)\\
    \dessin{1.2cm}{R2_2} & \longmapsto &
    \dessin{1.2cm}{R2_1}\\
    \dessin{1.2cm}{R2_4} & \longmapsto &
    \dessin{1.2cm}{R2_5}\\
    \textrm{others} & \longmapsto & \textrm{zero}
  \end{array}$
}
&&
  \fbox{
    $\begin{array}{c}
\begin{array}{ccc}
    C\Big(\dessin{1.5cm}{R3--}\Big) & \longrightarrow &
    C\Big(\dessin{1.5cm}{R3++}\Big)\\[-.2cm]
    \dessin{1.2cm}{R3--_1} & \longmapsto &
    \dessin{1.2cm}{R3--_2}+\dessin{1.2cm}{R3--_3}\\ 
    \dessin{1.2cm}{R3--_4} & \longmapsto &
    \dessin{1.2cm}{R3--_5}+\dessin{1.2cm}{R3--_6}+\dessin{1.2cm}{R3--_7}\\
    \dessin{1.2cm}{R3--_8} & \longmapsto &
    \dessin{1.2cm}{R3--_9}\\   
    \dessin{1.2cm}{R3--_10} & \longmapsto &
    \dessin{1.2cm}{R3--_11}\\
    \textrm{others} & \longmapsto & \textrm{zero}
  \end{array}
    \end{array}$
}\\
\textnormal{Reidemeister move R2}
&&
\textnormal{Reidemeister move R3}
\end{array}
\]
  \caption{Invariance chain maps:  {\scriptsize only
      the part involved in the Reidemeister move is depicted, the rest
    of the diagrams are identical on each side; $\varepsilon$:$\eta$ and
      $\eta$:$\varepsilon$ are the two labels (maybe a sum of)
      obtained when merging/splitting circles with labels
      $\varepsilon$ and $\eta$; overlining a label means
    that it may be modified if, outside the depicted part, its circle is connected
    to the splitting/merging ones; a unresolved crossing
      stands for any of its resolutions, the map is then the natural
      one-to-one one}}
  \label{fig:Invariance}
\end{figure}

\begin{theo}[\cite{Viro} sections 5.6 \& 5.7, \cite{Ito}]\label{prop:Invariance}
  Let $D_1$ and $D_2$ be two link diagrams connected by a Reidemeister
  move (with $D_2$ having greater or equal number of crossings than
  $D_1$).
  Then the chain maps given in Fig. \ref{fig:Invariance} induce
  isomorphisms between $\Kh(D_2)$ and $\Kh(D_1)\{\eta\}$ where
  $\{\ .\ \}$ denotes a shift in the grading and $\eta=1$ if the Reidemeister move
  is $\textnormal{R1}^-$ or $\textnormal{R2}$ and $\eta=0$ otherwise.
\end{theo}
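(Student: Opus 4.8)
The plan is to verify, for each Reidemeister move in turn, that the map displayed in Fig.\ \ref{fig:Invariance} is (i) a chain map and (ii) a quasi-isomorphism, and then to read off the grading shift $\eta$ from bookkeeping of the $j$-grading (equivalently, from the change in crossing number and crossing signs). Since the moves $\textnormal{R1}^\pm$, $\textnormal{R2}$ and $\textnormal{R3}$ are all local — the two diagrams agree outside a disk — every resolution of the bigger diagram $D_2$ decomposes as a choice of resolution on the disk times a resolution of the common exterior, so it suffices to analyse a small model complex supported on the disk and then invoke the Künneth-type principle that a quasi-isomorphism of the local factor induces one after tensoring with the (fixed) exterior part. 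This reduces each case to a finite check.

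First I would treat $\textnormal{R1}^-$ (adding a negative kink): here $D_2$ has one extra crossing, and its complex is, locally, the cone of a map between two copies of the complex of $D_1$ attached via a splitting/merging with an extra small circle. One shows the map ``$\dessin{1.2cm}{R1-_1}\mapsto\dessin{1.2cm}{R1-_2}$, others $\mapsto$ zero'' is a chain map by checking it commutes with $\p_D$ on each generator type (merge, split, untouched), using the labelling rules of Fig.\ \ref{fig:Rules2}; then one exhibits the inverse chain map displayed alongside and checks the two composites are chain-homotopic to the identity, producing the explicit homotopy from the acyclic ``extra'' summand. The case $\textnormal{R1}^+$ is entirely analogous with the roles reversed, which is why $\eta=0$ there and $\eta=1$ for $\textnormal{R1}^-$. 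For $\textnormal{R2}$ one does the same with a $2\times 2$ block structure: the local complex of $D_2$ contains an acyclic subquotient (two of the four local resolutions cancel via an isomorphism of the differential restricted there), and the quotient is isomorphic to the local complex of $D_1$ shifted by one, giving $\eta=1$; I would make the homotopy explicit on the ``others $\mapsto$ zero'' generators.

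The main obstacle is $\textnormal{R3}$. There the local disk carries three crossings, so the model complex has $2^3=8$ local resolutions on each side, and the chain map in Fig.\ \ref{fig:Invariance} is genuinely several-valued (sums of two or three enhanced resolutions appear). The clean way to handle it, which I would follow, is the standard trick: apply the already-established $\textnormal{R2}$-invariance inside the $\textnormal{R3}$ disk to cancel an acyclic subcomplex on each side, identifying both $C(D_1)$ and $C(D_2)$ (locally) with a common smaller complex built over the ``mixed'' resolutions, and check that the Fig.\ \ref{fig:Invariance} map is exactly the induced identification — so no new homotopy needs to be written from scratch, only compatibility with the $\textnormal{R2}$ reductions. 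Since this move is crossing-number preserving and sign preserving, $\eta=0$. Finally I would note that all maps above respect the second grading $j$ up to the stated overall shift $\{\eta\}$ (this is how $\eta$ is pinned down), and that, being local and built from the rules of Fig.\ \ref{fig:Rules1}, they assemble into genuine chain maps on the full complexes; combined with the local quasi-isomorphism statements this yields the claimed isomorphisms $\Kh(D_2)\cong\Kh(D_1)\{\eta\}$, completing the proof. (All of this is of course exactly the content of \cite{Viro} sections 5.6--5.7 and \cite{Ito}, to which the detailed verifications are deferred.)
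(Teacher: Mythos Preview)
The paper does not supply a proof of this theorem: it is stated with references to \cite{Viro} (sections 5.6--5.7) and \cite{Ito}, and no argument is given in the body of the text. Your proposal is therefore not being compared against an in-paper proof but against the cited literature, and your sketch is indeed the standard approach found there: identify the local cone structure, kill an acyclic sub- or quotient complex for $\textnormal{R1}^\pm$ and $\textnormal{R2}$, and reduce $\textnormal{R3}$ to $\textnormal{R2}$-cancellations on both sides. You even say as much in your final parenthetical. So there is nothing to contrast; your outline matches the references the paper defers to.
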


\begin{remarque}
  There is a canonical way to shift Khovanov homology so it becomes
  really invariant under Reidemeister moves (\cite{Khovanov}), but this is not relevant
  for our purpose.
\end{remarque}

\subsection{Basic properties}
\label{sec:basic-properties}

Khovanov homology does behave quite nicely under certain usual operations
on knots.

\begin{prop}[\cite{Khovanov} Cor. 12]
  If $D_1$ and $D_2$ are two link diagrams, then $C(D_1\sqcup D_2)\cong
  C(D_1)\otimes C(D_2)$ so $\Kh(D_1\sqcup D_2)\cong
  \Kh(D_1)\otimes \Kh(D_2)$.
\end{prop}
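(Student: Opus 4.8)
The plan is to exhibit an explicit isomorphism of chain complexes and then deduce the homological statement from the Künneth formula (which has already been stated). First I would observe that a crossing of $D_1 \sqcup D_2$ is either a crossing of $D_1$ or a crossing of $D_2$, since the two diagrams are disjoint; hence the set of crossings of $D_1 \sqcup D_2$ is the disjoint union of those of $D_1$ and $D_2$, and a resolution $\phi$ of $D_1 \sqcup D_2$ is exactly a pair $(\phi_1, \phi_2)$ of resolutions of $D_1$ and $D_2$. Moreover, because the two diagrams occupy disjoint regions of the plane, the circles of $(D_1 \sqcup D_2)_\phi$ are precisely the circles of $(D_1)_{\phi_1}$ together with the circles of $(D_2)_{\phi_2}$, with no interaction. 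Consequently an enhanced resolution $(D_1 \sqcup D_2)_\phi^\sigma$ is the same datum as a pair of enhanced resolutions $\big((D_1)_{\phi_1}^{\sigma_1}, (D_2)_{\phi_2}^{\sigma_2}\big)$, where $\sigma_i$ is the restriction of $\sigma$ to the circles coming from $D_i$.

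Next I would define the candidate isomorphism $\Psi\colon C(D_1 \sqcup D_2) \to C(D_1) \otimes C(D_2)$ on generators by $\Psi\big((D_1 \sqcup D_2)_\phi^\sigma\big) = (D_1)_{\phi_1}^{\sigma_1} \otimes (D_2)_{\phi_2}^{\sigma_2}$ and extend $\F_2$--linearly. By the bijection of enhanced resolutions just described, $\Psi$ sends a basis to a basis, so it is a linear isomorphism in each degree; one also checks that the homological grading matches, since the number of $1$--resolved crossings of $(D_1 \sqcup D_2)_\phi$ is the sum of those of $(D_1)_{\phi_1}$ and $(D_2)_{\phi_2}$, which is exactly the grading convention on a tensor product of complexes. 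It then remains to verify that $\Psi$ is a chain map, i.e. that $\p_{D_1 \otimes D_2} \circ \Psi = \Psi \circ \p_{D_1 \sqcup D_2}$. This is where I would do the only real bookkeeping: the differential $\p_{D_1 \sqcup D_2}$ is the sum over all $0$--resolved crossings $c$ of $\p_c$, and this set of crossings splits into those in $D_1$ and those in $D_2$; a crossing $c$ in $D_1$ only merges or splits circles of $D_1$ and leaves the labels on $D_2$--circles untouched, so $\p_c$ acts as $\p_c^{D_1} \otimes \Id$, and symmetrically for $c$ in $D_2$. Summing gives $\p_{D_1 \sqcup D_2} = \p_{D_1} \otimes \Id + \Id \otimes \p_{D_2}$ after transport by $\Psi$, which is precisely the definition of the differential on $C(D_1) \otimes C(D_2)$.

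Finally, from $C(D_1 \sqcup D_2) \cong C(D_1) \otimes C(D_2)$ the isomorphism $\Kh(D_1 \sqcup D_2) \cong \Kh(D_1) \otimes \Kh(D_2)$ follows immediately by applying the Künneth formula stated above (working over the field $\F_2$, so there is no $\mathrm{Tor}$ term). The main obstacle — and it is a mild one — is making precise the claim that $\p_c$ for a crossing $c$ of $D_i$ really does act only on the $D_i$--circles; this rests on the local nature of the labelling rules in Fig.~\ref{fig:Rules1} together with the observation that a crossing of $D_1$ and a circle of $D_2$ lie in disjoint parts of the plane, so no circle ever straddles both diagrams. Once that locality is granted, the rest is the routine identification of generators and gradings sketched above.
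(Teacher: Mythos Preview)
Your argument is correct and is exactly the standard proof of this fact: identify generators of $C(D_1\sqcup D_2)$ with pairs of generators via the disjointness of the diagrams, check the grading, and verify that the differential splits as $\p_{D_1}\otimes\Id+\Id\otimes\p_{D_2}$ because each $\p_c$ is local to the diagram containing $c$; then apply K\"unneth. There is nothing to compare against in the paper itself: the proposition is stated with a citation to \cite{Khovanov}, Cor.~12, and no proof is given here. (One cosmetic slip: you wrote $\p_{D_1\otimes D_2}$ where you meant the differential on $C(D_1)\otimes C(D_2)$.)
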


\begin{prop}[\cite{Khovanov} Prop. 32]\label{prop:Duality}
  For any link $D$ with $n$ crossings and for every $i\in\llbracket
  0,n\rrbracket$, $\Kh^i(D!)\cong \Kh^\vee_{n-i}(D)$ where $D!$ is the mirror
  image of $D$, \ie the link obtained by swapping the under and the
  over strands at every crossings, and $\vee$ stands for duality.
  Besides, the isomorphism is induced by the generator-to-generator chain map
  $\func{m}{C^\vee_{n-i}(D)}{C^i(D!)}$ defined by
  $m({D^\sigma_\phi}^\vee)=D!_{1-\phi}^{-\sigma}$.
\end{prop}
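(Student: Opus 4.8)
The plan is to exhibit an explicit chain isomorphism and then invoke the general fact that a chain isomorphism induces an isomorphism on (co)homology. Concretely, I would first define the generator-to-generator map $\func{m}{C^\vee_{n-i}(D)}{C^i(D!)}$ by the formula $m({D^\sigma_\phi}^\vee)=D!_{1-\phi}^{-\sigma}$, where $1-\phi$ is the resolution of $D!$ that $0$-resolves exactly those crossings $\phi$ $1$-resolves (and vice versa) and $-\sigma$ is the labelling that swaps the roles of $1$ and $X$ on every circle. The key preliminary observation is purely combinatorial: mirroring a diagram interchanges the two canonical smoothings at every crossing, so the resolution diagrams of $D$ and of $D!$ are literally the same planar unions of circles, just indexed oppositely; hence $\phi\mapsto 1-\phi$ is a bijection from length-$i$ (co)chain generators of $C^\vee(D)$ to length-$(n-i)$ chain generators of $C(D!)$, and it is clearly a linear isomorphism in each degree.

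The heart of the proof is checking that $m$ is a chain map, i.e. that it intertwines $\p^\vee_D$ (the dual Khovanov differential of $D$, which \emph{decreases} resolution degree by merging/splitting in the "reverse" direction) with $\p_{D!}$. I would do this crossing by crossing, using the locality of the differential (Fig.~\ref{fig:Rules1}): it suffices to compare, for a single crossing $c$, the local rule for $\p^\vee_c$ on $D$ against the local rule for $\p_{(1-\phi),c}$ on $D!$. Under mirroring a merge at $c$ in $D!$ corresponds to a split at $c$ in $D$ and conversely; and the label-swap $-\sigma$ (exchanging $1\leftrightarrow X$, equivalently $\F_2[X]/X^2$ with its nontrivial ring automorphism) is exactly what converts the dual of the merge rule (which, being a transpose, \emph{splits}) into the split rule, and the dual of the split rule into the merge rule. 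A short case analysis over the four local pictures (the $1$- and $X$-labelled splits, the three merge cases) shows the two sides agree, so $\p_{D!}\circ m = m\circ \p^\vee_D$.

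Once $m$ is a chain isomorphism, the induced map $m^*$ on homology is an isomorphism $H^i\big(C(D!)\big)\cong H^{n-i}\big(C^\vee(D)\big)$; but by definition $H^{n-i}\big(C^\vee(D)\big)=H_{n-i}\big(C(D)\big)$, which after the notational identification is $\Kh^\vee_{n-i}(D)$. This gives $\Kh^i(D!)\cong\Kh^\vee_{n-i}(D)$ as claimed, with the stated generator-to-generator chain map realizing it. I expect the only genuinely delicate point to be bookkeeping the sign/label conventions: one must be careful that "dualizing" the differential really does transpose merges into splits (over $\F_2$ there are no actual signs, but the distinction between a merge and a split is what matters), and that the $-\sigma$ convention is applied consistently — in particular that it is an involution compatible with the multilinear extension to $\F_2[X]/X^2$ used in Fig.~\ref{fig:Rules1}. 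Everything else is routine: the bijection on generators is immediate from mirror-symmetry of resolutions, and the passage from chain iso to (co)homology iso is standard.
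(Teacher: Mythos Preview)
The paper does not actually supply a proof of this proposition: it is quoted from \cite{Khovanov} (Prop.~32) and only the statement is recorded. Your outline is the standard argument and is essentially correct --- define $m$ on generators, observe it is a degree-by-degree bijection because mirroring swaps the two smoothings, and then verify locally, crossing by crossing, that dualising a merge gives (after the label involution) a split and vice versa. Once $m$ is a chain isomorphism the homology statement is immediate.

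Two small corrections are worth making. First, in the paper's conventions the labels in play are $+$ and $-$ (Section~\ref{sec:change-variable}), so $-\sigma$ means the labelling obtained by swapping $+\leftrightarrow-$ on every circle, not $1\leftrightarrow X$. It happens that the $1\leftrightarrow X$ swap \emph{also} yields a chain isomorphism, but it is a different linear map from the one the proposition names, and it is the $+\leftrightarrow-$ version that is used later (see the proof of Prop.~\ref{prop:MinDist}, ``up to the reversing of all signs''). Second, your parenthetical ``equivalently $\F_2[X]/X^2$ with its nontrivial ring automorphism'' is incorrect: that ring has no nontrivial ring automorphism (any ring map fixes $1$ and must send the nilpotent $X$ to $0$ or $X$). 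The involution you want is only $\F_2$-linear, not multiplicative; fortunately the chain-map check does not need it to be a ring map, only to conjugate the (co)multiplication of the Frobenius algebra into one another, which a direct four-line case analysis confirms.
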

\begin{remarque}
  This analogue of Poincaré duality is of special interest since it enables to deal with
  dual chain complexes while staying in the frame of Khovanov complexes.
\end{remarque}

\subsection{Reduced Khovanov homology}
\label{sec:reduc-khov-homology}

There is a reduced Khovanov homology defined for pointed link, \ie links
with a marked point on it.
The definition is nearly the same except the marked point induces a pointed circle in
every resolution, and we force it to be labelled by $X$, that is the
sum of labels $-$ and $+$.
It leads to the additional labelling rules for the differential given in Fig. \ref{fig:Rules3}.

\begin{figure}
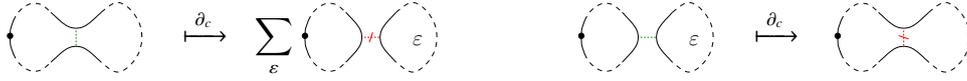

  \[
  \begin{array}{ccc}
    \begin{array}{ccc}
     \dessin{1.2cm}{RSplit11} & \xmapsto{\ \p_c\ } &
     \disp{\sum_\e}\dessin{1.2cm}{RSplit12}
    \end{array}
    & \hspace{.5cm} &
     \begin{array}{ccc}
     \dessin{1.2cm}{RMerge11} & \xmapsto{\ \p_c\ } & \dessin{1.2cm}{RMerge12}
    \end{array}
  \end{array}
\]
  \caption{Extra rules for labelling in the reduced differential:
    {\scriptsize here $\e$ is an element of $\{-,+\}$}}
  \label{fig:Rules3}
\end{figure}
\begin{prop}[\cite{Shumakovitch} Theo. 2.6]
  If $D_\bullet$ is a pointed version of a link diagram $D$, then
  $\Kh(D)\cong \Kh(D_\bullet)\oplus\Kh(D_\bullet)$.
\end{prop}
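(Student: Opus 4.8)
The plan is to realise the reduced complex $C(D_\bullet)$ inside the unreduced one $C(D)$ as the image of a canonical chain endomorphism, deduce a short exact sequence of complexes, and then show that the associated long exact sequence degenerates; the last step is where the $\F_2$--coefficients are essential.

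First I would define $\func{m_X}{C(D)}{C(D)}$ by multiplying, in $\fract{\F_2[X]\ }/{\ X^2}$, the label carried by the marked circle of each enhanced resolution by $X$, leaving all other labels unchanged (and sending the $X$--labelled marked circle to the zero contribution). Multiplication by $X$ is a module endomorphism of the Frobenius algebra $\fract{\F_2[X]\ }/{\ X^2}$, so it commutes with the merging map and, via the Frobenius identity $\Delta(Xa)=(X\otimes 1)\Delta(a)$, with the splitting map; checking the various cases of Fig. \ref{fig:Rules1} then shows $\p_D\circ m_X=m_X\circ\p_D$, i.e. $m_X$ is a chain map. Since $X^2=0$ one has $m_X^2=0$, and because the marked circle's label lives in $\{1,X\}$ one gets $\Im(m_X)=\Ker(m_X)=$ the subspace spanned by resolutions whose marked circle is labelled $X$. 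Comparing the restriction of $\p_D$ to this subcomplex with the reduced rules of Fig. \ref{fig:Rules3}, it is canonically identified with $C(D_\bullet)$. Thus $m_X$ produces a short exact sequence of chain complexes
\[
\xymatrix{0\ar[r]&C(D_\bullet)\ar[r]&C(D)\ar[r]^(.47){m_X}&C(D_\bullet)\ar[r]&0},
\]
and the long exact sequence in homology attached to it relates $\Kh^*(D)$ to two copies of $\Kh^*(D_\bullet)$ through a connecting map $\delta\colon\Kh^i(D_\bullet)\to\Kh^{i+1}(D_\bullet)$. It therefore suffices to prove $\delta=0$, equivalently that this short exact sequence splits after passing to homology.

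To get the splitting I would exhibit a second chain endomorphism $\nu$ of $C(D)$ (the characteristic--$2$ ``reduction'', built from the counit and comultiplication at the marked circle) satisfying $\nu^2=0$, $\Ker(\nu)=\Im(\nu)$, and crucially $m_X\circ\nu+\nu\circ m_X=\Id_{C(D)}$ at the chain level. Granting this, every $c\in C(D)$ decomposes as $c=m_X(\nu c)+\nu(m_X c)$ with $m_X(\nu c)\in\Ker(m_X)$ and $\nu(m_X c)\in\Ker(\nu)$, while $\Ker(m_X)\cap\Ker(\nu)=0$; hence $C(D)=\Ker(m_X)\oplus\Ker(\nu)$ as a direct sum of subcomplexes, both of (graded) dimension $\tfrac12\dim C(D)$, and $\nu$ restricts to a chain isomorphism $\Ker(m_X)\xrightarrow{\ \sim\ }\Ker(\nu)$. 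Taking homology and using $\Ker(m_X)\cong C(D_\bullet)$ gives $\Kh(D)\cong\Kh(D_\bullet)\oplus\Kh(D_\bullet)$, which in particular forces $\delta=0$.

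The main obstacle is precisely the construction of $\nu$ and the verification of $m_X\circ\nu+\nu\circ m_X=\Id$ as chain maps: the naive assignment ``send the marked label $X$ to $1$ and $1$ to $0$'' fails to commute with $\p_D$, so one must use the comultiplicative side of the Khovanov TQFT together with the collapse $a+a=0$ available only over $\F_2$. This is exactly Shumakovich's contribution (\cite{Shumakovitch} Theo. 2.6), and the fact that the analogous splitting is false over $\Z$ or $\Q$ shows that the characteristic--$2$ hypothesis cannot be removed.
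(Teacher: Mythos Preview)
The paper does not supply a proof of this proposition at all: it is stated with an attribution to \cite{Shumakovitch} and then used as a black box. So there is nothing in the paper to compare your argument against beyond the bare citation.

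That said, your outline is precisely Shumakovich's strategy. The first half (that $m_X$ is a chain map with $\Im(m_X)=\Ker(m_X)\cong C(D_\bullet)$, yielding the short exact sequence) is entirely standard and your justification via the Frobenius identity $\Delta(Xa)=(X\otimes 1)\Delta(a)$ is correct. The algebraic deduction of the splitting from the relations $m_X^2=\nu^2=0$ and $m_X\nu+\nu m_X=\Id$ is also clean and correct: these relations force $C(D)=\Ker(m_X)\oplus\Ker(\nu)$ as subcomplexes, with $\nu$ an isomorphism between the summands.

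Where your write-up is honest but incomplete is exactly where you say it is: you do not construct $\nu$. In Shumakovich's paper the operator $\nu$ is built explicitly (it acts on \emph{all} circles, not just the marked one, by a sum of ``divide-by-$X$'' moves) and the verification that it is a chain map and satisfies $m_X\nu+\nu m_X=\Id$ over $\F_2$ is a genuine, if short, computation. As you correctly note, the naive candidate $X\mapsto 1$, $1\mapsto 0$ on the marked circle alone fails to commute with $\p_D$, and the correct $\nu$ relies on cancellations that are specific to characteristic $2$. Since this is the entire content of the result, a complete proof must include that construction and check; as written, your proposal is a correct road map that ultimately defers the decisive step back to the cited source, which is also exactly what the paper does.
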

\begin{prop}\label{prop:ConnectedSum}
  If $D_1$ and $D_2$ are two pointed link diagrams, then $C(D_1\# D_2)\cong
  C(D_1)\otimes C(D_2)$ so $\Kh(D_1\# D_2)\cong
  \Kh(D_1)\otimes \Kh(D_2)$, where $\#$ is the connected sum operation
  done on the two marked points (see Fig. \ref{fig:ConnectedSum}).
\end{prop}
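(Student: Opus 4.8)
The plan is to reduce the statement to the analogous behavior under disjoint union, Prop.~\ref{prop:Duality}'s companion $C(D_1\sqcup D_2)\cong C(D_1)\otimes C(D_2)$, combined with a careful local analysis of what happens near the two marked points when we form $D_1\# D_2$. First I would recall that, in the reduced theory, every resolution of a pointed diagram carries a distinguished circle through the marked point, forced to be labelled $X$ (equivalently, in the $\{-,+\}$ basis, the marked circle is labelled by the formal sum $-\,+\,+$). When we perform the connected sum on the two marked points, a resolution of $D_1\# D_2$ is exactly a choice of resolution $\phi_1$ of $D_1$ and $\phi_2$ of $D_2$, and its set of circles is the disjoint union of the circles of $D_{1,\phi_1}$ and $D_{2,\phi_2}$ \emph{except} that the two marked circles have been spliced into a single marked circle. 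So on the level of crossings there is an obvious bijection between crossings of $D_1\# D_2$ and the union of crossings of $D_1$ and $D_2$, and I would set up the candidate chain isomorphism $\Psi\colon C(D_1)\otimes C(D_2)\to C(D_1\# D_2)$ generator-by-generator by sending $D_{1,\phi_1}^{\sigma_1}\otimes D_{2,\phi_2}^{\sigma_2}$ to the enhanced resolution of $D_1\# D_2$ whose circles inherit the labels of $\sigma_1$ and $\sigma_2$, the spliced marked circle receiving the label forced by the reduced convention.

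The key point making this well-defined and degree-preserving is the bookkeeping with the marked circle's label. Since both marked circles are labelled $X$ and the product $X\cdot X=0$ in $\fract{\F_2[X]\ }/{\ X^2}$, one has to check that the merge of the two marked circles, read in the ``$1,X$'' basis, does not kill everything; the correct statement is the one Shumakovich uses for connected sums, namely that the reduced complex of a pointed diagram is obtained by tensoring over the Frobenius algebra $\fract{\F_2[X]\ }/{\ X^2}$ rather than over $\F_2$, so that $C(D_1\#D_2)$ is $C(D_1)\otimes C(D_2)$ with one copy of the module killed along the marked strand. Concretely, I would phrase the reduced complex of a pointed diagram as $C(D_\bullet)= \widetilde C(D)\otimes_{\F_2[X]/X^2}\F_2$, where $\widetilde C$ is the unnormalized unpointed complex, and then the claim $C(D_1\# D_2)\cong C(D_1)\otimes_{\F_2} C(D_2)$ follows from the associativity/compatibility of the tensor product: forming the connected sum identifies the two marked strands, which algebraically is precisely the operation of tensoring the two module-valued complexes over the algebra $\fract{\F_2[X]\ }/{\ X^2}$ acting on the common marked circle, and reducing once more over that strand gives back a plain $\F_2$-tensor product of the two reduced complexes.

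After establishing that $\Psi$ is a bijection on generators respecting the homological grading, I would verify it is a chain map. This is a purely local check: a crossing $c$ of $D_1\# D_2$ lies entirely inside the $D_1$ part or entirely inside the $D_2$ part (the connected-sum band contains no crossings), so $\p_c$ on $D_1\# D_2$ acts on the circles of that part exactly as it does in $D_i$, with the single caveat that a merge or split may involve the spliced marked circle. Running through the four rules of Fig.~\ref{fig:Rules1} together with the extra reduced rules of Fig.~\ref{fig:Rules3}, one sees the action is identical to the corresponding reduced differential in $D_i$, tensored with the identity on the other factor; summing over all crossings recovers $\p_{D_1}\otimes\Id+\Id\otimes\p_{D_2}$, which is by definition the differential on $C(D_1)\otimes C(D_2)$. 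Hence $\Psi$ is an isomorphism of chain complexes, and applying the K\"unneth formula (valid over the field $\F_2$) yields $\Kh(D_1\# D_2)\cong\Kh(D_1)\otimes\Kh(D_2)$.

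The main obstacle I expect is the sign/label bookkeeping at the marked circle: one must be scrupulous that ``forcing the marked circle to be labelled $X$'' is compatible on both sides under the splice, i.e.\ that the reduced complex really is the $\F_2[X]/X^2$-module tensor product and that identifying the two marked strands does not introduce an extra relation beyond the expected one. Once this algebraic reformulation of ``reduced $=$ quotient/tensor along the marked strand'' is in place, everything else is the same kind of elementary local verification as in the unpointed statement $C(D_1\sqcup D_2)\cong C(D_1)\otimes C(D_2)$, and in fact the proof can be organized so that the unpointed case and the pointed case are handled by one and the same argument, the only difference being whether one tensors over $\F_2$ or over $\fract{\F_2[X]\ }/{\ X^2}$ along the connecting band.
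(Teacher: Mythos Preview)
The paper states this proposition without proof (it is listed among the ``basic properties'' of Khovanov homology, in the same spirit as the companion statement $C(D_1\sqcup D_2)\cong C(D_1)\otimes C(D_2)$), so there is nothing to compare against directly. Your direct generator-by-generator argument in the third paragraph is the natural proof and is correct: resolutions of $D_1\#D_2$ are in bijection with pairs of resolutions of $D_1$ and $D_2$, non-marked circles correspond bijectively, the single spliced marked circle is forced to carry the label $X$ on both sides, and the local check that $\partial_c$ commutes with $\Psi$ (crossing by crossing, distinguishing whether or not the marked circle is involved and invoking Fig.~\ref{fig:Rules3} when it is) goes through exactly as you describe. K\"unneth over $\F_2$ then finishes.

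The second paragraph, however, is an unnecessary detour and is slightly muddled. The worry that ``$X\cdot X=0$ might kill everything'' is a red herring in the reduced setting: the connected sum is a topological splice, not an application of the Frobenius multiplication, so no product of labels is ever taken when forming $\Psi$. The module-theoretic reformulation $C(D_\bullet)\cong\widetilde C(D)\otimes_{\F_2[X]/X^2}\F_2$ and the identification of connected sum with $\otimes_{\F_2[X]/X^2}$ along the marked strand is indeed the conceptual way Khovanov and Shumakovitch phrase this, and it does give the result, but it is more machinery than the statement requires. If you keep only your first and third paragraphs the proof is clean and complete.
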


\begin{figure}
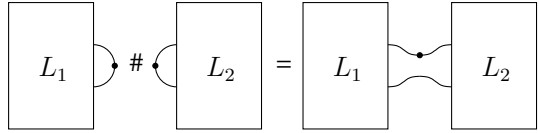

  \[
\dessin{1.8cm}{L1}\ \#\ \dessin{1.8cm}{L2}=\dessin{1.8cm}{L1L2}
\]
  \caption{Connected sum for pointed links}
  \label{fig:ConnectedSum}
\end{figure}

\subsection{Exact sequence}
\label{sec:exact-sequence}

Let $D$ be a link diagram (possibly pointed) and $c$ a crossing of $D$.
We denote by $D_0$ and $D_1$ the diagrams obtained, respectively, by
$0$--resolving and $1$--resolving $c$.
It follows from the definition that:
\begin{prop}\label{prop:KhovCone}
  $C(D)\cong \Cone\big(\func{\p_c}{C(D_0)}{C(D_1)}\big)$.
\end{prop}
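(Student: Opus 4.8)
The plan is to unwind the definitions and exhibit the isomorphism explicitly, so I would first deal with the underlying spaces and then with the differentials. Partition the generators of $C(D)$, namely the enhanced resolutions $D_\phi^\sigma$, according to the value $\phi(c)\in\{0,1\}$. Resolving $c$ by $0$ (resp. $1$) turns $D$ into $D_0$ (resp. $D_1$), and the crossings of $D_0$ (resp. $D_1$) are exactly the crossings of $D$ other than $c$; hence a resolution $\phi$ of $D$ with $\phi(c)=0$ is the same data as a resolution of $D_0$, carrying the same system of disjoint circles and the same admissible $\{1,X\}$-labellings, and symmetrically for $\phi(c)=1$ and $D_1$. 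This gives a bijection between the distinguished basis of $C(D)$ and the disjoint union of the distinguished bases of $C(D_0)$ and $C(D_1)$, hence an identification of underlying $\F_2$-vector spaces.

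Next I would check that this identification respects the homological grading in the way the cone requires. The degree of $D_\phi^\sigma$ in $C(D)$ is $|\phi^{-1}(1)|$. When $\phi(c)=0$ this equals the number of $1$-resolved crossings among those of $D_0$, i.e. the degree of the corresponding generator in $C(D_0)$; when $\phi(c)=1$ it equals $1$ plus the degree of the corresponding generator in $C(D_1)$. Therefore $C^i(D)=C^i(D_0)\oplus C^{i-1}(D_1)$ for every $i$, which is exactly the underlying graded space of $\Cone(f)$ for a prospective chain map $f\colon C(D_0)\to C(D_1)$.

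Then I would decompose $\p_D$ under this splitting. On a generator with $\phi(c)=0$, the defining sum $\p_D(D_\phi^\sigma)=\sum_{c'\in\phi^{-1}(0)}\p_{c'}(D_\phi^\sigma)$ splits off the single term $\p_c(D_\phi^\sigma)$, which changes the smoothing of $c$ and so lands in the span of the enhanced resolutions with $\phi(c)=1$, i.e. in $C(D_1)$; the remaining sum ranges over crossings $c'\neq c$, uses only the local merge/split rules away from $c$, and term by term is precisely $\p_{D_0}$ acting inside $C(D_0)$. On a generator with $\phi(c)=1$, the index set $\phi^{-1}(0)$ does not contain $c$, so $\p_D$ restricts to $\p_{D_1}$ on $C(D_1)$. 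Hence in the decomposition above $\p_D$ has matrix $\left(\begin{array}{cc}\p_{D_0}&0\\ \p_c&\p_{D_1}\end{array}\right)$, which is the cone differential with $f=\p_c$.

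Finally I would observe that the only statement that is not a bookkeeping identity — namely that $C(D_0)$, $C(D_1)$ are chain complexes and $\p_c\colon C(D_0)\to C(D_1)$ is a chain map, $\p_{D_1}\circ\p_c=\p_c\circ\p_{D_0}$ (no sign, over $\F_2$) — is forced: expanding $\p_D\circ\p_D=0$ (the proposition of Khovanov and Viro quoted above, \cite{Khovanov}, \cite{Viro}) on the $C(D_0)$-summand and reading off its $C(D_1)$-component yields exactly that relation, while the $C(D_0)$-component yields $\p_{D_0}^2=0$ and the computation on the $C(D_1)$-summand yields $\p_{D_1}^2=0$. Thus $C(D)$, differential and grading included, is literally $\Cone(\func{\p_c}{C(D_0)}{C(D_1)})$. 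I do not expect a genuine obstacle; the only point demanding care is to confirm that the labelling conventions on the resolutions of $D_0$ and of $D_1$ are exactly the restrictions of those on $D$, which is immediate since the circles and the merge/split rules are local and no longer involve the crossing $c$ once it has been resolved.
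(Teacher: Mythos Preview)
Your proof is correct and is precisely the unwinding of definitions that the paper has in mind; the paper itself gives no proof beyond the sentence ``It follows from the definition that'', and your argument is exactly the detailed verification this sentence summarizes. Nothing to add.
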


If denoting by $\func{\alpha}{C(D_1)}{C(D)}$ and
$\func{\beta}{C(D)}{C(D_2)}$ the natural injection and surjection,
then Prop. \ref{prop:ConeExactSequence} implies:

\begin{cor}[\cite{Viro} section 6.2]
The long sequence
  \[
  \xymatrix{\cdots\ar[r]^(.35){\p^*_c}&\Kh^{i-1}(D_1)\ar[r]^(.55){\alpha^*}&\Kh^i(D)\ar[r]^{\beta^*}&\Kh^i(D_0)\ar[r]^(.45){\p_c^*}&\Kh^i(D_1)\ar[r]^(.6){\alpha^*}&\cdots}.
  \]
\noi is exact.
\end{cor}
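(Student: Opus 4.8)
The plan is simply to assemble the two results that immediately precede the statement. By Prop. \ref{prop:KhovCone} we have an isomorphism of chain complexes $C(D)\cong \Cone\big(\func{\p_c}{C(D_0)}{C(D_1)}\big)$, so it is enough to feed this identification into the long exact sequence of a cone, Cor. \ref{prop:ConeExactSequence}, and then to recognise that the three families of maps it produces are exactly $\alpha^*$, $\beta^*$ and $\p_c^*$.

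Concretely, I would apply Cor. \ref{prop:ConeExactSequence} with $C_1=C(D_0)$, $C_2=C(D_1)$ and $f=\p_c$. It yields the exact sequence
\[
\cdots\xrightarrow{\p_c^*} H^{i-1}\big(C(D_1)\big)\xrightarrow{\iota_i^*} H^i\big(\Cone(\p_c)\big)\xrightarrow{\pi_i^*} H^i\big(C(D_0)\big)\xrightarrow{\p_c^*} H^i\big(C(D_1)\big)\xrightarrow{\iota_{i+1}^*}\cdots,
\]
and substituting $H^*\big(\Cone(\p_c)\big)=\Kh^*(D)$ and $H^*\big(C(D_j)\big)=\Kh^*(D_j)$ turns this into precisely the announced sequence. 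Moreover Cor. \ref{prop:ConeExactSequence} already states that the connecting maps are the maps induced in homology by $f=\p_c$, so that arrow is indeed $\p_c^*$.

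What remains is to identify $\iota_i^*$ with $\alpha^*$ and $\pi_i^*$ with $\beta^*$, and this is the only point that asks for a moment's care. At the chain level the cone gives $C(D)^i=C(D_0)^i\oplus C(D_1)^{i-1}$ — a resolution of $D$ with $i$ ones is a resolution of $D_0$ with $i$ ones or a resolution of $D_1$ with $i-1$ ones, according to the smoothing chosen at $c$ — and the short exact sequence underlying Cor. \ref{prop:ConeExactSequence} is $0\to C(D_1)^{i-1}\to C(D)^i\to C(D_0)^i\to 0$, with $\iota_i$ the inclusion of the $C(D_1)$--summand and $\pi_i$ the projection onto the $C(D_0)$--summand. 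Under the identification of Prop. \ref{prop:KhovCone} these two generator-to-generator maps are exactly the natural injection $\func{\alpha}{C(D_1)}{C(D)}$ and the natural surjection $\func{\beta}{C(D)}{C(D_0)}$ of the statement; hence $\iota_i^*=\alpha^*$ and $\pi_i^*=\beta^*$, and the corollary follows. The argument is pure bookkeeping; the one thing I would double-check is the role assignment in the cone, namely that $C(D_0)$ (and not $C(D_1)$) plays the part of the quotient factor $C_1$, so that the degree shift by one lands on $\Kh^{i-1}(D_1)$ rather than on $\Kh^{i-1}(D_0)$.
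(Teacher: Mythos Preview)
Your argument is correct and is exactly the paper's own approach: the corollary is stated as an immediate consequence of Prop.~\ref{prop:KhovCone} together with Cor.~\ref{prop:ConeExactSequence}, with $\alpha$ and $\beta$ being by definition the natural injection and surjection of the cone's short exact sequence. Your extra care in checking that $C(D_0)$ (not $C(D_1)$) plays the role of the quotient factor is well placed and correctly resolved.
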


\subsection{Weight considerations}
\label{sec:weight-cons}
As far as the author knows, weight of representives for non-zero elements in Khovanov
homology have not been studied yet.
This section aims at presenting some first thoughts toward this direction.

For every chain complex $C:=\oplus_{i\in\Z}C^i$ and every integer
$i\in\Z$, we denote by
$d^i_C:=\min\big\{|x|\ \big|\ x\in C^i, 0[x]\in H^i(C)\setminus\{0\}\big\}$.
In the case of Khovanov homology, we will write, for a diagram $D$ and
an integer $i\in\N$, $d^i_D$ for $d^i_{C(D)}$.
\begin{prop}\label{prop:WeightMap}
  Let $C_1$ and $C_2$ be two chain complexes.
  If a chain map (which possibly shifts the homological grading) $\func{\psi}{C^i_1}{C^j_2}$, with $i,j\in\N$, induces an
  injective map in homology, then $kd^i_{C_1}\geq d^j_{C_2}$ where
  $k:=\max\big\{|\psi(x)|\ \big|\ x\textrm{ generator of
  }C^i_1\big\}$.\\
  Moreover, if $k=1$, if the map $\psi$ is also injective and if a minimally
  weighted homology-surviving element of $C^j_2$ is on the image of
  $\psi$, then the
  inequality becomes an equality.
\end{prop}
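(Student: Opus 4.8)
The plan is to exploit directly the definitions of $d^i_{C_1}$ and $d^j_{C_2}$ together with the hypothesis that $\psi^*$ is injective. First I would pick a representative $x\in C_1^i$ achieving the minimum, i.e.\ with $|x|=d^i_{C_1}$ and $[x]\neq 0$ in $H^i(C_1)$; writing $x=\sum_{b\in S}b$ as a sum of at most $d^i_{C_1}$ basis elements, linearity of $\psi$ gives $\psi(x)=\sum_{b\in S}\psi(b)$, and since each $\psi(b)$ is a combination of at most $k$ generators of $C_2^j$, we get $|\psi(x)|\leq k\cdot d^i_{C_1}$. Because $\psi$ is a chain map, $\psi(x)\in\Ker(\p_2)\cap C_2^j$, and because $\psi^*$ is injective and $[x]\neq0$, we have $[\psi(x)]\neq0$ in $H^j(C_2)$. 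Hence $\psi(x)$ is an admissible competitor in the minimum defining $d^j_{C_2}$, so $d^j_{C_2}\leq|\psi(x)|\leq k\,d^i_{C_1}$, which is the desired inequality.

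For the equality case, assume $k=1$, $\psi$ injective, and that some minimally weighted homology-surviving $y\in C_2^j$ lies in $\Im(\psi)$, say $y=\psi(x_0)$ with $|y|=d^j_{C_2}$ and $[y]\neq0$. The inequality already proved gives $d^j_{C_2}\leq d^i_{C_1}$, so it suffices to show $d^i_{C_1}\leq d^j_{C_2}$. Here the point is that with $k=1$, $\psi$ sends each generator of $C_1^i$ to exactly one generator of $C_2^j$, so $|\psi(z)|=|z|$ for every $z\in C_1^i$ (injectivity of $\psi$ guarantees that distinct generators in the support of $z$ are not identified, so no cancellation occurs); in particular $|x_0|=|\psi(x_0)|=|y|=d^j_{C_2}$. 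It remains to check $[x_0]\neq0$ in $H^i(C_1)$: since $\psi$ is a chain map, $\p_2(\psi(x_0))=\psi(\p_1(x_0))$, and this is $\p_2(y)=0$; as $\psi$ is injective this forces $\p_1(x_0)=0$, so $x_0$ is a cycle, and if $[x_0]=0$ then by the chain-map property $[y]=[\psi(x_0)]=\psi^*[x_0]=0$, contradicting $[y]\neq0$. Thus $x_0$ is a cycle with $[x_0]\neq0$ and $|x_0|=d^j_{C_2}$, whence $d^i_{C_1}\leq d^j_{C_2}$, and combined with the first part, $d^i_{C_1}=d^j_{C_2}$.

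The only mildly delicate point — and the one I would state carefully rather than gloss over — is the claim $|\psi(x)|\leq k|x|$ in the general case and $|\psi(z)|=|z|$ in the $k=1$ injective case: one must be sure that the generators appearing in the various $\psi(b)$ do not cancel in a way that would make $|\psi(x)|$ strictly smaller, which would actually only help the inequality, and conversely that in the $k=1$ case injectivity really does prevent any collapse of the support. No real obstacle arises here, since over $\F_2$ cancellation can only decrease weight, which is in the favourable direction for the first inequality, and injectivity with $k=1$ is exactly what rules out the unwanted collapse for the equality statement. Everything else is a direct unwinding of the definition of $d^\bullet$ and of ``chain map inducing an injective map in homology''.
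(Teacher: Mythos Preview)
Your proof is correct and follows essentially the same approach as the paper's: pick a minimal-weight representative and push it through $\psi$ for the inequality, then pull back a minimal-weight $y$ through $\psi$ for the equality case. Your write-up is in fact a bit more careful than the paper's on the point $|\psi(z)|=|z|$ when $k=1$ and $\psi$ is injective; the only tiny imprecision is that ``$k=1$'' alone allows $\psi(b)=0$ for a generator $b$, and it is injectivity (which you do invoke) that rules this out --- but this is exactly the kind of thing you flag in your final paragraph, so no real gap.
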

\begin{proof}
  Let $x\in C^i_1$ be such that $[x]\neq 0$ and $|x|=d^i_{C_1}$.
  In one hand, we have $|\psi(x)|\leq k|x|$ but on the other hand,
  since $\psi^*$ is injective, $\psi^*\big([x]\big)=\big[\psi(x)\big]\neq 0$ so
  $|\psi(x)|\geq d^j_{C_2}$.

  Now, if all the conditions of the second part of the statement hold,
  we can find $y\in C^j_2$ and $x\in C^i_1$ such that $[y]\neq
  0$, $|y|$ is minimal and $\psi(x)=y$.
  Then $\psi\big(\p_{C_1}(x)\big)=\p_{C_2}(y)=0$, but $\psi$ is
  injective so $\p_{D_1}(x)=0$ and since $\psi^*\big([x]\big)=[y]\neq0$, $[x]\neq 0$.
  But $\psi$ is injective and $k=1$ so $|y|=\big|\psi(x)\big|=|x|$.
  It follows that $d^i_{C_1}\leq d^j_{C_2}$ and hence $d^i_{C_1}=d^j_{C_2}$.
\end{proof}

\begin{cor}\label{cor:ReidemeisterWeight}
  With obvious notation for diagrams differing from Reidemeister moves, we have
  for any $i\in\N$
  \[
  \begin{array}{ccc}
    d^i_{\dessin{.3cm}{R1+b}}=2d^i_{\dessin{.3cm}{R1}} & \hspace{1cm} &
    d^{i+1}_{\dessin{.3cm}{R1-b}}=d^i_{\dessin{.3cm}{R1}}\\[.5cm]
    \frac{1}{3}d^i_{\dessin{.3cm}{R2}}\leq
    d^{i+1}_{\dessin{.3cm}{R2+}}\leq2d^i_{\dessin{.3cm}{R2}} &&
    \frac{1}{8}d^i_{\dessin{.3cm}{R3++}}\leq
    d^i_{\dessin{.3cm}{R3--}}\leq8d^i_{\dessin{.3cm}{R3++}}
  \end{array}.
  \]
\end{cor}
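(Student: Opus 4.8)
The plan is to feed the explicit invariance chain maps of Fig.~\ref{fig:Invariance} into Prop.~\ref{prop:WeightMap}. For a Reidemeister move turning a diagram $D_1$ into a diagram $D_2$, that figure provides two chain maps $f\colon C(D_1)\to C(D_2)$ and $g\colon C(D_2)\to C(D_1)$, each changing the homological grading by the integer $\eta\in\{0,1\}$ of Theorem~\ref{prop:Invariance} (in opposite directions); moreover, that theorem guarantees that both of them induce isomorphisms, hence in particular \emph{injections}, in homology. So the first assertion of Prop.~\ref{prop:WeightMap} applies to each of $f$ and $g$ and yields two inequalities of the shape $d^{\bullet}_{D_2}\leq k_f\,d^{\bullet}_{D_1}$ and $d^{\bullet}_{D_1}\leq k_g\,d^{\bullet}_{D_2}$ (with the grading shifts accounted for as in Theorem~\ref{prop:Invariance}), where $k_f$ (resp.\ $k_g$) is the largest number of basis enhanced resolutions occurring in the image of a single generator under $f$ (resp.\ $g$). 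It then remains to compute the constants from the figure and to upgrade the two R1 cases to equalities. For R3 one also needs a map $C(R3^{++})\to C(R3^{--})$; since the R3 move relates the same pair of diagrams in either direction, running Fig.~\ref{fig:Invariance} ``backwards'' supplies one.

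Computing the constants: each merging or splitting performed during the move produces at most two labels for the affected circle (the ``maybe a sum of'' of the caption of Fig.~\ref{fig:Invariance}), so a single generator of $C(D_1)$ is sent to a sum of at most $2^{c}$ basis enhanced resolutions, where $c$ is the number of crossings altered by the move; since R1$^{\pm}$, R2 and R3 alter $1$, $2$ and $3$ crossings, this already gives $k\leq 2,\,4,\,8$ respectively. Inspecting the explicit images in Fig.~\ref{fig:Invariance} sharpens this to $k_f=k_g=1$ for R1$^-$, to $k_f=2$ and $k_g=1$ for R1$^+$, to $k_f=2$ and $k_g=3$ for R2, and to $k_f=k_g=8$ for R3. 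Plugging these into the two inequalities above (with $\eta=0$ for R1$^+$ and R3, $\eta=1$ for R1$^-$ and R2) gives at once the two inequality rows of the corollary, the equality $d^{i+1}_{R1^-}=d^i_{R1}$ (the two inequalities with $k_f=k_g=1$ squeeze it), and the ``$\leq$'' half of $d^i_{R1^+}=2\,d^i_{R1}$.

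The only point not handled by Prop.~\ref{prop:WeightMap} as a black box is the lower bound $d^i_{R1^+}\geq 2\,d^i_{R1}$: the reverse map $g$ has $k_g=1$ but is not injective, so it only yields $d^i_{R1^+}\geq d^i_{R1}$, and the ``moreover'' clause of Prop.~\ref{prop:WeightMap} does not apply directly. To recover the factor $2$ I would exploit the structure of the R1$^+$ invariance map. A direct computation in the $\{-,+\}$ basis shows that $f\colon C(R1)\to C(R1^+)$ is injective with $|f(x)|_\BB=2\,|x|_\BB$ for \emph{every} chain $x$: on each generator $b$, $f(b)$ is a sum of exactly two enhanced resolutions, and the supports of the $f(b)$, $b$ a generator, remain pairwise disjoint, so no cancellation can occur. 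Furthermore, delooping the curl (Gaussian elimination on the crossing it creates) splits $C(R1^+)=\Im(f)\oplus Z$ as a direct sum of chain complexes with $Z$ acyclic. Given a minimally weighted homology survivor $y\in C^i(R1^+)$, write $y=f(a)+z$ with $z\in Z$; then $\p a=0$ and, since $Z$ is acyclic and $f^*$ is an isomorphism, $[a]$ is the class corresponding to $[y]\neq 0$, so $[a]\neq 0$. Hence $d^i_{R1^+}=|y|_\BB\geq|f(a)|_\BB=2\,|a|_\BB\geq 2\,d^i_{R1}$, which together with the ``$\leq$'' half gives the equality.

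I expect this last structural input --- the splitting $C(R1^+)=\Im(f)\oplus Z$ off an acyclic complex, together with the verification that the chain map of Fig.~\ref{fig:Invariance} really is the inclusion of the surviving summand carrying the exact weight-doubling identity --- to be the main obstacle; the rest is a mechanical application of Prop.~\ref{prop:WeightMap} together with the bookkeeping of how many generators each arrow of Fig.~\ref{fig:Invariance} produces.
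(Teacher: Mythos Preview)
Your handling of $\textnormal{R1}^-$, $\textnormal{R2}$, $\textnormal{R3}$, and of the inequality $d^i_{\textnormal{R1}^+}\le 2\,d^i_{\textnormal{R1}}$ is correct and is exactly how the paper proceeds: feed the explicit quasi-isomorphisms of Fig.~\ref{fig:Invariance} into Prop.~\ref{prop:WeightMap} and read off the constants.

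The gap is in your lower bound $d^i_{\textnormal{R1}^+}\ge 2\,d^i_{\textnormal{R1}}$. From $y=f(a)+z$ you assert $|y|_\BB\ge|f(a)|_\BB$, but a direct-sum decomposition $C(\textnormal{R1}^+)=\Im(f)\oplus Z$ of \emph{chain complexes} yields no such weight inequality unless the $\BB$--supports of $\Im(f)$ and of $Z$ are disjoint --- and here they cannot be. As you yourself note, the pairs $f(b)$ (for $b$ a generator) partition the entire $0$--resolved basis, so the $\BB$--support of $\Im(f)$ is all of it; yet for dimension reasons any complementary subcomplex $Z$ must meet the $0$--resolved part nontrivially. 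Concretely, a single $0$--resolved generator $y$ already has $|y|_\BB=1<2=|f(a)|_\BB$. The hypothesis $[y]\neq 0$ is exactly what must exclude such examples, and your splitting argument never invokes it.

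The paper uses that hypothesis directly. Write a homology survivor $x=a_++a_-+b$ according to the label $\pm$ on the extra circle ($b$ being the $1$--resolved part). Projecting $\p x=0$ onto the $1$--resolved summand and reading the result back in $C^i$ of the simpler diagram gives an equation $\widetilde A_++\widetilde A_-+\p(\widetilde B)=0$, where $\widetilde A_\pm$ are the shadows of $a_\pm$; hence $[\widetilde A_+]=[\widetilde A_-]$. Since $\widetilde A_+$ is precisely the image of $x$ under the backward quasi-isomorphism, this common class is nonzero, so $|a_+|=|\widetilde A_+|\ge d^i_{\textnormal{R1}}$ \emph{and} $|a_-|=|\widetilde A_-|\ge d^i_{\textnormal{R1}}$, whence $|x|\ge|a_+|+|a_-|\ge 2\,d^i_{\textnormal{R1}}$. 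The point your splitting misses is that \emph{both} sign-sectors of the $0$--resolved part are forced to carry a full nontrivial cycle, not just their sum.
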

\begin{proof}
  Most of the statement is a direct application of
  Prop. \ref{prop:Invariance} and \ref{prop:WeightMap}.
  Only $d^i_{\dessin{.3cm}{R1+b}}\geq 2d^i_{\dessin{.3cm}{R1}}$ needs a
  further argumentation.
  Let $x\in C^i(\dessin{.8cm}{R1+})$ be a representative of a non-zero
  element of the homology.
  We can decompose it as $x=a_++a_-+b$ with $a_+$ (resp. $a_-$) a sum of generators of the form $\dessin{.8cm}{R1+_4}$
  (resp. $\dessin{.8cm}{R1+_2}$) and $b$ a sum of generators
  of the form $\dessin{.8cm}{R1+_6}$.
  Since $x$ represents an element of the homology, we know that
  $\p_{\dessin{.3cm}{R1+b}}(x)=0$.
  Looking at the part which lies in resolutions of the form
  $\dessin{.8cm}{R1+_7}$, we obtain
  $A_++A_-+\p_{\dessin{.3cm}{R1+_7}}(b)=0$ where $A_-$ (resp. $A_+$) is
  an element of $\dessin{.7cm}{R1+_7}$
  obtained from $a_-$ (resp. $a_+$) by removing the ``$-$''--labelled
  circle and performing a small isotopy
  (resp. removing the ``$+$''--labelled
  circle, inverting the sign of $\varepsilon$ and performing a small isotopy).
  In particular $|A_+|=|a_+|$ and $|A_-|=|a_-|$.
  Applying backward the small isotopy, we obtain
  $\widetilde{A}_++\widetilde{A}_-+\p_{\dessin{.3cm}{R1}}(\widetilde{B})=0$
  in $C^i(\dessin{.7cm}{R1})$.
  We deduce that $[\widetilde{A}_+]=[\widetilde{A}_-]$ in
  $\Kh^i(\dessin{.7cm}{R1})$.
  But the image of $x$ under the $\textnormal{R1}^+$--chain
  quasi-isomorphism is precisely $\widetilde{A}_+$.
  So $[\widetilde{A}_+]=[\widetilde{A}_-]\neq 0$ and
  $|\widetilde{A}_+|,|\widetilde{A}_-|\geq d^i_{\dessin{.3cm}{R1}}$.
  Finally, $|x|\geq
  |a_+|+|a_-|=|\widetilde{A}_+|+|\widetilde{A}_-|\geq 2d^i_{\dessin{.3cm}{R1}}$.
\end{proof}
\begin{remarque}
  Computations and the fact that awkward generators are part of acyclic subcomplexes suggest that those naïve
  bounds are far from being sharp for Reidemeister moves R2 and R3. 
\end{remarque}
\begin{question}\label{quest:ConjReid}
  Do Reidemeister moves R2 always double minimal distances, and do
  Reidemeister moves R3 always preserve it ? If true, Khovanov
  homology would hide inner invariants on each degree supporting a
  non trivial homology.
\end{question}



\section{Unknot codes}
\label{sec:unknot-codes}

For every $\ell\in\N$, we consider the following diagram $\Duk{\ell}$ of the
pointed unknot with $2\ell$ crossings:

\[
\dessin{1cm}{DiagramUnknot}.
\]

We call \emph{$\ell^\textrm{th}$ unknot code} the code obtained from
$\big(C^{\ell-1}(\Duk{\ell})\xrightarrow{\p_{\Duk{\ell}}} C^\ell(\Duk{\ell})\xrightarrow{\p_{\Duk{\ell}}} C^{\ell+1}(\Duk{\ell})\big)$.
Its parameters are denoted by $\llbracket n_\ell;k_\ell;d_\ell\rrbracket$.

\subsection{Length}
\label{sec:length-unknot}

\begin{prop}
  $n_\ell\sim \frac{3^{2\ell+1}}{\sqrt{8\pi \ell}}$ as $\ell$ tends to infinity.
\end{prop}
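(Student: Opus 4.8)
The plan is to compute $n_\ell=\dim_{\F_2}C^\ell(\Duk{\ell})$ explicitly from the combinatorial definition of the Khovanov complex, and then to estimate the binomial sum that comes out. By the construction of Section~\ref{sec:khov-chain-complex}, $C^\ell(\Duk{\ell})$ is freely generated over $\F_2$ by the enhanced resolutions of $\Duk{\ell}$ in which exactly $\ell$ of the $2\ell$ crossings are $1$--resolved; since $\Duk{\ell}$ is a \emph{pointed} diagram, an enhanced resolution is a resolution $\phi$ together with a labelling of its circles in which the label of the marked circle is forced, so that
\[
n_\ell=\sum_{\phi\,:\,|\phi^{-1}(1)|=\ell}2^{\,c(\phi)-1},
\]
where $c(\phi)$ is the number of circles of the planar diagram $(\Duk{\ell})_\phi$. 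Everything thus reduces to understanding the resolution diagrams of $\Duk{\ell}$.

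Here I would use the fact that $\Duk{\ell}$ is built from two twist regions of $\ell$ crossings (of opposite handedness) closed off at the ends. Resolving $\Duk{\ell}$ amounts to recording, in each region, which crossings receive the ``turnback'' smoothing $\asymp$ rather than the parallel one; because the two regions use opposite $0/1$ conventions, the requirement that exactly $\ell$ crossings be $1$--resolved forces them to carry the same number $p$ of turnbacks, for some $0\le p\le\ell$, and there are then exactly $\binom{\ell}{p}^2$ admissible resolutions. The remaining local point is the count of $c(\phi)$: in the elementary Temperley--Lieb reduction a twist region with $p$ turnbacks collapses to a single turnback together with $p-1$ free loops (for $p\ge1$), so assembling the two regions and closing up the diagram gives $c(\phi)=2p+1$, hence $c(\phi)-1=2p$ --- the cases $p=0$ and the two extremities of the diagram being the only ones needing separate inspection. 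Summing the weight $2^{2p}$ against $\binom{\ell}{p}^2$ yields the weighted sum of squares of binomial coefficients
\[
n_\ell=\sum_{p=0}^{\ell}\binom{\ell}{p}^{2}4^{\,p}.
\]
I expect this step --- turning the picture of $\Duk{\ell}$ into the exact circle count $c(\phi)$ of every admissible resolution --- to be the main obstacle.

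It then remains to extract the asymptotics. I would pass to the ordinary generating function, which telescopes to
\[
\sum_{\ell\ge0}\Bigl(\sum_{p=0}^{\ell}\binom{\ell}{p}^{2}4^{\,p}\Bigr)z^\ell
=\frac{1}{\sqrt{1-10z+9z^2}}=\frac{1}{\sqrt{(1-z)(1-9z)}},
\]
and apply transfer (singularity) analysis at the dominant singularity $z=\tfrac19$: there $(1-z)^{-1/2}$ contributes the constant $\sqrt{9/8}$ while $(1-9z)^{-1/2}$ contributes $\binom{2\ell}{\ell}(9/4)^\ell\sim 9^\ell/\sqrt{\pi\ell}$, so that $n_\ell\sim \sqrt{9/8}\cdot 9^\ell/\sqrt{\pi\ell}=3^{2\ell+1}/\sqrt{8\pi\ell}$. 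Alternatively one applies Stirling's formula directly to the summand $\binom{\ell}{p}^2 4^p$, which peaks at $p\sim\tfrac23\ell$, and replaces the sum by the corresponding Gaussian integral (Laplace's method); this is the kind of analytic estimate gathered in the technical appendix. Either way one concludes $n_\ell\sim\dfrac{3^{2\ell+1}}{\sqrt{8\pi\ell}}$, as claimed.
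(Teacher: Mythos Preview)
Your closed formula $n_\ell=\sum_{p=0}^\ell\binom{\ell}{p}^24^p$ coincides with the paper's $\sum_{r=0}^\ell\left[\binom{\ell}{r}2^r\right]^2$, and your asymptotic extraction is correct. But the combinatorial step that gets you there is carried out on the wrong diagram. The paper's $\Duk{\ell}$ is a string of $2\ell$ Reidemeister--I curls ($\ell$ of each sign): this is exactly what is used later to compute $k_\ell$ and $d_\ell$ via Corollary~\ref{cor:ReidemeisterWeight}, and such a diagram cannot be reached from $\Duk{\ell+1}$ by two R1 moves if it were a two--strand twist closure. For the curl diagram the circle count is immediate and needs no Temperley--Lieb reduction: each curl independently contributes $0$ or $1$ extra circle according to its resolution, so with $r$ ``circle--creating'' resolutions on the left block one is forced to have $r$ ``circle--preserving'' resolutions on the right block, giving $c(\phi)=2r+1$ directly. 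Your turnback argument, by contrast, is the analysis of a $2$--braid closure (two twist boxes in series), which happens to produce the same formula $c(\phi)=2p+1$ but is not the diagram at hand. So the formula is right, the justification is not; the fix is simply to replace the TL paragraph by the one--line curl count.

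On the asymptotics you take a genuinely different route from the paper. The paper rewrites $\sum_r\binom{\ell}{r}^2x^{2r}$ via Parseval's identity as $\frac{1}{\pi}\int_{-\pi/2}^{\pi/2}|1+xe^{2it}|^{2\ell}\,dt$ and applies Laplace/steepest descent at $t=0$ (this is Proposition~\ref{prop:SumCarre} with $x=2$). Your approach via the ordinary generating function $\sum_\ell n_\ell z^\ell=(1-10z+9z^2)^{-1/2}=\big((1-z)(1-9z)\big)^{-1/2}$ and singularity analysis at $z=1/9$ is equally valid and arguably more direct here; it also immediately exhibits the factor $\sqrt{9/8}$ as the value of the regular part at the dominant singularity. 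Either method gives $n_\ell\sim 3^{2\ell+1}/\sqrt{8\pi\ell}$.
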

\begin{proof}
  When 0--resolving all the crossings, we obtain
  $\dessin{.7cm}{0resolved}$ with $\ell$ undotted circles.
  Swapping the resolution of one of the $\ell$ crossings on the left
  creates a new undotted circle.
  On the contrary, swapping the resolution of one of the $\ell$ crossings
  on the right reduces by one the number of undotted circles.
  Now we gather the generators of $C^\ell(\Duk{\ell})$ according to the number
  of $1$--resolved crossings among the $\ell$ left ones.
  We obtain $n_\ell=\disp{\sum_{r=0}^\ell}{\ell\choose r}{\ell\choose
    \ell-r}2^{\ell+r-(\ell-r)}=\disp{\sum_{r=0}^\ell}\left[{\ell\choose
      r}2^r\right]^2$.
Then, using the formula of Prop. \ref{prop:SumCarre} for $x=2$, we get $n_\ell\sim
\frac{3^{2 \ell+1}}{\sqrt{8\pi \ell}}$.
\end{proof}

\subsection{Dimension and minimum distance}
\label{sec:dimens-minim-dist-unknot}

\begin{prop}
  $k_\ell=1$ and $d_\ell=2^\ell$.
\end{prop}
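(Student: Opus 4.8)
The plan is to recognise $\Duk{\ell}$ as the crossingless diagram of the unknot, call it $U_0$, decorated with $\ell$ positive and $\ell$ negative Reidemeister--I kinks: the $\ell$ ``left'' crossings and the $\ell$ ``right'' crossings of the picture smooth in opposite ways (one creates a circle under the $1$--resolution, the other under the $0$--resolution, exactly as in the length computation), hence carry opposite signs, and each of the $2\ell$ of them is removable by an $\mathrm{R1}$ move. Both assertions then follow by iterating Theorem~\ref{prop:Invariance} and Corollary~\ref{cor:ReidemeisterWeight}, one step per crossing, along the $2\ell$ moves linking $U_0$ to $\Duk{\ell}$.

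\emph{Dimension.} The reduced Khovanov homology of $U_0$ is $\F_2$, concentrated in homological degree $0$. Applying Theorem~\ref{prop:Invariance} to each of the $2\ell$ Reidemeister--I moves, and using that an $\mathrm{R1}^{+}$ move carries the grading shift $\eta=0$ while an $\mathrm{R1}^{-}$ move carries $\eta=1$, I obtain $\Kh(\Duk{\ell})\cong\Kh(U_0)\{\ell\}$. Hence $\Kh(\Duk{\ell})\cong\F_2$, concentrated in degree $\ell$; in particular $H^{\ell}(\Duk{\ell})\cong\F_2$, and since $H^{\ell}$ depends only on the three terms $C^{\ell-1},C^{\ell},C^{\ell+1}$ this is also the middle cohomology of the truncation defining the code, so $k_\ell=1$ (and the code loses nothing, as $H^{\ell\pm1}(\Duk{\ell})=0$).

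\emph{Minimum distance.} By Proposition~\ref{prop:ChainCodes}, $d_\ell$ is the smallest $\BB$--weight of a non--zero class in $H^{\ell}(\Duk{\ell})\sqcup H_{\ell}(\Duk{\ell})$, i.e.\ $d_\ell=\min\big(d^{\ell}_{\Duk{\ell}},\,d'_\ell\big)$ with $d'_\ell:=\min\{|x|:0\neq[x]\in H_{\ell}(\Duk{\ell})\}$. To compute $d^{\ell}_{\Duk{\ell}}$, start from $d^{0}_{U_0}=1$ (the reduced complex of $U_0$ has a one--dimensional $C^0$ and vanishing differential) and feed the $2\ell$ moves into Corollary~\ref{cor:ReidemeisterWeight}: inserting a positive kink sends $d^{i}$ to $2d^{i}$ without changing the degree, while inserting a negative kink sends $d^{i}$ to $d^{i+1}$ with the same value. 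After the $\ell$ positive kinks one reaches $d^{0}=2^{\ell}$, and after the $\ell$ negative ones $d^{\ell}_{\Duk{\ell}}=2^{\ell}$. For $d'_\ell$ I would use Poincar\'e duality (Proposition~\ref{prop:Duality}): since $\Duk{\ell}$ has $2\ell$ crossings, the generator--to--generator map $m$ furnishes an isomorphism $H_{\ell}(\Duk{\ell})\cong H^{\ell}(\Duk{\ell}!)$ that preserves the $\BB$--weight; and $\Duk{\ell}!$ is again a diagram of the unknot obtained from $U_0$ by $\ell$ positive and $\ell$ negative $\mathrm{R1}$ kinks (mirroring merely swaps the two signs), so the identical computation yields $d^{\ell}_{\Duk{\ell}!}=2^{\ell}$, whence $d'_\ell=2^{\ell}$. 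Therefore $d_\ell=\min(2^{\ell},2^{\ell})=2^{\ell}$.

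The substantive content here is really Corollary~\ref{cor:ReidemeisterWeight}, in particular the lower bound showing that a positive $\mathrm{R1}$ kink at least doubles the distance; the present proof is then bookkeeping. The points to get right are that the surviving homology class sits exactly in degree $\ell$ (so that the three--term truncation sees all of $\Kh(\Duk{\ell})$ and one may iterate the distance formula cleanly), and that everything goes through in the reduced setting --- the $\mathrm{R1}$ moves can be performed away from the marked point, and the $\mathrm{R1}^{+}$ invariance chain map still sends a generator to a sum of at most two generators, so that the constant $k$ of Proposition~\ref{prop:WeightMap} equals $2$.
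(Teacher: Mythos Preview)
Your argument is correct and follows the same route as the paper: both proofs observe that $\Duk{\ell}$ is obtained from the crossingless pointed unknot by $\ell$ moves $\mathrm{R1}^+$ and $\ell$ moves $\mathrm{R1}^-$, invoke Proposition~\ref{prop:Invariance} for $k_\ell$, and iterate Corollary~\ref{cor:ReidemeisterWeight} for $d_\ell$. Your write-up is in fact more complete than the paper's two-line proof, since you make explicit the cohomological half of the minimum distance via Proposition~\ref{prop:Duality} and the observation that $\Duk{\ell}!$ again carries $\ell$ positive and $\ell$ negative kinks; the paper leaves this symmetry implicit.
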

\begin{proof}
  To pass from $\Duk{\ell}$ to $\Duk{\ell+1}$, one can perform two R1 moves (one $\textrm{R1}^+$ and one $\textrm{R1}^-$).
  Now the statement on $k_\ell$ follows from Prop. \ref{prop:Invariance} and
  the statement on $d_\ell$ from Prop. \ref{cor:ReidemeisterWeight}.
\end{proof}

\subsection{Sparseness}
\label{sec:Sparseness-unknot}

\begin{prop}
  The weight of each row in the $\ell^\textrm{th}$ unknot code is
  $O\big(\ln(n_\ell)\big)$ as $\ell$ increases.
\end{prop}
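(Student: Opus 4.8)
The plan is to show that every row of the two parity--check matrices has weight linear in $\ell$, and then to observe that, by the length computation just carried out, $\ln(n_\ell)$ is itself linear in $\ell$, so a linear--in--$\ell$ bound is exactly an $O(\ln n_\ell)$ bound. Precisely, from $n_\ell\sim 3^{2\ell+1}/\sqrt{8\pi\ell}$ one gets $\ln(n_\ell)=(2\ell+1)\ln 3-\tfrac12\ln(8\pi\ell)+o(1)\sim 2\ell\ln 3$, so $\ln(n_\ell)\geq\ell$ for $\ell$ large; it therefore suffices to bound each row weight of $\mathbf{H}_X$ and of $\mathbf{H}_Z$ by a fixed multiple of $\ell$, and I expect the bound $2(\ell+1)$.

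Next I would unwind the construction of Prop. \ref{prop:ChainCodes}: in the basis $\BB$ of enhanced resolutions in the $\{-,+\}$--labelling one has $\mathbf{H}_X=\Mat_\BB(\p_{|C^\ell(\Duk{\ell})})$ and $\mathbf{H}_Z=\Mat_\BB(\p_{|C^{\ell-1}(\Duk{\ell})})^{t}$. Hence a row of $\mathbf{H}_Z$ is indexed by a generator $g\in C^{\ell-1}(\Duk{\ell})$ and has weight equal to the number of generators occurring in $\p_{\Duk{\ell}}(g)$, while a row of $\mathbf{H}_X$ is indexed by a generator $h\in C^{\ell+1}(\Duk{\ell})$ and has weight equal to the number of generators $g\in C^\ell(\Duk{\ell})$ in whose image $\p_{\Duk{\ell}}(g)$ the generator $h$ occurs.

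The combinatorial heart of the argument is the following local fact, to be read off the differential rules of Fig. \ref{fig:Rules2} (and Fig. \ref{fig:Rules3} in the reduced setting, which is the one relevant here since $\Duk{\ell}$ is pointed): at a fixed crossing $c$ the partial differential $\p_c$ sends each generator to a sum of at most two generators, and conversely, for a fixed generator there are at most two generators that $\p_c$ maps onto it with nonzero coefficient. Granting this, the count is immediate. The diagram $\Duk{\ell}$ has $2\ell$ crossings, so a generator $g=D^\sigma_\phi$ of $C^{\ell-1}(\Duk{\ell})$ has exactly $\ell+1$ crossings that are $0$--resolved, whence $\p_{\Duk{\ell}}(g)=\sum_{c\in\phi^{-1}(0)}\p_c(g)$ is a sum of at most $2(\ell+1)$ generators and the rows of $\mathbf{H}_Z$ have weight at most $2(\ell+1)$. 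A generator $h$ of $C^{\ell+1}(\Duk{\ell})$ has exactly $\ell+1$ crossings that are $1$--resolved; if $h$ occurs in $\p_{\Duk{\ell}}(g)$ for some $g\in C^\ell(\Duk{\ell})$, then the underlying resolutions of $g$ and $h$ differ at exactly one crossing $c$, which must be one of those $\ell+1$ crossings, and for each such $c$ there are at most two admissible $g$; so the rows of $\mathbf{H}_X$ also have weight at most $2(\ell+1)$. Since $2(\ell+1)=O(\ell)=O(\ln n_\ell)$, this would finish the proof.

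I do not expect a genuine obstacle: this is a routine sparseness verification. The only two points needing a little care are (i) the ``at most two preimages per crossing'' half of the local fact, used for $\mathbf{H}_X$, which amounts to observing that the multiplication and comultiplication of the underlying Frobenius algebra, written in the $\{-,+\}$--basis, have fibres of cardinality at most two over basis vectors --- one reads this directly off Fig. \ref{fig:Rules2}; and (ii) checking that the extra marked--point rules of Fig. \ref{fig:Rules3} for reduced Khovanov homology do not enlarge the constant $2$, which they do not. Everything else is bookkeeping.
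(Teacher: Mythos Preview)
Your argument is correct and is essentially the same as the paper's: the paper simply asserts that each row has between $\ell+1$ and $2(\ell+1)$ nontrivial entries and then uses the crude bounds $8^\ell\le n_\ell\le 9^\ell$ (for large $\ell$) to conclude that $\ell=\Theta(\ln n_\ell)$. You have written out in full the justification for the $2(\ell+1)$ upper bound (including the ``at most two preimages per crossing'' half and the reduced rules), whereas the paper leaves this implicit.
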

\begin{proof}
  It is clear from Khovanov homology construction that each row has
  between $\ell+1$ and $2(\ell+1)$ non trivial entries.
  Since $8^\ell\leq n_\ell\leq 9^\ell$ for sufficiently large $\ell$, the result follows.
\end{proof}




\section{Unlink codes}
\label{sec:unlink-codes}

For every $\ell\in\N$, we consider the following diagram $\Dul{\ell}$ of the pointed $(\ell+1)$--unlink:
\[
\dessin{2.5cm}{DiagramUnlink}.
\]

We call \emph{$\ell^\textrm{th}$ unlink code} the code obtained from
$\big(C^{\ell-1}(\Dul{\ell})\xrightarrow{\p_\Dul{\ell}} C^\ell(\Dul{\ell})\xrightarrow{\p_\Dul{\ell}} C^{\ell+1}(\Dul{\ell})\big)$.
Its parameters are denoted by $\llbracket n_\ell;k_\ell;d_\ell\rrbracket$.

\subsection{Case $\ell=1$}
\label{sec:case-l=1}

It follows from Prop. \ref{prop:ConnectedSum} that
$C(\Dul{\ell})\cong C(\Dul{1})^{\otimes \ell}$.
It is hence worthwhile to deal the case $\ell=1$ in detail.

It can be directly computed that $C(\Dul{1})\cong C^\vee(\Dul{1})$ has six generators:
\[
\begin{array}{lc}
  \textrm{degree }0: & a:=\dessin{1.2cm}{GenA} \\
   \textrm{degree }1: & b_1:=\dessin{1.2cm}{GenB1},\ 
    b_2:=\dessin{1.2cm}{GenB2},\ 
    b_3:=\dessin{1.2cm}{GenB3},\ 
    b_4:=\dessin{1.2cm}{GenB4}\\
  \textrm{degree }2: & c:=\dessin{1.2cm}{GenC} 
\end{array}
\]
The differential is $\p_{\Dul{1}}(a)= b_1+b_2+b_3+b_4$ and
$\p_{\Dul{1}}(b_1)=\p_{\Dul{1}}(b_2) =\p_{\Dul{1}}(b_3) =\p_{\Dul{1}}(b_4) =c$.
The non-zero elements of the homology are then represented by sums
$b_i+b_j$ with $i\neq j\in\llbracket1,4\rrbracket$ and two such sums
are equivalent iff their supports are disjoint.
The homology is then of rank 2 and its three non trivial elements are
$[b_1+b_2]=[b_3+b_4]$, $[b_1+b_3]=[b_2+b_4]$ and $[b_1+b_4]=[b_2+b_3]$.

\subsection{Length}
\label{sec:length-unlink}



\begin{prop}
  $n_\ell\sim \sqrt{\frac{3}{2\pi \ell}}6^\ell$ as $\ell$ tends to infinity.
\end{prop}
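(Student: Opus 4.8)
The plan is to reduce the statement to a single coefficient extraction, followed by Laplace's method. By Prop.~\ref{prop:ConnectedSum} we have $C(\Dul{\ell})\cong C(\Dul{1})^{\otimes\ell}$; since, as recalled in the case $\ell=1$, $C(\Dul{1})$ has one generator in homological degree $0$, four in degree $1$ and one in degree $2$, the generating polynomial of $C(\Dul{\ell})$ with respect to the homological grading is $(1+4t+t^2)^\ell$. As $n_\ell=\dim C^\ell(\Dul{\ell})$ by Prop.~\ref{prop:ChainCodes}, this gives
\[
n_\ell=\big[t^\ell\big](1+4t+t^2)^\ell=\sum_{m\ge0}\binom{\ell}{2m}\binom{2m}{m}4^{\ell-2m},
\]
the second equality coming from expanding $(1+4t+t^2)^\ell$ term by term (choosing $t^2$ from $m$ factors, $4t$ from $\ell-2m$ factors, $1$ from the remaining $m$).

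Next I would rewrite this coefficient as an oscillatory integral. Expressing it as a contour integral on the unit circle and substituting $z=e^{i\theta}$, together with the identity $1+4z+z^2=z\big(z^{-1}+4+z\big)=z\,(4+2\cos\theta)$ valid on $|z|=1$, I obtain
\[
n_\ell=\frac{1}{2\pi i}\oint_{|z|=1}\frac{(1+4z+z^2)^\ell}{z^{\ell+1}}\,dz=\frac{1}{2\pi}\int_{-\pi}^{\pi}(4+2\cos\theta)^\ell\,d\theta=\frac{1}{\pi}\int_{0}^{\pi}(4+2\cos\theta)^\ell\,d\theta.
\]

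The last step is a Laplace-method evaluation of this integral. On $[0,\pi]$ the function $\theta\mapsto 4+2\cos\theta$ attains its maximum $6$ only at $\theta=0$, where $4+2\cos\theta=6-\theta^2+O(\theta^4)$. Fixing a small $\delta>0$: on $[\delta,\pi]$ the integrand is at most $(4+2\cos\delta)^\ell=O(\rho^\ell 6^\ell)$ with $\rho:=\tfrac{4+2\cos\delta}{6}<1$, hence negligible; on $[0,\delta]$ I would write $(4+2\cos\theta)^\ell=6^\ell\exp\!\big(\ell\ln(1-\tfrac{\theta^2}{6}+O(\theta^4))\big)$, substitute $\theta=u/\sqrt{\ell}$, and pass to the limit by dominated convergence, obtaining $\int_0^\delta(4+2\cos\theta)^\ell\,d\theta\sim\tfrac{6^\ell}{\sqrt\ell}\int_0^\infty e^{-u^2/6}\,du=\tfrac{6^\ell}{\sqrt\ell}\cdot\tfrac{\sqrt{6\pi}}{2}$. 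Combining the two pieces,
\[
n_\ell\sim\frac{1}{\pi}\cdot\frac{6^\ell}{\sqrt\ell}\cdot\frac{\sqrt{6\pi}}{2}=\sqrt{\frac{3}{2\pi\ell}}\,6^\ell.
\]

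The only genuine work is making this last estimate rigorous — a uniform bound on the $O(\theta^4)$ correction after the rescaling $\theta=u/\sqrt\ell$, and the justification of the limit–integral interchange — but this is completely standard, and could equally well be isolated as an analytic lemma in the appendix, in the same spirit as (and by the same technique as) Prop.~\ref{prop:SumCarre}.
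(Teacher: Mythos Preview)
Your proof is correct. The reduction to $n_\ell=[t^\ell](1+4t+t^2)^\ell$ via Prop.~\ref{prop:ConnectedSum} is the same as in the paper, but from there the two arguments diverge. The paper rewrites $(t^{-1}+4+t)^\ell=\big((t^{-1/2}+t^{1/2})^2+2\big)^\ell$ to obtain the closed form $n_\ell=2^\ell\sum_{r=0}^\ell\binom{\ell}{r}\binom{2r}{r}2^{-r}$, and then defers the asymptotics to an appendix lemma (Prop.~\ref{prop:Tn}) which passes through generating functions and Legendre polynomials before ultimately landing on the same steepest-descent estimate (Prop.~\ref{prop:SumCarre}). Your route is more direct: the Cauchy integral together with the factorisation $1+4z+z^2=z(4+2\cos\theta)$ on $|z|=1$ immediately gives the real integral $\frac{1}{\pi}\int_0^\pi(4+2\cos\theta)^\ell\,d\theta$, to which Laplace's method applies in one stroke. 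This bypasses the Legendre-polynomial detour entirely and is arguably the cleaner argument for the asymptotic alone; what the paper's approach buys in exchange is the explicit combinatorial identity for $n_\ell$ and its link to $P_\ell(2/\sqrt{3})$, which may be of independent interest. Your intermediate expansion $n_\ell=\sum_{m}\binom{\ell}{2m}\binom{2m}{m}4^{\ell-2m}$ is correct but not actually used downstream, so you could drop it.
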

\begin{proof}
  Since Prop. \ref{prop:ConnectedSum}, we have
  $C(\Dul{\ell})=C(\Dul{1})^{\otimes \ell}$. It follows then that
  $\dim\big(C^\ell (\Dul{\ell})\big)$ is the coefficient of degree $\ell$
  in $(1+4t+t^2)^\ell$, that is the constant term in $(t^{-1}+4+t)^\ell$.
  But
\[
(t^{-1}+4+t)^\ell=\big((t^{-\frac{1}{2}}+t^{\frac{1}{2}})^2+2\big)^\ell=\sum_{r=0}^\ell{\ell\choose
    r}(t^{-\frac{1}{2}}+t^{\frac{1}{2}})^{2r}2^{\ell-r}=2^\ell\sum_{r=0}^\ell\frac{{\ell\choose
    r}}{2^r}\sum_{l=0}^{2r}{2r\choose l}t^{r-l}
\]
\noi so $n_\ell=2^\ell\disp{\sum_{r=0}^\ell}\frac{{\ell\choose r}{2r\choose
    r}}{2^r}$. Then we use Prop. \ref{prop:Tn} to conclude.
\end{proof}

\subsection{Dimension}
\label{sec:dimension-unlink}

\begin{prop}
  $k_\ell=2^\ell$.
\end{prop}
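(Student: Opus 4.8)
The plan is to reduce everything to the tensor-product structure of $C(\Dul{\ell})$ and the explicit computation of $\Kh(\Dul{1})$ recalled above. By Proposition~\ref{prop:ChainCodes}, the dimension of the $\ell^{\textrm{th}}$ unlink code is $k_\ell=\dim H^\ell\big(C(\Dul{\ell})\big)$, so it suffices to identify this homology group.

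First I would pin down the base case directly from the six-generator description already displayed. Since $C^{-1}(\Dul{1})=0$ and $\p_{\Dul{1}}(a)=b_1+b_2+b_3+b_4\neq 0$, we get $H^0\big(C(\Dul{1})\big)=0$. Since $c=\p_{\Dul{1}}(b_1)\in\Im(\p_{\Dul{1}})$ and $c$ spans $C^2(\Dul{1})$, we get $H^2\big(C(\Dul{1})\big)=0$. In degree $1$, the kernel of $\p_{\Dul{1}}$ (the vectors in $\langle b_1,b_2,b_3,b_4\rangle$ with an even number of nonzero coordinates) has dimension $3$, while the image coming from degree $0$ has dimension $1$, so $H^1\big(C(\Dul{1})\big)$ has dimension $2$, generated for instance by $[b_1+b_2]$ and $[b_1+b_3]$. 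The conclusion of this step is that $\Kh(\Dul{1})$ is concentrated in homological degree $1$ and is two-dimensional.

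Next, Proposition~\ref{prop:ConnectedSum} gives $C(\Dul{\ell})\cong C(\Dul{1})^{\otimes\ell}$, and the Künneth formula then yields $\Kh(\Dul{\ell})\cong\Kh(\Dul{1})^{\otimes\ell}$ as graded spaces. As each of the $\ell$ tensor factors is two-dimensional and sits in homological degree $1$, the tensor product is $2^\ell$-dimensional and concentrated in homological degree $\ell$. Hence $k_\ell=\dim H^\ell\big(C(\Dul{\ell})\big)=2^\ell$.

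There is no genuine obstacle here; the only point deserving a moment of care is the bookkeeping of the homological grading, namely that the whole of $\Kh(\Dul{\ell})$ lands in degree $\ell$ --- exactly the degree in which the code is extracted --- so that $\dim H^\ell\big(C(\Dul{\ell})\big)$ coincides with the total dimension $2^\ell$ of $\Kh(\Dul{\ell})$. The base-case homology computation itself is immediate from the differential displayed in Section~\ref{sec:case-l=1}.
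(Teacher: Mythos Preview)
Your proof is correct and follows essentially the same route as the paper: the paper's one-line justification (``direct consequence of Prop.~\ref{prop:ConnectedSum}'') is precisely the combination of the tensor decomposition $C(\Dul{\ell})\cong C(\Dul{1})^{\otimes\ell}$, K\"unneth, and the rank-$2$ computation of $\Kh^1(\Dul{1})$ already carried out in Section~\ref{sec:case-l=1}. You have simply unpacked these steps and verified the grading, which is a welcome clarification but not a different argument.
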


This is a direct consequence of Prop. \ref{prop:ConnectedSum}.

\subsection{Minimum distance}
\label{sec:minimum-distance-unlink}

\begin{prop}
  $d_\ell=2^\ell$.
\end{prop}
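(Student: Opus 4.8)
The plan is to reduce the computation of $d_\ell$ to the $\ell=1$ case via the tensor decomposition $C(\Dul{\ell})\cong C(\Dul{1})^{\otimes\ell}$ from Prop.~\ref{prop:ConnectedSum}, and then to analyze the weight of homology representatives directly on the six-generator complex described above. First I would recall that by Prop.~\ref{prop:ChainCodes} the minimum distance is $d_\ell=\min\{|x|_\BB\mid [x]\in H^{\ell}(\Dul{\ell})\sqcup H_{\ell}(\Dul{\ell}),\ [x]\neq 0\}$, and that the Poincaré-type duality of Prop.~\ref{prop:Duality} (together with the fact, noted in the excerpt, that $C(\Dul{1})\cong C^\vee(\Dul{1})$, so the whole tensor complex is self-dual up to reindexing) lets me work with $H^{\ell}$ alone. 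The generators of $C^{\ell}(\Dul{\ell})$ are tensors $g_1\otimes\cdots\otimes g_\ell$ where each $g_s$ lies in degree $i_s\in\{0,1,2\}$ of $C(\Dul{1})$ with $\sum_s i_s=\ell$; by the Künneth formula $H^{\ell}(\Dul{\ell})=\bigoplus_{\sum i_s=\ell}\bigotimes_s H^{i_s}(\Dul{1})$, and since $H^0(\Dul{1})=H^2(\Dul{1})=0$, the only surviving summand is the one with every $i_s=1$, i.e.\ $H^{\ell}(\Dul{\ell})\cong H^1(\Dul{1})^{\otimes\ell}$, which has rank $2^\ell$ (consistent with $k_\ell=2^\ell$).

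The key step is then the lower bound $d_\ell\ge 2^\ell$. I would argue that any cycle $z\in C^{\ell}(\Dul{\ell})$ representing a nonzero homology class must, after subtracting boundaries, have a component in the ``all-$i_s=1$'' summand, which is spanned by tensors $b_{j_1}\otimes\cdots\otimes b_{j_\ell}$ with $j_s\in\{1,2,3,4\}$. The point is to show that the weight cannot be reduced below $2^\ell$: in the $\ell=1$ case every nonzero homology class is represented by some $b_i+b_j$ ($i\neq j$) and by no single $b_i$, so the minimal weight of a nonzero class is exactly $2=2^1$; for the tensor product one shows inductively that a nonzero class in $H^1(\Dul{1})^{\otimes\ell}$ needs at least $2^\ell$ basis tensors in its support. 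Concretely, writing a representative as $\sum_{J}\lambda_J\, b_{j_1}\otimes\cdots\otimes b_{j_\ell}$ modulo boundaries, I would fix the last tensor factor and group terms; projecting/slicing according to the value of $j_\ell$ and using that in each slice the remaining sum must itself represent a nonzero class in $H^1(\Dul{1})^{\otimes(\ell-1)}$ (else the class would be killed, using the explicit relations $[b_1+b_2]=[b_3+b_4]$ etc.), the induction hypothesis forces each nonempty slice to contribute $\ge 2^{\ell-1}$ terms, and a nonzero class forces at least two nonempty slices, giving $\ge 2^\ell$. This is essentially an application of the philosophy of Lemma~\ref{lem:CoHoNull} and Prop.~\ref{prop:WeightMap}: the tensor-product chain map $C(\Dul{1})^{\otimes\ell}\xrightarrow{\ \cong\ }C(\Dul{\ell})$ is a generator-to-generator isomorphism (so $k=1$ in Prop.~\ref{prop:WeightMap}), hence weights of homology-surviving elements are preserved, and it suffices to prove $d^1_{\Dul{1}}=2$ and that weights multiply under $\otimes$.

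For the upper bound $d_\ell\le 2^\ell$, I would simply exhibit an explicit cycle: take $(b_1+b_2)^{\otimes\ell}$ (meaning the element of $C^\ell(\Dul{\ell})$ corresponding to $(b_1+b_2)\otimes\cdots\otimes(b_1+b_2)$ under the isomorphism). Since $\p_{\Dul{1}}(b_1+b_2)=c+c=0$ and $[b_1+b_2]\neq 0$ in $H^1(\Dul{1})$, the Künneth isomorphism shows this is a cycle with nonzero class; expanding the tensor product gives exactly $2^\ell$ generators of $C^\ell(\Dul{\ell})$, each of $\BB$-weight $1$, so the $\BB$-weight of the representative is $2^\ell$. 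Combined with the lower bound, $d_\ell=2^\ell$.

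The main obstacle is the lower-bound induction: one must be careful that ``subtracting a boundary'' to reach the all-$i_s=1$ summand does not secretly increase the weight below the bound, and that the slicing argument correctly tracks when a slice can be annihilated by the differential of $C(\Dul{1})^{\otimes(\ell-1)}$ — i.e.\ one needs the precise structure of $H^1(\Dul{1})$ (the relations among the $[b_i+b_j]$ and the fact that no single $b_i$ is a cycle representing a nonzero class, indeed $\p b_i=c\neq 0$ so $b_i$ is not even a cycle) to conclude that nonzero tensor classes cannot be represented more economically than the product of economical representatives. Making this ``weight of a nonzero class in a tensor product of chain complexes is multiplicative'' statement rigorous — rather than merely $\ge$ via Prop.~\ref{prop:WeightMap} applied factor-by-factor — is where the real work lies; everything else is bookkeeping with the six-generator model.
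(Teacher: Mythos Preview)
Your overall strategy coincides with the paper's: reduce to homology by self-duality, use the tensor decomposition $C(\Dul{\ell})\cong C(\Dul{1})^{\otimes\ell}$, exhibit $(b_1+b_2)^{\otimes\ell}$ for the upper bound, and run an induction on $\ell$ for the lower bound by peeling off one tensor factor. The upper bound and the duality reduction are fine.

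The genuine gap is exactly where you locate it, but your proposed slicing does not close it. Two points fail. First, an arbitrary cycle $A\in C^\ell(\Dul{\ell})$ is \emph{not} a sum of pure $b$--tensors: writing $A=\alpha\otimes a+\sum_{i=1}^4\beta_i\otimes b_i+\gamma\otimes c$ via $C(\Dul{\ell})\cong C(\Dul{\ell-1})\otimes C(\Dul{1})$, the components $\alpha,\gamma$ need not vanish, and you cannot kill them by a boundary without possibly inflating the weight. Second, even restricting to the $\beta_i$'s, the individual slices are not cycles in general (one only gets $\p\beta_i=\alpha$), so ``each nonempty slice represents a nonzero class and hence has weight $\ge 2^{\ell-1}$'' is not a valid step.

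The paper's resolution avoids both issues simultaneously. One strengthens the induction hypothesis to ``\emph{every} nonzero class in $\Kh^{\ell}(\Dul{\ell})$ has a representative of weight $\ge 2^{\ell}$''. Given $A$ as above with $[A]\neq 0$, write $[A]=[(v+w)\otimes b_1+v\otimes b_2+w\otimes b_3]$ with $([v],[w])\neq(0,0)$ (this uses the explicit basis of $\Kh^1(\Dul{1})$). Expressing $A-\big((v+w)\otimes b_1+v\otimes b_2+w\otimes b_3\big)=\p_{\Dul{\ell}}(B)$ and comparing $b_i$--components yields
\[
[\beta_1+\beta_3]=[\beta_2+\beta_4]=[v],\qquad [\beta_1+\beta_2]=[\beta_3+\beta_4]=[w].
\]
If, say, $[v]\neq 0$, then by the strengthened induction hypothesis both $|\beta_1+\beta_3|\ge 2^{\ell-1}$ and $|\beta_2+\beta_4|\ge 2^{\ell-1}$, whence
\[
|A|\ \ge\ |\beta_1|+|\beta_2|+|\beta_3|+|\beta_4|\ \ge\ |\beta_1+\beta_3|+|\beta_2+\beta_4|\ \ge\ 2^{\ell}.
\]
The trick is thus to pair the four slices into two sums that are automatically cycles (since $\p\beta_i=\alpha$ cancels pairwise) and are both forced to represent the \emph{same} nonzero class; no reduction to the pure-$b$ summand is needed, and the $\alpha,\gamma$ contributions are simply discarded in the first inequality.
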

\begin{proof}
  It is easily seen that there is a differential-preserving one-to-one correspondance between
  generators of $C(\Dul{\ell})$ and $C(\Dul{\ell}!)\cong C^\vee(\Dul{\ell})$.
  It is hence sufficent to deal with $C(\Dul{\ell})$.

  By induction on $\ell$, we prove a sligthly stronger result: $2^\ell$ is
  the minimum distance and it is reached for any non trivial element
  of the homology.
  This is trivial for $\ell=0$ (and it has been checked for $\ell=1$).
  Now, we assume the assertion is true for a given $\ell\in\N$.

  Since $C(\Dul{\ell+1})\cong C(\Dul{\ell})\otimes C(\Dul{1})$, any element $A$ of
  $C^k(\Dul{\ell+1})$, for $k\in\llbracket 0,2 \ell+2\rrbracket$ can be
  decomposed into the following form
\[
A=
\left\{
\begin{array}{cl}
  x\otimes a & \in C^k(\Dul{\ell})\otimes C ^0(\Dul{1})\\
  + &\\
  \sum_{i=1}^4y_i\otimes b_i & \in
  C^{k-1}(\Dul{\ell})\otimes C^1(\Dul{1})\\
  + &\\
  z\otimes c & \in C^{k-2}(\Dul{\ell})\otimes C^2(\Dul{1})
\end{array}
\right..
\]
  Thus, we have
\[
\p_{\Dul{\ell+1}}(A)=
\left\{
\begin{array}{cl}
  \p_{\Dul{\ell}}(x)\otimes a & \in C^{k+1}(\Dul{\ell})\otimes C^0(\Dul{1})\\
  + &\\
  \sum_{i=1}^4\big(x+\p_{\Dul{\ell}}(y_i)\big)\otimes b_i& \in
  C^k(\Dul{\ell})\otimes C^1(\Dul{1})\\
  + &\\
  \big(y_1+y_2+y_3+y_4+\p_{\Dul{\ell}}(z)\big)\otimes c & \in C^{k-1}(\Dul{\ell})\otimes
  C^2(\Dul{1})
\end{array}
\right..
\]
\begin{lemme}
  If $\big([w_1],\cdots,[w_{2^\ell}]\big)$ is a basis for $\Kh(\Dul{\ell})$, then
  \[
\Big(\big[w_1\otimes(b_1+b_2)\big],\cdots,\big[w_{2^\ell}\otimes(b_1+b_2)\big],\big[w_1\otimes(b_1+b_3)\big],\cdots,\big[w_{2^\ell}\otimes(b_1+b_3)\big]\Big)
\]
is a basis for $\Kh(\Dul{\ell+1})$.
\end{lemme}
\begin{proof}
  Elements of the form $w_i\otimes(b_i+b_j)$, for
  $i,j\in\llbracket1,4\rrbracket$ are clearly in the kernel of
  $\p_{\Dul{\ell+1}}$.
  If
\[
\sum_{i=1}^{2^\ell}\alpha_i \big[w_i\otimes(b_1+b_2) \big]+\beta_i
  \big[w_i\otimes(b_1+b_3)\big]=0
\]
\noi with
  $(\alpha_i),(\beta_i)\in\F_2^{2^\ell}$, then there exists
  $A\in C_\ell (\Dul{\ell+1})$ such that
\[
\sum_{i=1}^{2^\ell}\alpha_i w_i\otimes(b_1+b_2)+\beta_i
  w_i\otimes(b_1+b_3)=\p_{\Dul{\ell+1}}(A)
\]
\noi and hence, with the notation above, and by looking at the $\ .\
\otimes b_i$ parts,
\[
\left\{
\begin{array}{l}
  x+\p_{\Dul{n}}(y_1)=\sum_{i=1}^{2^\ell} (\alpha_i+\beta_i)w_i\\[.1cm]
  x+\p_{\Dul{n}}(y_2)=\sum_{i=1}^{2^\ell} \alpha_iw_i\\[.1cm]
  x+\p_{\Dul{n}}(y_3)=\sum_{i=1}^{2^\ell} \beta_iw_i\\[.1cm]
  x+\p_{\Dul{n}}(y_4)=0
\end{array}
\right..
\]
It follows that $\sum_{i=1}^{2^\ell} \alpha_iw_i=\p_{\Dul{\ell}}(y_2+y_4)$ and
$\sum_{i=1}^{2^\ell} \beta_iw_i=\p_{\Dul{\ell}}(y_3+y_4)$.
This means that
$\sum_{i=1}^{2^\ell} \alpha_i[w_i]=\sum_{i=1}^{2^\ell} \beta_i[w_i]=0$ and
hence that $\alpha_i=\beta_i=0$ for every $i\in\llbracket1,2^\ell\rrbracket$. 
\end{proof}
\begin{lemme}
  If $[A]$ is a non trivial element of $\Kh(\Dul{\ell+1})$ then $|A|\geq2^{\ell+1}$.
\end{lemme}
\begin{proof}
  If $A=\alpha\otimes a+\sum_{i=1}^4 \beta_i\otimes b_i+\gamma\otimes
  c$, then $|A|=|\alpha|+\sum_{i=1}^4|\beta_i|+|\gamma|$.
  According the precedent lemma, there exists
  $(v,w)\in\Ker(\p_{\Dul{\ell}})$ such that $[A]=\big[(v+w)\otimes b_1+v\otimes
  b_2+w\otimes b_3\big]$ with $([v],[w])\neq(0,0)$.
  As above, it follows, that
  \[
  \left\{
  \begin{array}{l}
    x+\p_{\Dul{\ell}}(y_1)=\beta_1+v+w\\[.1cm]
    x+\p_{\Dul{\ell}}(y_2)=\beta_2+v\\[.1cm]
    x+\p_{\Dul{\ell}}(y_3)=\beta_3+w\\[.1cm]
    x+\p_{\Dul{\ell}}(y_4)=\beta_4.
  \end{array}
  \right.
  \]
  If $[v]\neq0$, then $\beta_1+\beta_3=v+\p_{\Dul{\ell}}(y_1+y_3)$ and $\beta_2+\beta_4=v+\p_{\Dul{\ell}}(y_2+y_4)$ so
  $[\beta_1+\beta_3]=[\beta_2+\beta_4]$ is a non trivial element of
  $\Kh(\Dul{\ell})$ so $|\beta_1+\beta_3|\geq2^\ell$ and
  $|\beta_2+\beta_4|\geq2^\ell$.
  Finally
\[
|A|\geq |\beta_1|+|\beta_2|+|\beta_3|+|\beta_4| \geq
|\beta_1+\beta_3|+|\beta_2+\beta_4|\geq 2^{\ell+1}.
\]
If $[v]=0$ then we replace $v$ by $w$.
\end{proof}
\end{proof}

\begin{remarque}
  This proposition would be a direct application of question
  \ref{quest:ConjReid} if it were answered true.
  It is also an example of chain complexes product with minimum
  distance equal to the product of the minimum distances.
\end{remarque}

\begin{remarque}
  It is explicit in the proof that minimally weighted
  homology-surviving elements are carried by $4^\ell$ generators only,
  namely those of $C_1(\Dul{1})^{\otimes \ell}$.
\end{remarque}
\begin{question}
  Can unlink codes be swept out, for instance by removing acyclic subcomplexes, so they
  reach parameter $\llbracket4^\ell;2^\ell;2^\ell\rrbracket$ ?
  Since it would share almost the same dimension and same logarithmic sparseness
  property, would it be somehow related to Couvreur--Delfosse--Zemor
  codes (\cite{CDZ})~?
\end{question}

\subsection{Sparseness}
\label{sec:Sparseness-unlink}

\begin{prop}
  The weight of each row in the $\ell^\textrm{th}$ unlink code is
  $O\big(\ln(n_\ell)\big)$ as $\ell$ increases.
\end{prop}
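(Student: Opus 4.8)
The plan is to mimic the sparseness argument for the unknot codes: I would bound the weight of every row of $\mathbf{H}_X$ and of $\mathbf{H}_Z$ by a fixed linear function of $\ell$, and then combine this with the Length proposition above, which gives $\ln(n_\ell)\sim\ell\ln 6$ since $n_\ell\sim\sqrt{\frac{3}{2\pi\ell}}6^\ell$. So the whole point is to show that every row of the two parity-check matrices of the $\ell^\textrm{th}$ unlink code has at most $2(\ell+1)$ non-zero entries.

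First I would record that $\Dul{\ell}$ has exactly $2\ell$ crossings, since its chain complex $C(\Dul{\ell})\cong C(\Dul{1})^{\otimes\ell}$ is supported in homological degrees $0$ through $2\ell$ (and $C(\Dul{1})$ lives in degrees $0$, $1$, $2$). Then I would observe that, in the changed basis of Fig.~\ref{fig:Rules2} (together with the reduced rules of Fig.~\ref{fig:Rules3} for the marked circle), the elementary differential $\p_c$ attached to a single crossing $c$ sends every generator to a sum of \emph{at most two} generators: a merge contributes exactly one term, a split a sum over the two admissible labels. With $\mathbf{H}_X=\Mat_\BB\big(\p_{|C^{\ell}(\Dul{\ell})}\big)$ and $\mathbf{H}_Z=\Mat_\BB\big(\p_{|C^{\ell-1}(\Dul{\ell})}\big)^t$ as in the proof of Prop.~\ref{prop:ChainCodes}, the bound on the rows of $\mathbf{H}_Z$ is then immediate: such a row is the coordinate vector of $\p_{\Dul{\ell}}(g)$ for a generator $g$ of $C^{\ell-1}(\Dul{\ell})$, and $g$ has $2\ell-(\ell-1)=\ell+1$ crossings that are $0$--resolved, so $\p_{\Dul{\ell}}(g)$ is a sum of at most $2(\ell+1)$ generators.

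The rows of $\mathbf{H}_X$ require the dual statement, and this is the only mildly delicate point. A row of $\mathbf{H}_X$ is indexed by a generator $g'$ of $C^{\ell+1}(\Dul{\ell})$, and a generator $g$ of $C^{\ell}(\Dul{\ell})$ can contribute to it only through some $\p_c$; this forces the underlying resolution of $g$ to be that of $g'$ with $c$ switched back to its $0$--resolution, so $c$ is one of the $\ell+1$ crossings that are $1$--resolved in $g'$, and for each such $c$ the labelling of $g$ is prescribed by that of $g'$ except on the one or two circles modified by $\p_c$, which leaves at most two admissible choices. Hence a row of $\mathbf{H}_X$ also has at most $2(\ell+1)$ non-zero entries. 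Alternatively --- and this is the route I would actually prefer, as it bypasses the preimage bookkeeping --- one invokes the Poincaré duality of Prop.~\ref{prop:Duality}: the transpose of $\mathbf{H}_X$ is, via the generator-to-generator identification $m$, the matrix of the Khovanov differential $C^{\ell-1}(\Dul{\ell}!)\to C^{\ell}(\Dul{\ell}!)$ of the mirror diagram $\Dul{\ell}!$, which is again a $2\ell$--crossing diagram of an unlink; its columns visibly have at most $2(\ell+1)$ non-zero entries, and the conclusion follows as before. Putting the two bounds together, every row of $\mathbf{H}_X$ and of $\mathbf{H}_Z$ has weight at most $2(\ell+1)=O\big(\ln(n_\ell)\big)$. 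The main obstacle is exactly the $\mathbf{H}_X$ estimate; granting the duality shortcut the whole argument is routine and parallels the unknot case word for word.
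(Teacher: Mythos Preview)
Your argument is correct and follows essentially the same route as the paper: bound every row weight by $2(\ell+1)$ from the fact that a generator in degree $\ell-1$ or $\ell$ has $\ell+1$ (respectively $\ell$) crossings left to resolve and each $\p_c$ contributes at most two terms, then compare with $\ln(n_\ell)$. The only cosmetic differences are that the paper states the two-sided bound $\ell+1\leq(\textrm{row weight})\leq 2(\ell+1)$ in one breath without separating $\mathbf{H}_X$ from $\mathbf{H}_Z$, and that it uses the cruder inequalities $4^\ell\leq n_\ell\leq 6^\ell$ (immediate from $C(\Dul{\ell})\cong C(\Dul{1})^{\otimes\ell}$) rather than the asymptotic formula; your duality shortcut for $\mathbf{H}_X$ is a pleasant extra but not needed, since the direct preimage count you sketch already works.
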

\begin{proof}
  It is clear from Khovanov homology construction that each row has
  between $\ell+1$ and $2(\ell+1)$ non trivial entries.
  Since $4^\ell\leq n_\ell\leq 6^\ell$, the result follows.
\end{proof}



\section{$(2,n)$--torus link codes}
\label{sec:torus-codes}

For every $\ell\in\N$, we consider the following diagram $\Dtl{\ell}$ of the
pointed $(2,\ell)$--torus link:

\[
\dessin{2.5cm}{DiagramTorusLink}.
\]
For every $r\in\llbracket2,\ell\rrbracket$, the code obtained from
$\big(C^{r-1}(\Dtl{\ell})\xrightarrow{\p_{\Dtl{\ell}}}
C^r(\Dtl{\ell})\xrightarrow{\p_{\Dtl{\ell}}} C^{r+1}(\Dtl{\ell})\big)$ is called \emph{$(\ell,r)^\textrm{th}$ $(2,n)$--torus link code}.
Its parameters are denoted by $\llbracket n_{\ell,r};k_{\ell,r};d_{\ell,r}\rrbracket$.

\subsection{Homology}
\label{sec:homology}

For convenience, we introduce, for every $\ell\in\N$ the diagram
$U_\ell:=\dessin{1.2cm}{DiagramTwistedUnknot}$.
It follows from Prop. \ref{prop:Invariance} that $\Kh(U_\ell)$ and
$\Kh(U_\ell!)$ have only one non-zero element, respectively in degree $\ell$
and 0.
Then the exact long sequence presented in section
\ref{sec:exact-sequence}, applied to the rightmost crossing, gives for every $\ell\in\N^*$
\[
\begin{array}{l}
\xymatrix{0\ar[r]&\Kh^r(\Dtl{\ell})\ar[r]^{\beta^r_\ell}&\Kh^r(\Dtl{\ell-1})\ar[r]&0}\textrm{
  for }r\in\llbracket0,\ell-2\rrbracket
  \\[.5cm]
  \xymatrix@R=.3cm{0\ar[r]&\Kh^{\ell-1}(\Dtl{\ell})\ar[r]^{\beta^{\ell-1}_\ell}&\Kh^{\ell-1}(\Dtl{\ell-1})\ar[r]&\Kh^{\ell-1}(U_{\ell-1})\ar[r]^{\alpha_\ell}\ar@{}[d]|{\rotatebox{270}{$\cong$}}&\Kh^\ell (\Dtl{\ell})\ar[r]&0\\
    &&&\F_2&&}\\[1.2cm]
  \xymatrix{0\ar[r]&\Kh^{r-1}(\Dtl{\ell-1}!)\ar[r]^{\alpha^r_\ell}&\Kh^r(\Dtl{\ell}!)\ar[r]&0}\textrm{
  for }r\in\llbracket2,\ell\rrbracket \\[.5cm]
\xymatrix@R=.3cm{0\ar[r]&\Kh^0(\Dtl{\ell}!)\ar[r]^{\beta_\ell}&\Kh^0(U_{\ell-1}!)\ar[r]\ar@{}[d]|{\rotatebox{270}{$\cong$}}&\Kh^0(\Dtl{\ell-1}!)\ar[r]^{\alpha^1_n}&\Kh^1(\Dtl{\ell}!)\ar[r]&0\\
    &&\F_2&&&}.
\end{array}
\]
But $\Kh^\ell (\Dtl{\ell})\neq 0$ since
$\func{\p_{\Dtl{\ell}}}{C^{\ell-1}(\Dtl{\ell})}{C^\ell (\Dtl{\ell})}$ involves only splitting
circles, so the weight of any image is necessarily even and every
single generator survives in homology.
Similarly, it is easy to produce a non trivial element in the kernel of
$\func{\p_{\Dtl{\ell}!}}{C^0(\Dtl{\ell}!)}{C^1(\Dtl{\ell}!)}$ and since there is nothing
to quotient by, it follows that $\Kh^0(\Dtl{\ell}!)\neq 0$.

Then, by induction, we can deduce that all the named maps are
isomorphisms and that:
\[
\Kh_r(\Dtl{\ell})=\left\{
  \begin{array}{l}
    \F_2\textrm{ for }r=0\textrm{ and }r\in\llbracket2,\ell\rrbracket\\
    0 \textrm{ otherwise}
  \end{array}
\right.
\hspace{2cm}
\Kh_r(\Dtl{\ell}!)=\left\{
  \begin{array}{l}
    \F_2\textrm{ for }r\in\llbracket0,\ell-2\rrbracket\textrm{ and }r=\ell\\
    0 \textrm{ otherwise}
  \end{array}
\right..
\]

\subsection{Length and dimension}
\label{sec:length-dimension-torus}

\begin{prop}\label{prop:TorusLength}
  $n_{\ell,r}=2^{r-1}{\ell\choose r}$ and $k_{\ell,r}=1$.
\end{prop}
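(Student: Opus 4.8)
The plan is to read both parameters straight off Proposition~\ref{prop:ChainCodes} applied with $i_0=r$. There it gives $n_{\ell,r}=\dim\big(C^r(\Dtl{\ell})\big)$, and it gives $k_{\ell,r}$ as the dimension of the middle homology of the length~$3$ piece $C^{r-1}(\Dtl{\ell})\to C^r(\Dtl{\ell})\to C^{r+1}(\Dtl{\ell})$; since $2\le r\le\ell$ all three of these are genuine terms of $C(\Dtl{\ell})$, so that middle homology is exactly $\Kh^r(\Dtl{\ell})$. So I would split the proof into the length count and the dimension count.

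For the length I would enumerate the enhanced resolutions of $\Dtl{\ell}$ having exactly $r$ one-resolved crossings. There are $\binom{\ell}{r}$ ways to choose which $r$ of the $\ell$ crossings of the twist region are $1$-resolved, so it suffices to show that \emph{every} such resolution consists of exactly $r$ circles (this is where $r\ge 1$ is used). This is the combinatorial heart of the argument. In the standard $(2,\ell)$--torus diagram $\Dtl{\ell}$ the crossings form a single twist region on two strands; the $0$-resolution of a crossing is the oriented (``identity'') smoothing and the $1$-resolution is the ``turnback'' smoothing. An identity smoothing is transparent for the purpose of counting components, so such a resolution has the same number of circles as the closure of a vertical stack of $r$ turnbacks; tracing that closed diagram yields one ``outer'' circle together with one bigon between each consecutive pair of turnbacks, i.e. $1+(r-1)=r$ circles. (As a consistency check, this is precisely what makes $\p_{\Dtl{\ell}}\colon C^{\ell-1}(\Dtl{\ell})\to C^\ell(\Dtl{\ell})$ consist only of splittings, as used in Section~\ref{sec:homology}, and it reproduces the values $n_{\ell,\ell}=2^{\ell-1}$ and $n_{\ell,2}=\ell(\ell-1)$.) Since $\Dtl{\ell}$ is pointed, one of these $r$ circles carries the marked point and is forced to the label $X$, while the remaining $r-1$ circles may independently be labelled $-$ or $+$; hence each of the $\binom{\ell}{r}$ resolutions supports $2^{r-1}$ enhanced resolutions, giving $n_{\ell,r}=2^{r-1}\binom{\ell}{r}$.

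For the dimension I would simply invoke the computation of the Khovanov homology of $\Dtl{\ell}$ carried out in Section~\ref{sec:homology}: it is shown there, by induction along the long exact sequences of the rightmost crossing, that $\Kh^r(\Dtl{\ell})\cong\F_2$ for every $r\in\llbracket 2,\ell\rrbracket$. By the first paragraph this is exactly the middle homology of the length~$3$ piece defining the $(\ell,r)^{\textrm{th}}$ code, so $k_{\ell,r}=1$.

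The only genuine work is the circle count, and the one thing to be careful about is that it holds uniformly over all $\binom{\ell}{r}$ resolutions — which the ``identity smoothings are transparent'' reduction makes clear — and that it really needs $r\ge 1$ (the all-zero resolution has two circles, which is precisely why the formula is stated only for $r\ge 2$). Everything else is bookkeeping with Proposition~\ref{prop:ChainCodes} and the homology already computed.
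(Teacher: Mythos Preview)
Your argument is correct and follows essentially the same route as the paper's proof: count the $\binom{\ell}{r}$ choices of $1$--resolved crossings, observe that each such resolution has $r$ circles (hence $r-1$ undotted ones to label), and cite Section~\ref{sec:homology} for $k_{\ell,r}$. The paper compresses the circle count into a single clause (``it remains $r-1$ undotted circles to label''), whereas you spell out the ``identity smoothings are transparent, so reduce to the closure of $r$ turnbacks'' reduction; this extra detail is sound and arguably clarifies why the count is independent of \emph{which} $r$ crossings are $1$--resolved.
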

\begin{proof}
  Concerning the length, one have to choose the $r$ 1--resolved crossings and then it remains
  $r-1$ undotted circles to label.

  The dimension has been computed in the previous section.
\end{proof}

\subsection{Minimum distance}
\label{sec:minimum-distance-torus}

\begin{prop}
 For $r\in\llbracket2,\ell\rrbracket$, $d^r_{\Dtl{\ell}}={\ell\choose r}$ and $d^0_{\Dtl{\ell}}=2$.
\end{prop}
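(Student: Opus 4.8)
The plan is to establish the two formulas separately. The value $d^0_{\Dtl{\ell}}=2$ is a direct computation: the $0$--resolution of $\Dtl{\ell}$ consists of two circles, one of which is the pointed circle (forced label $X$), so in the $\{-,+\}$--basis $C^0(\Dtl{\ell})$ has exactly two generators $g_-,g_+$, distinguished by the label on the other circle. Every $\p_c$ merges the two circles, and since $X\cdot(-)=X\cdot(+)=X$ it sends both $g_-$ and $g_+$ to $\sum_c\big(\text{generator of }C^1\text{ at }D_{\{c\}}\big)\neq 0$. Hence $\Ker(\p_{\Dtl{\ell}})\cap C^0=\{0,g_-+g_+\}$, nothing maps into $C^0$, and as $H^0\big(C(\Dtl{\ell})\big)\cong\F_2$ the unique non-zero class is represented only by $g_-+g_+$, of weight $2$.

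Fix now $r\in\llbracket 2,\ell\rrbracket$; by Section~\ref{sec:homology} we have $H^r\big(C(\Dtl{\ell})\big)\cong\F_2$, so I must produce a cycle of weight ${\ell\choose r}$ in the non-zero class (upper bound) and show no representative is lighter (lower bound). For the upper bound, let $z_r:=\sum_{|S|=r}g_S$ be the sum over the ${\ell\choose r}$ size-$r$ subsets $S$ of crossings of the $\{-,+\}$--generator $g_S$ at the resolution $D_S$ whose pointed circle carries $X$ and whose $r-1$ other circles all carry $-$, so $|z_r|={\ell\choose r}$. That $\p_{\Dtl{\ell}}z_r=0$ is checked locally at each size-$(r+1)$ subset $T$: passing from $D_S$ to $D_T$ (for $S\subset T$) splits one circle, and in the $\{-,+\}$--basis a split acts by $-\mapsto(-)\otimes(+)+(+)\otimes(-)$ and, on the pointed circle, by $X\mapsto(X)\otimes(+)+(X)\otimes(-)$; since every circle of a resolution of the closed $2$--braid is incident to exactly two of its cup--caps, each generator occurring in $(\p_{\Dtl{\ell}}z_r)_{D_T}$ is produced by exactly two of the terms $\p_c(g_{T\setminus c})$, $c\in T$, and everything cancels in pairs. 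That $[z_r]\neq 0$ follows by induction on $\ell$: applying Corollary~\ref{prop:ConeExactSequence} to the rightmost crossing $c$ of $\Dtl{\ell}$ (with $0$--resolution $\Dtl{\ell-1}$ and $1$--resolution $U_{\ell-1}$), and using $\Kh^{r-1}(U_{\ell-1})=\Kh^r(U_{\ell-1})=0$ for $r\le\ell-2$ together with the surjectivity of $\Kh^{\ell-1}(U_{\ell-1})\to\Kh^{\ell}(\Dtl{\ell})$ when $r=\ell-1$, one gets an isomorphism $\beta^*\colon\Kh^r(\Dtl{\ell})\xrightarrow{\ \sim\ }\Kh^r(\Dtl{\ell-1})$; and $\beta(z_r)$ is precisely the analogous cycle for $\Dtl{\ell-1}$, so descending to $\ell=r$ — where $z_r$ is a single top-degree generator whose class is non-zero because $\p$ there is a single circle split on each generator, hence has image contained in the even-weight vectors — gives $[z_r]\neq 0$ throughout. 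Thus $d^r_{\Dtl{\ell}}\le{\ell\choose r}$.

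The lower bound goes by induction on $\ell$, the base $\ell=r$ being the remark just made, $d^r_{\Dtl{r}}=1={r\choose r}$. For $\ell>r$, Proposition~\ref{prop:KhovCone} gives $C(\Dtl{\ell})\cong\Cone\big(\p_c\colon C(\Dtl{\ell-1})\longrightarrow C(U_{\ell-1})\big)$, hence $C^r(\Dtl{\ell})=C^r(\Dtl{\ell-1})\oplus C^{r-1}(U_{\ell-1})$; write a non-bounding cycle $z$ as $z=z'+u$ along this decomposition. The cone differential forces $\p_{\Dtl{\ell-1}}z'=0$ and $\p_c z'=\p_{U_{\ell-1}}u$, while $[z']=\beta^*[z]\neq 0$, so $|z'|\ge{\ell-1\choose r}$ by induction. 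As the two summands have disjoint support, $|z|=|z'|+|u|$, and it remains to prove $|u|\ge{\ell-1\choose r-1}$, which then yields $|z|\ge{\ell-1\choose r}+{\ell-1\choose r-1}={\ell\choose r}$.

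The inequality $|u|\ge{\ell-1\choose r-1}$ is the heart of the matter and the step I expect to be the main obstacle. Morally, $u$ must ``carry'' the element $\p_c z'$: since $z'$ is a non-bounding cycle it is non-zero on many resolutions, $\p_c$ is injective (being a circle split in the $\{-,+\}$--basis), so $\p_c z'$ is non-trivial on each of those resolutions, and the equation $\p_{U_{\ell-1}}u=\p_c z'$ propagates this to a lower bound on the support of $u$. A plain covering/shadow count is not enough (it is already too weak for $\ell=4$), so one must also use the precise shape of the cycles and the homology of the twisted-unknot diagrams $U_m$, which by Theorem~\ref{prop:Invariance} is one-dimensional with weight controlled through Corollary~\ref{cor:ReidemeisterWeight}. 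Concretely, I would run a single induction on $\ell$ treating the families $\Dtl{\cdot}$ and $U_{\cdot}$ together, the $U_m$-statement asserting that the least weight of a solution $u$ of $\p_{U_m}u=\p_c w$, for $w$ a minimal non-bounding cycle in the relevant degree, equals the weight of the corresponding part of the explicit cycle $z_r$, namely ${\ell-1\choose r-1}$. Matching the explicit cycle to this syndrome-type quantity on $U_{\ell-1}$ is the delicate bookkeeping; the rest of the argument is formal.
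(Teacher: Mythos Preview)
Your computation of $d^0_{\Dtl{\ell}}=2$ is fine and matches the paper's. For the upper bound $d^r_{\Dtl{\ell}}\le{\ell\choose r}$ you essentially rediscover the paper's strategy: exhibit a cycle with exactly one generator per resolution (the paper produces a whole family $D^r_\ee$ of such cycles, one for each $\ee\in\{+,-\}^{\ell-1}$; your $z_r$ is one particular member). Your justification that $z_r$ is a cycle is a bit loose (the pairing of terms is not quite ``each circle is incident to exactly two cup--caps''; the two cancelling contributions for a given generator at $D_T$ come from two \emph{adjacent} elements of $T$, one on each side of the circle carrying the exceptional label), and your non-triviality argument via the long exact sequence is correct but more elaborate than needed.

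The genuine gap is the lower bound. You set up the cone decomposition $z=z'+u$, obtain $|z'|\ge{\ell-1\choose r}$ by induction, and then need $|u|\ge{\ell-1\choose r-1}$. You explicitly flag this as ``the heart of the matter'' and offer only a heuristic (``morally'', ``I would run a single induction\ldots'', ``delicate bookkeeping''). This is not a proof, and the shadow/covering count you allude to is, as you yourself note, already too weak for $\ell=4$. Nothing in your sketch explains how the homology of $U_{\ell-1}$ or Corollary~\ref{cor:ReidemeisterWeight} would force a lower bound on the weight of a $\p_{U_{\ell-1}}$--preimage of $\p_c z'$; that is a statement about \emph{cochains}, not homology classes, and the tools you invoke control the latter. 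So as written the argument does not close.

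The paper avoids this inductive bottleneck entirely with a one--line parity observation you have essentially already made in your base case $\ell=r$: for $\Dtl{\ell}$, every $\p_c$ is a \emph{split}, so the image of $\p_{\Dtl{\ell}}$ in $C^r$ has even support on each resolution $\phi\in E_r$. Consequently any element with exactly one generator on some resolution survives in homology. This simultaneously gives (i)~$[z_r]\neq 0$ (no induction needed), and (ii)~the lower bound: if $x$ were a non-bounding cycle with $|x|<{\ell\choose r}$, some resolution $\phi_0$ would be missing from $x$, hence would appear exactly once in $x+z_r$, so $[x+z_r]\neq 0$; but $\dim\Kh^r(\Dtl{\ell})=1$ forces $[x]=[z_r]$, a contradiction. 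This replaces your entire cone induction by two sentences.
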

\begin{proof}
  Within the framework of this proof and for simplicity, we will
  denote $\Dtl{\ell}$ by $D$ and $\{+,-\}^{\ell-1}$ by $S$.

Equality $d^0_{\Dtl{\ell}}=2$ follows from the fact that $C^0(D)$
has only two generators with equal non-zero image throught $\p_D$.

Now, we consider $r\in\llbracket2,\ell\rrbracket$.
  First we note that the cardinal of the set $E_r:=\Big\{\func{\phi}{\{\textrm{crossings
      of }D\}}{\{0,1\}}\ \Big|\ |\phi^{-1}(1)|=r\Big\}$ is
  ${\ell\choose r}$.
  Then, we construct a map
\[
\begin{array}{ccc}
    E_r & \longrightarrow &
    \{\textrm{labelling maps}\}\\[.2cm]
    \phi & \longmapsto & \sigma_\phi
  \end{array}
\]
\noi so that $\sum_{\phi\in E_r} D_\phi^{\sigma_\phi}$ is in the
kernel of $\func{\p_D}{C^r(D)}{C^{r+1}(D)}$.
  To this end, we choose $\ee:=(\e_1,\cdots,\e_{\ell-1})\in S$.\\
  When 1--resolving all the crossing of $D$, we obtain a resolution $D_\ee$
 with $\ell-1$ undotted
  circles and we label them, from left to right with
  $\e_1,\e_2,\cdots,\e_{\ell-2}$ and $\e_{\ell-1}$:
\[
D_\ee:=\dessin{2cm}{AllResolved}.
\]
To $\phi\in E_r$ corresponds a resolution of $D$ where $\ell-r$ crossings are
turned into $0$--resolutions.
Roughly, we define $D^r_{\ee}$ as the image of $D_\ee$ under the
partial maps $\p^{-1}_c$, for $c$ successively all these $\ell-r$ crossings.
Explicitly, the $\ell-1$ circles merge into $r-1$ ones and there are numbers
$a,b_1,\cdots,b_{r-1},c\in\N$ such that the first $a$ and the last
$c$ crossings are 0--resolved, and the $i^\textrm{th}$ circle, numbered from
left to right, contains $b_i$ 0--resolved crossings.
Note that the $(r+1)$-uple $(a,b_1,\cdots,b_{r-1},c)$ determines $\phi$.
Then we denote by $B_i$ the sum $1+a+\disp{\sum_{j=1}^{i-1}}(1+b_i)$ and
we define $\sigma_\phi$ the map which label the $i^\textrm{th}$ circle
by $\Lambda_i:=(-1)^{1+b_i}\e_{B_i}\e_{B_i+1}\cdots\e_{B_i+b_i}$.
For instance, the case $\ell=10$, $r=4$, $a=2$, $b_1=1$, $b_2=0$, $b_3=3$ and
$c=0$ gives
\[
\dessin{2cm}{Exemple}\ \leadsto\ \dessin{2cm}{Exemple2}.
\]
We denote $\sum_{\phi\in E_r} D_\phi^{\sigma_\phi}$ by $D^r_\ee$ and claim that $\p_D(D^r_\ee)=0$.
It is sufficient to show that for any given $\varphi\in E_{r+1}$ the
contributions of the form $D_\varphi^\sigma$ cancel.
So let us choose such a $\varphi$.
As above, we can describe it by integers $a,b_1,\cdots,b_r,c$.
The elements of $E_r$ such that $\p_D(D_\phi^{\sigma_\phi})$
contributes are
\[
\begin{array}{cl}
  (a+1+b_1,b_2,\cdots,b_r,c);&\\
  (a,b_1,\cdots,b_{i-1},b_i+1+b_{i+1},b_{i+2},\cdots,b_r,c) & \textrm{
    for }i\in\llbracket1,r-1\rrbracket;\\
  (a,b_1,\cdots,b_{r-1},b_r+1+c). 
\end{array}
\]
The labels of their circles, given from left to right and using
the same notation $\Lambda_i$ as above are
\[
\begin{array}{cl}
 \big(\Lambda_2,\cdots,\Lambda_r\big); &\\
  \big(\Lambda_1,\cdots,\Lambda_{i-1},-\Lambda_i\Lambda_{i+1},\Lambda_{i+2},\cdots,\Lambda_r\big)& \textrm{
    for }i\in\llbracket1,r-1\rrbracket;\\
   \big(\Lambda_1,\cdots,\Lambda_{r-1}\big).
\end{array}
\]
And their contributions, again given by the labels of the circles, are
\[
\begin{array}{cl}
 \big(\Lambda_1,\Lambda_2,\cdots,\Lambda_r\big)\ ;\
 \big(-\Lambda_1,\Lambda_2,\cdots,\Lambda_r\big)\ &\\
  \big(\Lambda_1,\cdots,\Lambda_{i-1},-\Lambda_i,\Lambda_{i+1},\Lambda_{i+2},\cdots,\Lambda_r\big)\ ;\  \big(\Lambda_1,\cdots,\Lambda_{i-1},\Lambda_i,-\Lambda_{i+1},\Lambda_{i+2},\cdots,\Lambda_r\big)& \textrm{
    for }i\in\llbracket1,r-1\rrbracket;\\
   \big(\Lambda_1,\Lambda_2,\cdots,-\Lambda_r\big)\ ;\ \big(\Lambda_1,\Lambda_2,\cdots,\Lambda_r\big); 
\end{array}
\]
They do cancel indeed.
The element $D^r_\ee$ is hence an
element of the kernel of $\p_D$ which contains exactly one element for
each resolution of $E_r$.
But the map $\p_D$ only splits circles, so it produces even numbers of
contributions for each resolution of $E_r$.
So $D^r_\ee$ cannot be in the image of $\p_D$ and it survives in homology.
Moreover, it is of weight ${\ell\choose r}$.

Now, we assume {\it ad absurdum} that there exists $x\in\ker({\p_D}_{|C^r(D)})$, surviving in homology and
satisfying $|x|<|D^r_\ee|$.
Then there is a resolution of $E_r$ which
doesn't appear in $x$, and hence it appears exactly once in $x+D^r_\ee$.
It follows that $x+D^r_\ee$ survives in
homology for the same reason as $D^r_\ee$.
But $\dim\big(\Kh^r(D)\big)=1$, so $[x]=[D^r_\ee]$ and $[x+D^r_\ee]=0$.
This concludes the proof for $r\in\llbracket2,\ell\rrbracket$.
\end{proof}
\begin{remarque}
  Defining a representative of the non trivial homology class for
  every $(\e_1,\cdots,\e_{\ell-1})\in S$ is obviously
  redundant since the simpliest case, when all circles are
  labelled by $-$, would have been sufficient. However, all these $D^r_\ee$ will be helpful in the proof of  the
  next proposition.
\end{remarque}
\begin{prop}\label{prop:MinDist}
  For $r\in\llbracket0,\ell-2\rrbracket$, $d^r_{\Dtl{\ell}!}=2^{\ell-r-1}$ and $d^\ell_{\Dtl{\ell}!}=1$.
\end{prop}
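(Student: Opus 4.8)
The plan is to compute the minimum distance of the dual complex $\Dtl{\ell}!$ by transferring the problem back to $\Dtl{\ell}$ itself, using the Poincaré duality of Proposition \ref{prop:Duality}. Recall that there is a generator-to-generator chain map $\func{m}{C^\vee_{n-i}(D)}{C^i(D!)}$ which induces an isomorphism $\Kh^i(D!)\cong\Kh^\vee_{n-i}(D)$; since $m$ sends a single generator to a single generator, it preserves weights exactly. Hence, with $n$ the number of crossings of $\Dtl{\ell}$ (so $n=\ell$ for the standard torus diagram) and writing $D:=\Dtl{\ell}$, we have $d^r_{\Dtl{\ell}!}=d^{\,\textrm{coh}}_{\ell-r}(D)$, the minimal weight of a cochain in $C^\vee_{\ell-r}(D)$ that is a nonzero cohomology class. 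So the real task is to compute the \emph{cohomological} minimum distances of $\Dtl{\ell}$ in degrees $\ell-r$ for $r\in\llbracket 0,\ell-2\rrbracket$, i.e. in cohomological degrees $2,\dots,\ell$, and show they equal $2^{\ell-r-1}$, together with the degree-$0$ case giving $d^\ell_{\Dtl{\ell}!}=1$.

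First I would handle the trivial degenerate case: in cohomological degree $\ell$ (corresponding to $r=\ell$, i.e. $\ell-r=0$), the space $C^\vee_0(D)$ is dual to $C^0(D)$, which has dimension $2$, and by the homology computation of section \ref{sec:homology} the cohomology $\Kh_\ell(\Dtl{\ell}!)\cong\F_2$ is one-dimensional; a single dual generator is a nonzero class, so $d^\ell_{\Dtl{\ell}!}=1$. For the main range, the key tool is Lemma \ref{lem:CoHoNull}: a cocycle $\varphi\in\Ker(\p^\vee)\cap C^\vee_s(D)$ is nonzero in $H_s(D)$ \emph{iff} it does not vanish on every generator of a generating set of $H^s(D)$. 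But $H^s(D)=\Kh^s(D)$ is one-dimensional for $s\in\llbracket 2,\ell\rrbracket$, generated by the explicit cycle $D^s_{\ee}$ constructed in the previous proposition — in fact by $D^s_{\underline{-}}$, the ``all minus'' representative, which is a sum of exactly one enhanced resolution $D^{\sigma_\phi}_\phi$ for each $\phi\in E_s$. Therefore $\varphi$ is a nonzero cohomology class iff $\sum_{\phi\in E_s}\varphi(D^{\sigma_\phi}_\phi)\neq 0$ in $\F_2$; since we are over $\F_2$, this says $\varphi$ is supported on an \emph{odd} number of the $|E_s|=\binom{\ell}{s}$ generators appearing in $D^s_{\underline{-}}$. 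The cheapest way to achieve this while also being a cocycle is what I need to pin down: a naive candidate is the single dual generator $(D^{\sigma_\phi}_\phi)^\vee$ for one $\phi\in E_s$, which has weight $1$ — but this need not be a cocycle, since $\p^\vee$ of it picks up all generators mapping into $D^{\sigma_\phi}_\phi$ under $\p_D$. So I would instead look for a minimal cocycle: spell out $\p^\vee$ explicitly using the splitting/merging rules, observe that on the torus diagram the relevant resolutions $D_\phi$ with $|\phi^{-1}(1)|=s$ are chains of $s-1$ circles, and run a combinatorial count. I expect the answer to come out as $2^{\ell-s-1}$ by a direct "the support must be an orbit under the sign-labellings $\ee\in S$ that is forced by the cocycle condition" argument: the cocycle condition couples, for each merging, the values on resolutions in $E_{s}$ with those in $E_{s-1}$, and the smallest closed-up support under these constraints that still meets $E_{s}$ in an odd set has size $2^{\ell-s-1}$. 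Conversely, I would exhibit such a cocycle explicitly — the natural guess is the dual of $D^{\ell-r-\textrm{something}}_{\ee}$-type object in the mirror, or more concretely the indicator of all resolutions whose 0-resolved crossings avoid a fixed "slot", which merges the $2^{\ell-1}$ labellings down in a controlled way — and check by the same cancellation bookkeeping as in the previous proposition that it is killed by $\p^\vee$.

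The main obstacle will be the explicit construction and weight count of the minimal cocycle in $C^\vee_{\ell-r}(\Dtl{\ell})$: unlike the cycle case, where $D^r_{\ee}$ was forced to hit every resolution in $E_r$ (because $\p_D$ only splits circles, producing even contributions), the dual differential $\p^\vee$ corresponds to the mirror diagram where $\p_{D!}$ only \emph{merges} circles, and the parity argument runs the other way; I will need to identify precisely which resolutions in the mirror are "free" generators surviving in cohomology and organize the cocycle support accordingly. A clean way to do this is to apply the whole analysis of the previous proposition verbatim to $\Dtl{\ell}!$ (which is again a $(2,\ell)$-torus-type diagram, just with all crossings mirrored, so $\p_{\Dtl{\ell}!}$ merges rather than splits): in cohomological degree $s$ of the mirror, one gets $\binom{\ell}{s}$ resolutions, each a union of $\ell-s+1$ circles, and the surviving cohomology class is supported on a full slice; then Poincaré duality converts the statement $d^{\,\textrm{coh}}_{s}(\Dtl{\ell})=\binom{\ell}{\ell-s}\cdot(\textrm{label-freedom factor})$ into the claimed $2^{\ell-r-1}$ after re-indexing $s=\ell-r$ and accounting that in the mirror the merging (rather than splitting) crossings contribute factors of $2$ from the extra circles. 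So in practice I would: (1) reduce via Proposition \ref{prop:Duality} to a statement about $\Kh^\bullet$ of the mirror diagram with weights; (2) re-run the construction of the previous proof on the mirror; (3) use $\dim\Kh^s=1$ together with the parity-of-contributions argument to conclude both the lower bound and the exactness; (4) separately dispatch the degree-$0$ endpoint. The bookkeeping in step (2) — getting the factor $2^{\ell-r-1}$ rather than $\binom{\ell}{r}$ — is where I expect to spend the most care.
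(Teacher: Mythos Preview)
Your reduction via Poincar\'e duality and the invocation of Lemma~\ref{lem:CoHoNull} are the right opening moves, and your treatment of the endpoint $r=\ell$ is fine. But the core of the argument is missing, and your ``main obstacle'' paragraph is precisely where the proof lives. For the \emph{lower bound} you commit to a single representative $D^{\ell-r}_{\underline{-}}$ and then hope to characterise minimal cocycles by a direct structural analysis; the paper avoids this entirely. The trick is to use the \emph{whole family} $\{D^{\ell-r}_{\ee}\}_{\ee\in S}$ constructed in the previous proposition (this is exactly why that proof built a representative for every $\ee$, as the remark there flags). Since $\dim\Kh^{\ell-r}(\Dtl{\ell})=1$, each $D^{\ell-r}_\ee$ alone generates the homology, so by Lemma~\ref{lem:CoHoNull} a cocycle $x^\vee$ is trivial as soon as it vanishes on \emph{one} of them. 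Now a single generator $x_0\in C^{\ell-r}(\Dtl{\ell})$ has $r$ crossings $0$--resolved, and tracing the recipe $\ee\mapsto\sigma_\phi$ (each $\p_c^{-1}$ doubles the count) shows $x_0$ appears in exactly $2^r$ of the $D^{\ell-r}_\ee$. Hence if $|x|<2^{\ell-r-1}$ then $x^\vee$ can be nonzero on at most $2^r|x|<2^{\ell-1}=\#S$ of the representatives, so some $\ee$ witnesses $x^\vee(D^{\ell-r}_\ee)=0$ and $[x^\vee]=0$. This pigeonhole replaces all of your proposed cocycle combinatorics.

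For the \emph{upper bound} you plan to exhibit an explicit minimal cocycle, but your description is vague and the ``re-run on the mirror'' idea does not obviously yield the power of $2$ (in the mirror $\p$ merges, so the parity trick from the previous proof points the wrong way). The paper instead uses the long exact sequence of Section~\ref{sec:homology}: the map $\beta_\ell$ is a generator-preserving isomorphism identifying $\Kh^0(\Dtl{\ell}!)$ with $\Kh^0(U_{\ell-1}!)$, and Corollary~\ref{cor:ReidemeisterWeight} (iterated R$1^-$ moves) gives $d^0_{U_{\ell-1}!}=2^{\ell-1}$, so $d^0_{\Dtl{\ell}!}=2^{\ell-1}$ by Proposition~\ref{prop:WeightMap}. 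Then the isomorphisms $\alpha^r_\ell$ (which send one generator to one generator) inductively push this down to $d^r_{\Dtl{\ell}!}\leq 2^{\ell-r-1}$. You should use these maps rather than attempt a bare-hands construction.
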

\begin{proof}
  Since the homology is non trivial in degree $\ell$, the assertion on
  $d^\ell_{\Dtl{\ell}!}$ is also trivial.

  The map $\beta_\ell$ from section \ref{sec:homology} is an isomorphism
  and the map underlying $\beta_\ell$
  at the chain complexes level is a generator-preserving isomorphism.
  Since it follows from Prop. \ref{cor:ReidemeisterWeight} that
  $d^0_{U_{\ell-1}}=2^{\ell-1}$, Prop. \ref{prop:WeightMap} implies that
  $d^0_{\Dtl{\ell}!}= 2^{\ell-1}$.
  Then, inductive use of maps $\alpha^r_\ell$, for
  $r\in\llbracket1,\ell-2\rrbracket$, shows that
  $d^r_{\Dtl{\ell}!}\leq2^{\ell-r-1}$.

  Reciprocally, we consider an element $x\in\Ker(\p_{\Dtl{\ell}!})\cap
  C^r(\Dtl{\ell}!)$ such that $|x|<2^{\ell-r-1}$. Up to the reversing of all
  signs, $x$ can be seen as an element $x^\vee$ of the dual of
  $C^{\ell-r}(\Dtl{\ell}!)$.
  Using the notation
  of the previous proof, our goal is now to prove that there exists some
  $\ee\in S$ such that $x^\vee (D^r_\ee)=0$. It will follow
  from lemma \ref{lem:CoHoNull} and the fact that $\dim\big(C^{\ell-r}(\Dtl{\ell})\big)=1$, that $x^\vee$ is null in cohomology.
  Let $x_0$ be a generator of $C^{\ell-r}(\Dtl{\ell}!)$ and $x^\vee_0$ its dual
  element under the map $m$ of Prop. \ref{prop:Duality}. For $x^\vee_0(D^r_\ee)=0$ to hold, it is sufficient that $x_0$
  doesn't appear in $D^r_\ee$. The generator $x_0$ is determined by its
  labelling $(\eta_1,\cdots,\eta_{\ell-r-1})\in\{+,-\}^{\ell-r-1}$ read from left to
  right and its $r$ crossings $c$ which are
  $0$--resolved. It is easily checked that for each such crossing,
  $\p^{-1}_c(\textrm{any generator})$ contains exactly two
  elements. It follows that there are only $2^r$ elements
  $\ee\in S$ so that $x^\vee_0(D^r_\ee)\neq0$.
  As a consequence, there is at most $2^r|x|<2^{\ell-1}=\# S$ elements
  $\ee\in S$ so that $x^\vee (D^r_\ee)\neq0$. There is thus room for
  at least one $\ee\in S$ such that $x^\vee (D^r_\ee)=0$. This concludes the proof.
\end{proof}

\begin{cor}
  $d_{\ell,r}=\min\left\{{\ell\choose r},2^{r-1}\right\}$.
\end{cor}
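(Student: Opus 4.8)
The plan is simply to assemble the two minimum-distance computations just carried out, using Proposition \ref{prop:ChainCodes} together with Poincaré duality. By Proposition \ref{prop:ChainCodes} applied to the length-$3$ piece $\big(C^{r-1}(\Dtl{\ell})\xrightarrow{\p_{\Dtl{\ell}}}C^{r}(\Dtl{\ell})\xrightarrow{\p_{\Dtl{\ell}}}C^{r+1}(\Dtl{\ell})\big)$, the minimum distance $d_{\ell,r}$ is the smaller of two quantities: the minimal $\BB$--weight $d^{r}_{\Dtl{\ell}}$ of a non-zero class in $\Kh^{r}(\Dtl{\ell})=H^{r}\big(C(\Dtl{\ell})\big)$, and the minimal $\BB$--weight of a non-zero class on the cohomology side $H_{r}\big(C(\Dtl{\ell})\big)$. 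Since here $r\in\llbracket 2,\ell\rrbracket$, the first quantity is ${\ell\choose r}$ by the preceding proposition.

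For the second quantity I would pass to the mirror image via Proposition \ref{prop:Duality}. The diagram $\Dtl{\ell}$ has exactly $\ell$ crossings, so that proposition provides, in each degree $r$, an isomorphism $H_{r}\big(C(\Dtl{\ell})\big)\cong\Kh^{\ell-r}(\Dtl{\ell}!)$ induced by the generator-to-generator chain map $m$ of Proposition \ref{prop:Duality}. This $m$ is a bijection between the distinguished bases of the two chain groups, hence preserves $\BB$--weight; consequently the minimal weight of a non-zero class in $H_{r}\big(C(\Dtl{\ell})\big)$ is exactly $d^{\ell-r}_{\Dtl{\ell}!}$. (Formally, this is the equality case of Proposition \ref{prop:WeightMap} applied with $\psi=m$ and $k=1$, the map being injective, surjective on generators, and trivially carrying a minimally weighted representative in its image.) For $r\in\llbracket 2,\ell\rrbracket$ we have $\ell-r\in\llbracket 0,\ell-2\rrbracket$, so Proposition \ref{prop:MinDist} gives $d^{\ell-r}_{\Dtl{\ell}!}=2^{\ell-(\ell-r)-1}=2^{r-1}$.

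Combining the two, $d_{\ell,r}=\min\big\{{\ell\choose r},2^{r-1}\big\}$, which is the assertion. The boundary case $r=\ell$ is consistent: there ${\ell\choose r}=1=d^{\ell}_{\Dtl{\ell}}$ while $2^{r-1}=2^{\ell-1}=d^{0}_{\Dtl{\ell}!}$, and the minimum is $1$. There is no genuine obstacle: the proof is pure bookkeeping assembling results already established, and the only point deserving a line of justification is that the duality map $m$ is $\BB$--weight-preserving, which is immediate since it permutes the natural bases of the chain groups involved.
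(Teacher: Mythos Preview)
Your proof is correct and is exactly the argument the paper intends: the corollary is stated without proof, immediately after the two propositions computing $d^{r}_{\Dtl{\ell}}={\ell\choose r}$ and $d^{\ell-r}_{\Dtl{\ell}!}=2^{r-1}$, and the summary table makes clear that one is simply meant to combine them via Proposition~\ref{prop:ChainCodes} and the weight-preserving duality map $m$ of Proposition~\ref{prop:Duality}. Your explicit remark that $m$ is a bijection of basis elements (hence $\BB$--weight-preserving) is the only point that genuinely needs a word, and you supply it.
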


\subsection{Summary}
\label{sec:summary}
\[
\begin{array}{|c|c|c|c|c|c|}
  \hline
    r \bigstrut & 0 & 1 & 2 & r\in\llbracket3,\ell-1\rrbracket  & \ell\\
    \hline
    \dim\Big(C^r(\Dtl{\ell})\Big) \rule[-.3cm]{0pt}{.8cm} & 2 & \ell & \ell (\ell+1) &
    2^{r-1}{\ell\choose r} &
    2^{\ell-1}\\
    \hline
    \dim\Big(\Kh^r(\Dtl{\ell})\Big) \rule[-.3cm]{0pt}{.8cm} & 1 & 0 & 1 & 1 & 
    1\\
    \hline
    d^r_{\Dtl{\ell}} \rule[-.3cm]{0pt}{.8cm} & 2 & \infty & \frac{\ell (\ell+1)}{2} &  
    {\ell\choose r} & 1\\
    \hline
    d^{\ell-r}_{\Dtl{\ell}!} \rule[-.3cm]{0pt}{.8cm} & 1 & \infty & 2 &  
    2^{r-1} & 2^{\ell-1}\\
    \hline
\end{array}
\]

\subsection{Extraction of a subfamily}
\label{sec:extraction-subfamily}

Since the minimum distance $d_{\ell,r}$ is a minimum involving ${\ell\choose
  r}$, it collapses for extremal values of $r$.
However, for $r\approx \frac{\ell}{2}$, we have, for large $\ell$, ${\ell\choose
  \frac{\ell}{2}}\sim \frac{2^{\ell+\frac{1}{2}}}{\sqrt{\pi \ell}}$ which is greater than
$2^{\frac{\ell}{2}-1}$.
So one can expect to find a ``best'' value $r_\ell$ such that ${\ell\choose
  r_\ell}\approx 2^{r_\ell-1}$.
As a matter of fact, for every $\ell\in\N^*$, we define
$r_\ell:=\arr\big(\alpha_0 \ell-\beta_0\ln(\ell)+\gamma_0\big)$ with $\arr(\ .\
)$
any
rounding function to the nearest integer, $\alpha_0$ the unique zero
in $(0,1)$ of the function $\big(x\mapsto (2x)^x(1-x)^{1-x}-1\big)$,
$\beta_0:=\frac{1}{2\ln\left(\frac{2\alpha_0}{1-\alpha_0}\right)}$ and
$\gamma_0:=\beta_0\ln\left(\frac{2}{\pi\alpha_0(1-\alpha_0)}\right)$.

\begin{prop}
  The family of $(\ell,r_\ell)^\textrm{th}$ $(2,\ell)$--torus link codes has
  asymptotical parameter $\llbracket n;1;d_n\rrbracket$ with $d_n>\frac{\sqrt{n}}{1,62}$.
\end{prop}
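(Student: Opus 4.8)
The plan is to feed the length, dimension and minimum distance formulas already established for the $(2,\ell)$--torus link codes into a single Stirling estimate, which the definitions of $\alpha_0,\beta_0,\gamma_0$ have been tailored to make transparent.

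By Proposition~\ref{prop:TorusLength} together with the formula $d_{\ell,r}=\min\{\binom{\ell}{r},2^{r-1}\}$ obtained above, for $\ell$ large enough that $r_\ell\in\llbracket2,\ell\rrbracket$ the $(\ell,r_\ell)^{\textrm{th}}$ code has parameters $\llbracket n_\ell;1;d_\ell\rrbracket$ with
\[
n_\ell=2^{r_\ell-1}\binom{\ell}{r_\ell},\qquad d_\ell=\min\{A_\ell,B_\ell\},\qquad A_\ell:=\binom{\ell}{r_\ell},\quad B_\ell:=2^{r_\ell-1}.
\]
Consequently $\dfrac{d_\ell}{\sqrt{n_\ell}}=\dfrac{\min\{A_\ell,B_\ell\}}{\sqrt{A_\ell B_\ell}}=\sqrt{\dfrac{\min\{A_\ell,B_\ell\}}{\max\{A_\ell,B_\ell\}}}=\exp\!\big(-\tfrac12|\ln(A_\ell/B_\ell)|\big)$, so it suffices to prove that $\limsup_{\ell\to\infty}\big|\ln(A_\ell/B_\ell)\big|<2\ln(1.62)$.

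Write $r_\ell=\alpha_0\ell-\beta_0\ln\ell+\gamma_0+\rho_\ell$, where $\rho_\ell\in[-\tfrac12,\tfrac12]$ is the rounding error, and set $x_\ell:=r_\ell/\ell=\alpha_0+O(\ln\ell/\ell)$. Stirling's formula for $\binom{\ell}{r_\ell}$ together with $(r_\ell-1)\ln2=x_\ell\ell\ln2-\ln2$ gives
\[
\ln\frac{A_\ell}{B_\ell}=\ell\,g(x_\ell)-\tfrac12\ln\!\big(2\pi\ell\,x_\ell(1-x_\ell)\big)+\ln2+o(1),\qquad g(x):=-x\ln x-(1-x)\ln(1-x)-x\ln2 .
\]
Now observe that $g(x)=-\ln\!\big((2x)^x(1-x)^{1-x}\big)$ and $g'(x)=\ln\frac{1-x}{2x}$. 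The three calibrations enter in turn: by definition of $\alpha_0$ we have $g(\alpha_0)=0$; by definition of $\beta_0$ we have $g'(\alpha_0)=\ln\frac{1-\alpha_0}{2\alpha_0}=-\tfrac1{2\beta_0}$. A first--order Taylor expansion $g(x_\ell)=g'(\alpha_0)(x_\ell-\alpha_0)+O\!\big((x_\ell-\alpha_0)^2\big)$ combined with $\ell(x_\ell-\alpha_0)=-\beta_0\ln\ell+\gamma_0+\rho_\ell$ and $\ell\cdot O\!\big((x_\ell-\alpha_0)^2\big)=O\!\big((\ln\ell)^2/\ell\big)=o(1)$ then yields
\[
\ell\,g(x_\ell)=\tfrac12\ln\ell-\frac{\gamma_0+\rho_\ell}{2\beta_0}+o(1).
\]
The leading $\tfrac12\ln\ell$ cancels the $-\tfrac12\ln\ell$ coming from the Stirling lower--order term, and the definition $\gamma_0=\beta_0\ln\frac{2}{\pi\alpha_0(1-\alpha_0)}$ is precisely what makes all the remaining constants cancel, so that one is left with the clean identity
\[
\ln\frac{A_\ell}{B_\ell}=-\frac{\rho_\ell}{2\beta_0}+o(1).
\]

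Since $|\rho_\ell|\le\tfrac12$ this gives $\limsup_\ell|\ln(A_\ell/B_\ell)|\le\tfrac1{4\beta_0}$, hence $\liminf_\ell \dfrac{d_\ell}{\sqrt{n_\ell}}\ge e^{-1/(8\beta_0)}$. It remains to check the numerical inequality $e^{-1/(8\beta_0)}>\tfrac1{1.62}$: from $\alpha_0\approx0.7729$ one computes $\beta_0=\big(2\ln\tfrac{2\alpha_0}{1-\alpha_0}\big)^{-1}\approx0.2607$, so $e^{-1/(8\beta_0)}\approx0.6191>0.6173\approx\tfrac1{1.62}$, and the strictness of this inequality absorbs the $o(1)$ for all large $\ell$. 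This last verification is the delicate point: the estimate is essentially sharp --- the constant $1.62$ is just $e^{1/(8\beta_0)}$ rounded up --- so the Stirling bookkeeping and the control of the Taylor remainder must be carried out carefully enough to guarantee that even the extreme rounding $\rho_\ell=\pm\tfrac12$ keeps $d_\ell/\sqrt{n_\ell}$ strictly above $1/1.62$.
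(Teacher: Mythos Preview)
Your proof is correct and follows essentially the same route as the paper, which defers the analysis to Proposition~\ref{prop:BestParam} in the appendix: both arguments apply Stirling to the ratio $2^{r_\ell-1}/\binom{\ell}{r_\ell}$ and show it equals $\bigl(\tfrac{2\alpha_0}{1-\alpha_0}\bigr)^{\rho_\ell}\bigl(1+o(1)\bigr)$, then use $|\rho_\ell|\le\tfrac12$ together with the numerical bound $\sqrt{2\alpha_0/(1-\alpha_0)}<2.61=1.62^2$. Your packaging via $g(x)=-\ln\bigl((2x)^x(1-x)^{1-x}\bigr)$ and a first-order Taylor expansion is a bit cleaner than the paper's term-by-term decomposition $A_\ell=\tfrac{2^{\gamma_\ell}}{\ell^{\beta_0\ln 2}}B_\ell C_\ell D_\ell$, but the content is identical.
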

\begin{proof}
  This is a consequence of Prop. \ref{prop:BestParam}
\end{proof}

\begin{question}
  Computations suggests that the sequence $(2\e_\ell)_{\ell\in\N^*}$ is dense
  in $\left[-1,1\right]$.
  If true, or at least if there is a subsequence $(\e_{\ell_s})_{s\in\N}$
  converging to 0, then the subfamily of $(\ell_s,r_{\ell_s})^\textrm{th}$
  $(2,\ell)$--torus link codes would have an
  asymptotical parameter $\llbracket n;1;\sqrt{n}\rrbracket$ similar to
  Kitaev code one.
\end{question}

\subsection{Sparseness}
\label{sec:Sparseness-torus}

\begin{prop}
  If $(r_\ell)_{\ell\in\N}$ is any sequence satisfying $\alpha \ell\leq
  r_\ell\leq\beta \ell$ for every $\ell\in\N$ and some given
  $\alpha,\beta\in(0,1)$, then the weight of each row in the
  $(\ell,r_\ell)^\textrm{th}$ $(2,\ell)$--torus link code is
  $O\big(\ln(n_{\ell,r_\ell})\big)$ as $\ell$ increases.
\end{prop}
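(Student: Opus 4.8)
The plan is to reproduce the bookkeeping estimate already used for the unknot and unlink codes. Recall from the proof of Proposition~\ref{prop:ChainCodes} that the two parity--check matrices of the $(\ell,r_\ell)^\textrm{th}$ $(2,\ell)$--torus link code are, up to transposition, the matrices in the basis $\BB$ of enhanced resolutions of the differential $\p_{\Dtl{\ell}}$ restricted to $C^{r_\ell-1}(\Dtl{\ell})$ and to $C^{r_\ell}(\Dtl{\ell})$. Hence the weight of any row of $\mathbf{H}_X$ or of $\mathbf{H}_Z$ equals the number of enhanced resolutions linked to a single fixed one through the differential.

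First I would bound that number. Since $\Dtl{\ell}$ has exactly $\ell$ crossings and $\p_{\Dtl{\ell}}(g)=\sum_{c\in\phi^{-1}(0)}\p_c(g)$ is a sum over crossings, while each partial differential $\p_c$ relates a fixed enhanced resolution to at most two others (one term for a merge of circles, two for a split, as read off Figures~\ref{fig:Rules2} and~\ref{fig:Rules3}), it is clear --- exactly as in the two preceding sparseness proofs --- that each row has between a linear (in $\ell$) number and $2(\ell+1)$ non-trivial entries; in particular every row weight is $O(\ell)$.

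It then remains to check that the length grows at least exponentially in $\ell$. By Proposition~\ref{prop:TorusLength}, $n_{\ell,r_\ell}=2^{r_\ell-1}{\ell\choose r_\ell}\ge 2^{r_\ell-1}\ge 2^{\alpha\ell-1}$, whence $\ln(n_{\ell,r_\ell})\ge(\alpha\ell-1)\ln 2$ and therefore $\ell\le\frac{1}{\alpha\ln 2}\ln(n_{\ell,r_\ell})+1=O\big(\ln(n_{\ell,r_\ell})\big)$. Combining the two estimates, every row weight is $O(\ell)=O\big(\ln(n_{\ell,r_\ell})\big)$, as claimed. Note that only the lower bound $\alpha\ell\le r_\ell$ is used; the bound $r_\ell\le\beta\ell$ would be needed only if one also wanted a matching lower estimate on the row weights.

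There is essentially no obstacle here, this being a routine counting argument of exactly the same flavour as the unknot and unlink sparseness statements. The single point that deserves a line of care is the ``at most two terms per $\p_c$'' assertion in the symmetrised $\{+,-\}$ basis --- checked directly from the merge and split rules, a merge yielding one labelled circle and a split yielding exactly two terms --- together with the observation that cancellations between distinct summands $\p_c(g)$ can only decrease the count and hence leave the upper bound valid.
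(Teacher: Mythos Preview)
Your argument is correct and follows the same two-step scheme as the paper: bound every row weight by $O(\ell)$ and then use $n_{\ell,r_\ell}=2^{r_\ell-1}{\ell\choose r_\ell}\ge 2^{\alpha\ell-1}$ to convert this into $O\big(\ln(n_{\ell,r_\ell})\big)$. The only difference is cosmetic: the paper exploits the fact that for $\Dtl{\ell}$ every $\p_c$ is a split to get \emph{exact} row weights $2(\ell-r_\ell)$ and $r_\ell$ (whence it also uses $\beta$ for a matching lower bound), whereas you invoke the generic Khovanov estimate---which is perfectly sufficient for the $O$-statement and, as you observe, makes $\beta$ unnecessary.
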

\begin{proof}
By construction, the rows of one matrix have exactly $2(\ell-r_\ell)$ non trivial
entries and the rows of the other matrix exactly $r_\ell$.
So the weight of each row is bounded below by
$\min\big(\alpha,2(1-\beta)\big)\ell$ and above by $\max\big(\beta,2(1-\alpha)\big)\ell$.
  But according Prop. \ref{prop:TorusLength}, the length is $2^{r_\ell-1}{\ell\choose r_\ell}\geq 2^{r_\ell-1}\geq \frac{(2^\alpha)^\ell}{2}$.
\end{proof}

\begin{remarque}
  The subfamily dicussed in the previous section
  satisfy such bounds for $r_\ell$.
\end{remarque}



\begin{appendix}
  \section{Technical proofs}
  \label{sec:open-question}

  For the sake of clarity, we gather in this appendix some analytical
  proofs which would have weight down the core of the text.

  \begin{prop}\label{prop:SumCarre}
    For any $x\in\R^*_+$, $\sum_{r=0}^\ell \left[{\ell\choose
        r}x^r\right]^2\sim\frac{(1+x)^{2 \ell+1}}{2\sqrt{x\pi \ell}}$ as $\ell$
    tends to infinity.
  \end{prop}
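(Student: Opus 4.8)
The plan is to evaluate $S_\ell := \sum_{r=0}^\ell \bigl[\binom{\ell}{r}x^r\bigr]^2$ by Laplace's method (the discrete saddle-point method). First I would locate the dominant term: with $a_r := \binom{\ell}{r}^2 x^{2r}$ one has $a_{r+1}/a_r = \bigl(\tfrac{\ell-r}{r+1}\bigr)^2 x^2$, which decreases through the value $1$ at $r \approx r^\star := \tfrac{\ell x}{1+x}$, so essentially all the mass of $S_\ell$ sits in a window of width $O(\sqrt\ell)$ about $r^\star$. Writing $r = \ell p$ and using Stirling's formula one gets, uniformly for $p$ in a fixed compact neighbourhood of $p^\star := \tfrac{x}{1+x}$ inside $(0,1)$, the estimate $a_r = \tfrac{1 + o(1)}{2\pi \ell\, p(1-p)}\, e^{2\ell f(p)}$ with $f(p) := -p\ln p - (1-p)\ln(1-p) + p\ln x$.

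Next I would Taylor-expand $f$ at its maximum. Since $f'(p) = \ln\tfrac{(1-p)x}{p}$ one has $f'(p^\star) = 0$, and a direct computation gives $f(p^\star) = \ln(1+x)$ and $f''(p) = -\tfrac{1}{p(1-p)}$, so $f''(p^\star) = -\tfrac{(1+x)^2}{x}$. Hence, putting $r = r^\star + t$, one has $a_r = \tfrac{(1+x)^{2\ell}}{2\pi \ell\, p^\star(1-p^\star)}\exp\!\bigl(-\tfrac{(1+x)^2}{x\ell}t^2 + o(1)\bigr)$ on the relevant window. Summing over $t$ and comparing with the Gaussian integral $\int_{\mathbb{R}} e^{-(1+x)^2 t^2/(x\ell)}\,dt = \tfrac{\sqrt{\pi x \ell}}{1+x}$, and using $p^\star(1-p^\star) = \tfrac{x}{(1+x)^2}$, everything telescopes to $S_\ell \sim \tfrac{(1+x)^{2\ell}(1+x)^2}{2\pi\ell x}\cdot\tfrac{\sqrt{\pi x\ell}}{1+x} = \tfrac{(1+x)^{2\ell+1}}{2\sqrt{\pi x\ell}}$, which is exactly the claimed asymptotic.

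The hard part is making the two approximations rigorous: a uniform error term in Stirling over the $\Theta(\sqrt\ell)$ terms near the peak, and a bound showing the remaining terms contribute $o(S_\ell)$ (this part is easy, since $a_r$ decays geometrically away from $r^\star$). A slicker way to dispatch all of this is the probabilistic reformulation (presumably the source of the promised simplification): with $p = \tfrac{x}{1+x}$ one has $\binom{\ell}{r}x^r = (1+x)^\ell\,\mathbb{P}(R = r)$ for $R \sim \mathrm{Bin}(\ell,p)$, whence $S_\ell = (1+x)^{2\ell}\sum_r \mathbb{P}(R=r)^2 = (1+x)^{2\ell}\,\mathbb{P}(R = R')$ for $R,R'$ i.i.d. $\mathrm{Bin}(\ell,p)$; a local central limit theorem applied to the integer variable $R - R'$ (mean $0$, variance $2\ell p(1-p)$, lattice span $1$) yields $\mathbb{P}(R - R' = 0) \sim \tfrac{1}{2\sqrt{\pi\ell p(1-p)}} = \tfrac{1+x}{2\sqrt{\pi x\ell}}$, and multiplying by $(1+x)^{2\ell}$ gives the result. (The special value $x=1$ needs no analysis at all: $S_\ell = \binom{2\ell}{\ell} \sim \tfrac{4^\ell}{\sqrt{\pi\ell}}$, in agreement with the formula.)
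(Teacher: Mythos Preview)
Your argument is correct, but the paper takes a different and somewhat cleaner route. Rather than applying Laplace's method directly to the discrete sum, the paper first rewrites $S_\ell$ as an integral via Parseval's identity: setting $f_\ell(t)=(1+xe^{2it})^\ell=\sum_r\binom{\ell}{r}x^re^{2irt}$, one gets
\[
S_\ell=\frac{1}{\pi}\int_{-\pi/2}^{\pi/2}|1+xe^{2it}|^{2\ell}\,dt
      =\frac{1}{\pi}\int_{-\pi/2}^{\pi/2}\big(1+2x\cos(2t)+x^2\big)^\ell\,dt,
\]
and then applies the ordinary (continuous) Laplace method at the unique nondegenerate maximum $t=0$. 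This is presumably the ``simplification'' credited to Kashaev.

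The tradeoffs: the Parseval--Laplace approach avoids Stirling entirely and replaces the discrete tail/uniformity bookkeeping you flag as ``the hard part'' by a single appeal to a standard asymptotic theorem for integrals. Your direct Stirling/Gaussian-sum argument is perfectly sound once those uniformity details are filled in, and your probabilistic rewriting $S_\ell=(1+x)^{2\ell}\,\P(R=R')$ together with a local CLT is an elegant shortcut that gives the answer just as quickly; in fact the characteristic-function proof of the local CLT for $R-R'$ is essentially the same Parseval computation in disguise. So the two proofs are morally close, but the paper's version packages the analytic step more economically.
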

  \begin{proof}
    For every $\ell\in\N$, we define $\func{f_\ell}{\R}{\C}$ by
 $f_\ell (t)=(1+xe^{2it})^\ell=\sum_{r=0}^\ell{\ell\choose r}x^re^{2irt}$.
Then, since $f_\ell$ is clearly $\pi$--periodic and $L^2$, Parseval's identity gives

\[
\sum_{r=0}^\ell \left[{\ell\choose
    r}x^r\right]^2\ =\ \frac{1}{\pi}\int_{-\frac{\pi}{2}}^{\frac{\pi}{2}}|1+xe^{2it}|^{2\ell} dt
 =  \frac{1}{\pi}\int_{-\frac{\pi}{2}}^{\frac{\pi}{2}}
(1+2x\cos(2t)+x^2)^\ell dt = \int_I\frac{1}{\pi}e^{\ell f_x(t)}
\]
\noi with $I=\big[-\frac{\pi}{2},\frac{\pi}{2}\big]$ and $f_x(t)=\ln\big(1+2x\cos(2t)+x^2\big).$
Now, $f_x$ is smooth with $f'_x(t)=\frac{-4x\sin(2t)}{1+2x\cos(2t)+x^2}$ and
$f''_x(t)=\sin(2t)\times\textrm{something}-\frac{8x\cos(2t)}{1+2x\cos(2t)+x^2}$,
so 0 is the unique maximum of $f_x$ on $I$ and it is non
degenerate. It follows from the method of steepest descent that $\sum_{r=0}^\ell \left[{\ell\choose
    r}x^r\right]^2\sim
\frac{1}{\pi}\sqrt{\frac{2\pi}{\ell}}e^{\ell\ln(1+2x+x^2)}\frac{1}{\sqrt{\frac{8x}{1+2x+x^2}}}=\frac{(1+x)^{2 \ell+1}}{2\sqrt{x\pi
    \ell}}$.
  \end{proof}

  \begin{prop}\label{prop:Tn}
    $2^\ell\disp{\sum_{r=0}^\ell}\frac{{\ell\choose r}{2r\choose
    r}}{2^r}\sim \sqrt{\frac{3}{2\pi \ell}}6^\ell$ as $\ell$ tends to infinity.
  \end{prop}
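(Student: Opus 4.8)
The plan is to reduce the statement to Prop.~\ref{prop:SumCarre} by recognizing the left-hand side as an integral of $(2+\cos 2t)^\ell$ over a half-period, and then linearizing $2+\cos 2t$ as a scalar multiple of $\bigl|1+xe^{2it}\bigr|^2$ for a suitable $x$.

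First I would insert the Wallis-type formula $\binom{2r}{r}=\frac{2}{\pi}\int_0^{\pi/2}(2\cos t)^{2r}\,dt$ (immediate from $\int_0^{\pi/2}\cos^{2r}t\,dt=\frac{\pi}{2}\,4^{-r}\binom{2r}{r}$), interchange the finite sum with the integral, and apply the binomial theorem:
\[
2^\ell\sum_{r=0}^\ell\frac{\binom{\ell}{r}\binom{2r}{r}}{2^r}
=\frac{2^{\ell+1}}{\pi}\int_0^{\pi/2}\bigl(1+2\cos^2 t\bigr)^\ell\,dt
=\frac{2^{\ell+1}}{\pi}\int_0^{\pi/2}\bigl(2+\cos 2t\bigr)^\ell\,dt .
\]
Next I would look for $x>0$ making $\bigl|1+xe^{2it}\bigr|^2=1+2x\cos 2t+x^2$ proportional to $2+\cos 2t$; this forces $1+x^2=2(2x)$, i.e. $x^2-4x+1=0$, so $x=2-\sqrt3$ and $\bigl|1+(2-\sqrt3)e^{2it}\bigr|^2=(4-2\sqrt3)(2+\cos 2t)$. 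Substituting this and recalling from the proof of Prop.~\ref{prop:SumCarre} that Parseval's identity applied to $f_\ell(t)=(1+xe^{2it})^\ell$, together with the $\pi$-periodicity and evenness of $|f_\ell|^2$, gives $\frac{2}{\pi}\int_0^{\pi/2}\bigl|1+xe^{2it}\bigr|^{2\ell}\,dt=\sum_r\bigl[\binom{\ell}{r}x^r\bigr]^2$, I obtain (using $4-2\sqrt3=2(2-\sqrt3)$):
\[
2^\ell\sum_{r=0}^\ell\frac{\binom{\ell}{r}\binom{2r}{r}}{2^r}
=\frac{1}{(2-\sqrt3)^\ell}\sum_{r=0}^\ell\Bigl[\binom{\ell}{r}(2-\sqrt3)^r\Bigr]^2 .
\]

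Then I would apply Prop.~\ref{prop:SumCarre} with $x=2-\sqrt3>0$, which yields $\sum_r\bigl[\binom{\ell}{r}(2-\sqrt3)^r\bigr]^2\sim\dfrac{(3-\sqrt3)^{2\ell+1}}{2\sqrt{(2-\sqrt3)\pi\ell}}$. Since $(3-\sqrt3)^2=12-6\sqrt3=6(2-\sqrt3)$, one has $(3-\sqrt3)^{2\ell+1}=(3-\sqrt3)\,6^\ell(2-\sqrt3)^\ell$ and $\dfrac{(3-\sqrt3)^2}{4(2-\sqrt3)}=\dfrac32$, hence $\dfrac{3-\sqrt3}{2\sqrt{2-\sqrt3}}=\sqrt{3/2}$. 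Everything collapses to
\[
2^\ell\sum_{r=0}^\ell\frac{\binom{\ell}{r}\binom{2r}{r}}{2^r}\sim\sqrt{\tfrac32}\cdot\frac{6^\ell}{\sqrt{\pi\ell}}=\sqrt{\frac{3}{2\pi\ell}}\,6^\ell ,
\]
which is the claim.

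The computation is essentially bookkeeping once the right $x$ is found, so the only genuinely nonroutine step is spotting the substitution $x=2-\sqrt3$ (equivalently, that $2+\cos 2t$ is a scalar multiple of a squared trigonometric modulus). A self-contained alternative, bypassing Prop.~\ref{prop:SumCarre}, would be to run the method of steepest descent directly on $\int_0^{\pi/2}(2+\cos 2t)^\ell\,dt$: the maximum of $g(t)=\ln(2+\cos 2t)$ sits at the \emph{endpoint} $t=0$ with $g(0)=\ln 3$, $g'(0)=0$, $g''(0)=-\tfrac43$, so this endpoint contributes a half-Gaussian $\tfrac12\,3^\ell\sqrt{2\pi/(\tfrac43\ell)}$, and the same asymptotics follow; in that route the only point needing care is the endpoint (rather than interior) character of the maximum.
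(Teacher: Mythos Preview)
Your proof is correct. Both you and the paper reduce the problem to Prop.~\ref{prop:SumCarre} via the same exact identity $T_\ell=(2-\sqrt3)^{-\ell}\sum_r\bigl[\binom{\ell}{r}(2-\sqrt3)^r\bigr]^2$ (the paper writes it with the reciprocal $x_0=2+\sqrt3$, which amounts to the same thing by $\binom{\ell}{r}=\binom{\ell}{\ell-r}$), but the route to that identity is genuinely different. The paper packages $(T_\ell)$ into a generating function, closes it as $\frac{1}{\sqrt{1-8x+12x^2}}$, recognizes this as a rescaled Legendre generating function $\frac{1}{\sqrt{1-2xt+t^2}}$, and then invokes the explicit hypergeometric-type formula for $P_\ell$ to extract the square sum. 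Your argument is more self-contained: the Wallis integral for $\binom{2r}{r}$ converts $T_\ell$ directly into $\frac{2^{\ell+1}}{\pi}\int_0^{\pi/2}(2+\cos 2t)^\ell\,dt$, and the observation $2+\cos 2t=\frac{1}{2x}\bigl|1+xe^{2it}\bigr|^2$ for $x=2-\sqrt3$ plugs straight into the Parseval identity already used in the proof of Prop.~\ref{prop:SumCarre}. This bypasses the generating-function manipulation and the Legendre polynomial formulas entirely, at the cost of having to spot the value $2-\sqrt3$. Your closing remark that one could run Laplace's method directly on $\int_0^{\pi/2}(2+\cos 2t)^\ell\,dt$ is also correct and would make the argument independent of Prop.~\ref{prop:SumCarre} altogether.
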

  \begin{proof}
    We consider the power series $f(x)=\sum_{\ell\geq0}T_\ell x^\ell$
    with $T_\ell:=2^\ell\disp{\sum_{r=0}^\ell}\frac{{\ell\choose r}{2r\choose
    r}}{2^r}$.  This
    is well defined in a neighrborhood of 0 since $T_\ell$ is clearly bounded above by
    $\big((t^{-1}+4+t)^\ell\big)_{t=1}=6^\ell$.  Then, for $x$ sufficiently
    small, we have
    \[
    f(x)=\sum_{\ell\geq0}\sum_{r=0}^\ell 2^\ell\frac{{\ell\choose r}{2r\choose
        r}}{2^r}x^\ell=\sum_{r\geq0}\sum_{\ell\geq r}\frac{{2r\choose
        r}}{2^r}{\ell\choose r}(2x)^\ell.
    \]
    It is standard to check that $\frac{1}{\sqrt{1-4z}}=\sum_{r\geq
      0}{2r\choose r}z^r$ and, for any $r\in\N$,
    $\frac{z^r}{(1-z)^{r+1}}=\sum_{\ell\geq r}{\ell\choose r}z^\ell$.  So, we
    can deduce
    \[
    f(x) = \sum_{r\geq0}\frac{{2r\choose
        r}}{2^r}\frac{(2x)^{r}}{(1-2x)^{r+1}} =
    \frac{1}{1-2x}\sum_{r\geq0}{2r\choose
      r}\left(\frac{x}{1-2x}\right)^r =
    \frac{1}{(1-2x)\sqrt{1-4\frac{2x}{1-2x}}}=
    \frac{1}{\sqrt{1-8x+12x^2}}
    \]
    But since it is known (see {\it e.g.} \cite{Abramowitz}, formula
    22.9.1 for $\alpha=\beta=0$, p. 783) that
    $\frac{1}{\sqrt{1-2xt+t^2}}=\sum_{\ell\geq0}P_\ell (x)t^\ell$ where $P_\ell$ is
    the $\ell^\textrm{th}$ Legendre polynomial. It follows that
    $T_\ell=\left(2\sqrt{3}\right)^\ell P_\ell\left(\frac{2}{\sqrt{3}}\right)$.
    On the other hand, it is also known (see {\it e.g.}
    \cite{Abramowitz}, formula 22.3.1 for $\alpha=\beta=0$, p. 775)
    that $P_\ell (x)=\frac{1}{2^\ell}\disp{\sum_{r=0}^\ell} {\ell\choose
      r}^2(x-1)^{n-r}(x+1)^r$, so
    \[
    x_0^\ell T_\ell=\disp{\sum_{r=0}^\ell}\left[{\ell\choose r}x_0^r\right]^2
    \]
    \noi with
    $x_0=\frac{1}{2-\sqrt{3}}$.
    Using Prop. \ref{prop:SumCarre} for $x=x_0$, we obtain the desired
    $T_\ell\sim \sqrt{\frac{3}{2\pi \ell}}6^\ell$.
  \end{proof}

\begin{prop}\label{prop:BestParam}
  If, for every $\ell\in\N^*$, $r_\ell:=\arr\big(\alpha_0
  \ell-\beta_0\ln(\ell)+\gamma_0\big)$ with $\alpha_0$ the unique zero
in $(0,1)$ of the function $\big(x\mapsto (2x)^x(1-x)^{1-x}-1\big)$,
$\beta_0:=\frac{1}{2\ln\left(\frac{2\alpha_0}{1-\alpha_0}\right)}$ and
$\gamma_0:=\beta_0\ln\left(\frac{2}{\pi\alpha_0(1-\alpha_0)}\right)$,
then, for every sufficiantly large integer $\ell$, $\min\left\{{\ell\choose
      r_\ell},2^{r_\ell-1}\right\}> \frac{\sqrt{2^{r_\ell-1}{\ell\choose
      r_\ell}}}{1,62}$.
\end{prop}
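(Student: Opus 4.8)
The plan is to rewrite the claim as a two--sided bound on the single ratio
\[
R_\ell:=\frac{\binom{\ell}{r_\ell}}{2^{r_\ell-1}},
\]
to extract the asymptotics of $R_\ell$ from Stirling's formula, and then to close a narrow numerical gap. First I would use the elementary equivalence: for positive reals $a,b$ one has $\min(a,b)>\sqrt{ab}\big/c$ if and only if $\max(a,b)\big/\min(a,b)<c^2$. Applied with $c=1{,}62$, $a=\binom{\ell}{r_\ell}$ and $b=2^{r_\ell-1}$ (so that $\sqrt{ab}$ is exactly $\sqrt{2^{r_\ell-1}\binom{\ell}{r_\ell}}$), the inequality to be proved becomes exactly $\big|\ln R_\ell\big|<2\ln(1{,}62)$; so it is enough to compute the eventual size of $\ln R_\ell$ and check it is strictly below $2\ln(1{,}62)$.

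Then I would carry out the asymptotics. Write $r_\ell=\alpha_0\ell-\beta_0\ln(\ell)+\gamma_0-\e_\ell$ with $\e_\ell\in[-\tfrac12,\tfrac12]$ the rounding remainder, and set $x_\ell:=r_\ell/\ell\to\alpha_0\in(0,1)$. Stirling's formula gives
\[
\binom{\ell}{r_\ell}=\frac{1+o(1)}{\sqrt{2\pi\ell\,x_\ell(1-x_\ell)}}\;e^{\ell H(x_\ell)},\qquad H(x):=-x\ln(x)-(1-x)\ln(1-x).
\]
Here $\ell(x_\ell-\alpha_0)=-\beta_0\ln(\ell)+\gamma_0-\e_\ell$ while $\ell(x_\ell-\alpha_0)^2=O\!\big((\ln\ell)^2/\ell\big)=o(1)$, so a second order Taylor expansion of $H$ at $\alpha_0$, using $H'(\alpha_0)=\ln\tfrac{1-\alpha_0}{\alpha_0}$, yields $\ell H(x_\ell)=\ell H(\alpha_0)+\big(\ln\tfrac{1-\alpha_0}{\alpha_0}\big)\big(-\beta_0\ln(\ell)+\gamma_0-\e_\ell\big)+o(1)$. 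Now the three defining constants enter one after another: the defining equation of $\alpha_0$ reads $\alpha_0\ln(2\alpha_0)+(1-\alpha_0)\ln(1-\alpha_0)=0$, i.e. $H(\alpha_0)=\alpha_0\ln(2)$, so $e^{\ell H(\alpha_0)}=2^{\alpha_0\ell}$ matches the exponential order of $2^{r_\ell-1}$; the choice of $\beta_0$ is precisely the one making the power of $\ell$ in $\binom{\ell}{r_\ell}$, namely $\ell^{-1/2-\beta_0\ln\frac{1-\alpha_0}{\alpha_0}}=\ell^{-\beta_0\ln 2}$, equal to that of $2^{r_\ell-1}=\tfrac12\,2^{\alpha_0\ell}\,\ell^{-\beta_0\ln 2}\,2^{\gamma_0-\e_\ell}$. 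Dividing out, I would get
\[
R_\ell=\frac{2}{\sqrt{2\pi\alpha_0(1-\alpha_0)}}\Big(\tfrac{1-\alpha_0}{2\alpha_0}\Big)^{\gamma_0-\e_\ell}\big(1+o(1)\big),
\]
and since $\ln\tfrac{1-\alpha_0}{2\alpha_0}=-\tfrac{1}{2\beta_0}$, the choice of $\gamma_0$ is exactly the one forcing the remaining constant to equal $1$; a one--line computation then collapses the prefactor and leaves $\ln R_\ell=\tfrac{\e_\ell}{2\beta_0}+o(1)$, that is $R_\ell=e^{\e_\ell/(2\beta_0)}(1+o(1))$.

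Finally, as $|\e_\ell|\le\tfrac12$ this gives $\big|\ln R_\ell\big|\le\tfrac{1}{4\beta_0}+o(1)=\tfrac12\ln\tfrac{2\alpha_0}{1-\alpha_0}+o(1)$, so everything comes down to the numerical inequality $\tfrac12\ln\tfrac{2\alpha_0}{1-\alpha_0}<2\ln(1{,}62)$, equivalently $\tfrac{2\alpha_0}{1-\alpha_0}<1{,}62^4\approx 6{,}887$. To settle it I would estimate $\alpha_0$: the function $f(x):=x\ln(2x)+(1-x)\ln(1-x)$, whose exponential is $(2x)^x(1-x)^{1-x}$, is increasing past $x=\tfrac13$, is negative at $\tfrac12$, and tends to $\ln 2>0$ at $1$, so its unique zero $\alpha_0$ in $(0,1)$ lies in $(\tfrac12,1)$ and $f>0$ on $(\alpha_0,1)$; since $f(0{,}774)>0$ one gets $\alpha_0<0{,}774$, hence $\tfrac{2\alpha_0}{1-\alpha_0}<\tfrac{1{,}548}{0{,}226}<6{,}85<1{,}62^4$. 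The inequality being strict, it absorbs the $o(1)$ for $\ell$ large, which finishes the proof. I expect the crux to be this last numerical check: the margin is genuinely thin (with $1{,}6$ in place of $1{,}62$ it would be false), so $\alpha_0$ has to be located with enough accuracy; one must also be careful in the asymptotic part that the correction term in $r_\ell$ is of logarithmic, not polynomial, size, since that is exactly what keeps the quadratic Taylor remainder $o(1)$ and gives $\beta_0$ and $\gamma_0$ their role.
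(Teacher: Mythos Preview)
Your proposal is correct and follows essentially the same route as the paper: both compute the ratio $2^{r_\ell-1}\big/\binom{\ell}{r_\ell}$ (or its reciprocal) via Stirling, arrive at $\left(\frac{2\alpha_0}{1-\alpha_0}\right)^{\pm\e_\ell}(1+o(1))$ with $|\e_\ell|\le\tfrac12$, and then close with the numerical check $\sqrt{\frac{2\alpha_0}{1-\alpha_0}}<1{,}62^2$. Your presentation is somewhat more streamlined---you Taylor-expand the entropy $H(x)$ directly and read off the roles of $\alpha_0,\beta_0,\gamma_0$ from $H(\alpha_0)=\alpha_0\ln 2$, $H'(\alpha_0)=\ln\tfrac{1-\alpha_0}{\alpha_0}$ and the cancellation of the constant---whereas the paper reaches the same endpoint by splitting the Stirling expression into factors $B_\ell,C_\ell,D_\ell$ and estimating each separately; but the substance and the final numerical margin are identical.
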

\begin{proof}
  First we note that
$\frac{1}{2}+\beta_0\ln\left(\frac{1-\alpha_0}{2\alpha_0}\right)=0$
and that
$\left(\frac{2\alpha_0}{1-\alpha_0}\right)^{\gamma_0}=\sqrt{\frac{2}{\pi\alpha_0(1-\alpha_0)}}$.

Now, we write $r_\ell=\alpha_0 \ell-\beta_0\ln(\ell)+\gamma_0+\e_\ell$ with
  $|\e_\ell|\leq\frac{1}{2}$ and $\gamma_\ell:=\gamma_0+\e_\ell$.
  Stirling's approximation applied to $\ell!$, $r_\ell!$ and $(n-r_\ell)!$ gives
  \begin{eqnarray*}
    \frac{2^{r_\ell-1}}{{\ell\choose r_\ell}} & = &
    \sqrt{\frac{\ell\pi}{2}}\sqrt{\frac{r_\ell}{\ell}\left(1-\frac{r_\ell}{\ell}\right)}\left(\frac{2r_\ell}{\ell}\right)^{r_\ell}\left(1-\frac{r_\ell}{\ell}\right)^{\ell-r_\ell}\big(1+o(1)\big)\\
    & = & \sqrt{\frac{\ell\pi\alpha_0(1-\alpha_0)}{2}}\underbrace{\left(\frac{2r_\ell}{\ell}\right)^{r_\ell}\left(1-\frac{r_\ell}{\ell}\right)^{\ell-r_\ell}}_{A_\ell}\big(1+o(1)\big).
  \end{eqnarray*}
  Then
  \begin{eqnarray*}
    A_\ell & = &
    2^{\alpha_0-\beta_0\ln(\ell)+\gamma_\ell}\left(\alpha_0-\beta_0\frac{\ln(\ell)}{\ell}+\frac{\gamma_\ell}{\ell}\right)^{\alpha_0
    \ell-\beta_0\ln(\ell)+\gamma_\ell}\left(1-\alpha_0+\beta_0\frac{\ln(\ell)}{\ell}-\frac{\gamma_\ell}{\ell}\right)^{(1-\alpha_0) \ell+\beta_0\ln(\ell)-\gamma_\ell}\\
  & = &
  \left((2\alpha_0)^{\alpha_0}(1-\alpha_0)^{1-\alpha_0}\right)^\ell\frac{2^{\gamma_\ell}}{2^{\beta_0\ln(\ell)}}B_\ell C_\ell D_\ell\
  = \ \frac{2^{\gamma_\ell}}{\ell^{\beta_0\ln2}}B_\ell C_\ell D_\ell
  \end{eqnarray*}
\noi with
\[
\hspace{-1cm}
  B_\ell\ := \ \left(1-\beta_0\frac{\ln(\ell)}{\alpha_0
      \ell}+\frac{\gamma_\ell}{\alpha_0 \ell}\right)^{\alpha_0 \ell}\hspace{.5cm}
  C_\ell \ := \ \left(1+\beta_0\frac{\ln(\ell)}{(1-\alpha_0)
      \ell}-\frac{\gamma_\ell}{(1-\alpha_0) \ell}\right)^{(1-\alpha_0) \ell}\hspace{.5cm}
  D_\ell \ := \ \left(\frac{1-\alpha_0+\beta_0\frac{\ln(\ell)}{\ell}-\frac{\gamma_\ell}{\ell}}{\alpha_0-\beta_0\frac{\ln(\ell)}{\ell}+\frac{\gamma_\ell}{\ell}}\right)^{\beta_0\ln(\ell)-\gamma_\ell}.
\]
But
\[
  B_\ell =
  e^{\alpha_0 \ell\ln\left(1-\beta_0\frac{\ln(\ell)}{\alpha_0 \ell}+\frac{\gamma_\ell}{\alpha_0 \ell}\right)} = 
  e^{\alpha_0 \ell\left(-\beta_0\frac{\ln(\ell)}{\alpha_0 \ell}+\frac{\gamma_\ell}{\alpha_0 \ell}+o\left(\frac{1}{\ell}\right)\right)} =  e^{-\beta_0\ln(\ell)+\gamma_\ell+o(1)} = \ell^{-\beta_0}e^{\gamma_\ell}\big(1+o(1)\big),
\]
\noi and similarly
$C_\ell=\ell^{\beta_0}e^{-\gamma_\ell}\big(1+o(1)\big)$.
Concerning $D_\ell$, we have
\begin{eqnarray*}
  D_\ell & =&
  e^{\left(\bigstrut\beta_0\ln(\ell)-\gamma_\ell\right)\left(\ln\left(\frac{1-\alpha_0}{\alpha_0}\right)+{\large\strut}\ln\left(1+\beta_0\frac{\ln(\ell)}{(1-\alpha_0) \ell}-\frac{\gamma_\ell}{(1-\alpha_0) \ell}\right)+{\large\strut}\ln\left(1-\beta_0\frac{\ln(\ell)}{\alpha_0 \ell}+\frac{\gamma_\ell}{\alpha_0 \ell}\right)\right)}\ = \
e^{\left(\bigstrut\beta_0\ln(\ell)-\gamma_\ell\right)\left(\ln\left(\frac{1-\alpha_0}{\alpha_0}\right)+{\large\strut}o\left(\frac{\ln(\ell)}{\ell}\right)\right)}\\
 & = &
e^{\beta_0\ln\left(\frac{1-\alpha_0}{\alpha_0}\right)\ln(\ell)-\gamma_\ell\ln\left(\frac{1-\alpha_0}{\alpha_0}\right)+o(1)}\ = \ \ell^{\beta_0\ln\left(\frac{1-\alpha_0}{\alpha_0}\right)}\left(\frac {\alpha_0}{1-\alpha_0}\right)^{\gamma_\ell}\big(1+o(1)\big).
\end{eqnarray*}
Finally, we get
$A_\ell=\frac{2^{\gamma_\ell}}{\ell^{\beta_0\ln(2)}}\ell^{\beta_0\ln\left(\frac{1-\alpha_0}{\alpha_0}\right)}\left(\frac {\alpha_0}{1-\alpha_0}\right)^{\gamma_\ell}\big(1+o(1)\big)=\left(\frac{2\alpha_0}{1-\alpha_0}\right)^{\gamma_\ell}\ell^{\beta_0\ln\left(\frac{1-\alpha_0}{2\alpha_0}\right)}\big(1+o(1)\big)$
and hence
\begin{eqnarray*}
\frac{2^{r_\ell-1}}{{\ell\choose r_\ell}} & = &
\sqrt{\frac{\pi\alpha_0(1-\alpha_0)}{2}}\left(\frac{2\alpha_0}{1-\alpha_0}\right)^{\gamma_\ell}\ell^{\frac{1}{2}+\beta_0\ln\left(\frac{1-\alpha_0}{2\alpha_0}\right)}\big(1+o(1)\big)\\
& = &
\left(\frac{2\alpha_0}{1-\alpha_0}\right)^{\e_\ell}\sqrt{\frac{\pi\alpha_0(1-\alpha_0)}{2}}\left(\frac{2\alpha_0}{1-\alpha_0}\right)^{\gamma_0}\big(1+o(1)\big)\
=\ \left(\frac{2\alpha_0}{1-\alpha_0}\right)^{\e_\ell}+o(1).
\end{eqnarray*}
 Now, computations show that
$\delta_0:=2.61>\sqrt{\frac{2\alpha_0}{1-\alpha_0}}\geq\left(\frac{2\alpha_0}{1-\alpha_0}\right)^{\e_\ell}\geq
\sqrt{\frac{1-\alpha_0}{2\alpha_0}}>\delta_0^{-1}$.
So, for $\ell$ sufficiently large, we have $\delta_0^{-1}{\ell\choose r_\ell}
<2^{r_\ell-1}<\delta_0 {\ell\choose r_\ell}$.
For symmetry reason, we may assume that $2^{r_\ell-1}\leq{\ell\choose
  r_\ell}$. Then $(2^{r_\ell-1})^2>\frac{2^{r_\ell-1}{\ell\choose r_\ell}}{\delta_0}$
and the results follows since $\sqrt{\delta_0}<1.62$.
\end{proof}
\end{appendix}


\bibliographystyle{amsalpha}
\bibliography{KhovanovCodes}
\addcontentsline{toc}{part}{Bibliography}

\end{document}